\newcommand{\fac}{\frac{1}{\sigma_i}}
\newcommand{\N}{{\mathfrak{N}}}
\newcommand{\Cp}{\mathcal{C}_\text{par}}
\newcommand{\Cq}{\mathcal{C}_\text{qoi}}
\newcommand{\Nq}{{N_\text{qoi}}}
\newcommand{\Np}{{N_\text{par}}}
\newcommand{\Ns}{{N_\text{MC}}}
\newcommand{\R}{\mathbb{R}}
\newcommand{\B}{\mathcal{B}}
\newcommand{\D}{\mathcal{D}}
\newcommand{\X}{\mathscr{X}}
\newcommand{\Vs}{\mathscr{V}_{\!{g}}}
\newcommand{\Vo}{\mathscr{V}_{\!\scriptscriptstyle{{0}}}}
\newcommand{\var}[1]{\mathrm{Var}\left\{ {#1} \right\}}
\renewcommand{\S}{\mathfrak{S}}
\newcommand{\ST}{\mathfrak{S}^{\text{tot}}}
\newcommand{\trace}{\mathrm{Tr}}
\newcommand{\E}[1]{\mathrm{E}\left\{{#1}\right\}}
\newcommand{\EE}[2]{\mathrm{E}_{{#1}}\left\{{#2}\right\}}
\newcommand{\delj}[1]{\frac{\partial f_{{#1}}(\theta)}{\partial \theta_j}}
\newcommand{\ee}{\mathcal{E}}
\newcommand{\err}[2]{\varepsilon\left( {#1}; s; {#2} \right)}
\newcommand{\ut}[1]{\tilde{{#1}}}
\newcommand{\GD}{{\Gamma_{D}}}
\newcommand{\GN}{{\Gamma_{N}}}
\newcommand{\ip}[2]{\langle{#1}, {#2}\rangle}
\renewcommand{\L}{\mathcal{L}}
\newcommand{\Q}{\mathcal{Q}}
\newcommand\restr[2]{{%
  \left.\kern-\nulldelimiterspace %
  #1 %
  \vphantom{\big|} %
  \right|_{#2} %
  }}
\newcommand{\ipg}[2]{\ip{{#1}}{{#2}}_{\GN}}
\newtheorem{assumption}{Assumption}[section]
\newcommand{\nominal}{\eta}
\newcommand{\Thetau}{{\Theta_U}}
\newcommand{\Thetaz}{{\Theta_{U^c}}}
\newcommand{\freduced}{f^{(\nominal)}}
\renewcommand\qedsymbol{\proofbox}
\newenvironment{named_proof}[1][\proofname]{\par
  \normalfont \topsep6\p@\@plus6\p@\relax
  \trivlist
  \item[\hskip\labelsep
        \itshape
    #1\@addpunct{.}]\ignorespaces
}{%
\nolinebreak\hfill\qedsymbol\endtrivlist\@endpefalse
}
\def\addressncsu{Department of Mathematics, 
North Carolina State University, Raleigh, NC, USA }
\def\addressumbc{Department of Mechanical Engineering, 
University of Maryland, Baltimore County, Baltimore, MD, USA }
\title{Derivative-based global sensitivity analysis for models with
high-dimensional inputs and functional outputs}
\author{Helen L.~Cleaves\thanks{\addressncsu (\email{hlcleave@ncsu.edu,
alexanderian@ncsu.edu,
hguy@ncsu.edu,
rsmith@ncsu.edu})} \and
Alen Alexanderian\footnotemark[1]\and
Hayley Guy\footnotemark[1] \and
Ralph C.~Smith\footnotemark[1] \and
Meilin Yu\thanks{\addressumbc (\email{mlyu@umbc.edu}})
}
\date{\today}
\begin{document}

\maketitle

\begin{abstract}
We present a framework for derivative-based global sensitivity analysis (GSA)
for models with high-dimensional input parameters and functional outputs. We
combine ideas from derivative-based GSA, random field representation via
Karhunen--Lo\`{e}ve expansions, and adjoint-based gradient computation to
provide a scalable computational framework for computing the proposed
derivative-based GSA measures.  We illustrate the strategy for a nonlinear ODE
model of cholera epidemics and for elliptic PDEs with application examples from
geosciences and biotransport.  
\end{abstract}

\begin{keywords}
Global sensitivity analysis, DGSMs, functional Sobol' indices, 
Karhunen--Lo\`{e}ve expansions.
\end{keywords}

\begin{AMS}
65C20,   %
65C50,   %
62H99,   %
65D15.   %
\end{AMS}

\section{Introduction}\label{sec:intro}
\setlength{\abovedisplayskip}{3pt}
\setlength{\belowdisplayskip}{3pt}
\allowdisplaybreaks

The field of global sensitivity analysis (GSA) provides methods for quantifying
how the uncertainty in the output of mathematical models can be apportioned to
uncertainties in the input model parameters~\cite{Saltelli00}.  Specifically,
variance-based GSA enables ranking the importance of model parameters by
computing their relative contribution to the variance of the output quantities of
interest (QoIs), as quantified by Sobol' indices~\cite{Sobol:1990, Saltelli00,
Sobol:2001}. 
Another popular GSA approach involves using derivative-based 
global sensitivity measures (DGSMs)~\cite{SobolKucherenko09,KucherenkoIooss17}, 
which have
been shown to provide efficient means of screening for unimportant input
parameters. 
In this article,
we consider mathematical models of the form
\begin{equation}\label{equ:QoI}
    y = f(s, \theta),
\end{equation}
where $s$ belongs to a compact set $\X \subset \R^d$ with $d = 1, 2$, or $3$,
and $\theta$ is an element of an uncertain parameter space $\Theta \subseteq\R^\Np$.
We present a mathematical framework for derivative-based 
GSA for \emph{functional} QoIs of the form~\cref{equ:QoI} and present a
scalable computational framework for computing the corresponding 
derivative-based GSA measures.

\textbf{Survey of literature and existing approaches}. 
A great amount of progress has been made in theory and numerical methods for
variance-based GSA over the past three decades~\cite{Sobol:1990,
Saltelli00,
Sobol:2001,
Sobol07,
Sudret:2008,
Crestaux:2009,
SobolKucherenko09,
GamboaJanonKleinEtAl14,
KucherenkoIooss17,
PrieurTarantola17,
MarrelSaintGeoursDeLozzo17,
GratietMarelliSudret17}.
The majority of works on GSA focus on scalar-valued QoIs. However,
in recent years there have been a number of efforts targeting GSA for
vectorial or functional QoIs.  Specifically, the
works~\cite{CampbellMcKayWilliams06,LamboniMonodMakowski11,
GamboaJanonKleinEtAl14,XiaoLi16,AlexanderianGremaudSmith17} discuss variance-based 
GSA for vectorial and functional outputs. Computing GSA measures for
functional QoIs, as is the case for their scalar counterparts, is computationally
challenging. The computational challenges can be reduced significantly by 
employing
surrogate models~\cite{Sudret:2008,Crestaux:2009,Alexanderian13,Sargsyan2017,
HartAlexanderianGremaud17,AlexanderianGremaudSmith17}.
However, surrogate model construction itself becomes
computationally challenging for models with high-dimensional input parameters. 

DGSMs have been shown to provide efficient means for detecting unimportant
input
parameters~\cite{Kucherenko0Rodriguez-FernandezPantelidesEtAl09,
KucherenkoIooss17,VohraAlexanderianSaftaEtAl18}.
For a scalar QoI $g(\theta)$ that has square integrable partial derivatives, 
the DGSMs, defined as $\mathrm{E}\big\{(\frac{\partial g}{\partial
\theta_j})^2\big\}$, $j = 1, \ldots, \Np$, 
are commonly used. 
(Here $\mathrm{E}$ denotes expectation with respect to
$\theta$.)  
These DGSMs
can be used to bound the total Sobol' indices, for models
with statistically independent inputs, which justifies their use in screening
for unimportant inputs. 

An alternate approach for approximating the DGSMs, for scalar-valued QoIs,
using the active subspace method~\cite{constantine2014,Constantine15} is
presented in~\cite{ConstantineDiaz17}.  Namely,~\cite{ConstantineDiaz17}
presents a method for
approximating the DGSMs using dominant eigenpairs of the
matrix $\E{ \nabla f(\theta) \nabla f(\theta)^\top}$.  
While active subspace methods
have mostly targeted scalar QoIs, recently there have been 
initial efforts in generalizing these methods to vectorial
outputs; see e.g.,~\cite{JiWangZahmEtAl18,ZahmConstantinePrieurEtAl18}.

\textbf{Our approach and contributions}.
We focus on functional QoIs of the form $f:\X \times \Theta \to \R$,
as defined in~\cref{equ:QoI}, where $\X$ and $\Theta$ are as before. 
We focus on models with independent random input parameters.
Moreover, 
in our target applications,
$f(s, \theta)$ is defined in terms of the
solution of a system of differential equations. 

We begin our developments by defining a suitable DGSM for functional QoIs, 
in~\cref{sec:functional_DGSMs}, and
prove that it provides a computable bound for the 
generalized total Sobol' indices for functional QoIs as defined 
in~\cite{GamboaJanonKleinEtAl14,AlexanderianGremaudSmith17}; see \cref{thm:bound_main}. Next, we present a framework for efficient
computation of the functional DGSMs that uses low-rank representation
of the functional QoIs via truncated Karhunen--Lo\`{e}ve (KL)
expansions~\cite{LeMaitreKnio10}. Expressions
for DGSMs, and DGSM-based bounds on functional total Sobol' indices 
for a truncated KL expansion are established in~\cref{thm:finite_rank}. 
The DGSMs of the approximate
models, given by truncated KL expansions, are then computed using adjoint-based 
gradient computation. This approach is elaborated for models governed
by linear elliptic PDEs in \cref{sec:DGSM_PDE}. 

Additionally, we present a comprehensive set of numerical results that
illustrate various aspects of the proposed approach and demonstrate its
effectiveness.  We consider three application problems: (i) a nonlinear system
of ODEs modeling the spread of cholera~\cite{HartleyEtAl2005}, where we perform GSA
for the infected population as a function of time (\cref{sec:cholera}); (ii)
a problem motivated by porous medium flow applications, 
with permeability data
adapted from~\cite{SPE},
where we assess
parametric sensitivities of the pressure field on a domain boundary
(\cref{sec:poi_spe}); and (iii) an application problem 
involving biotransport
in tumors~\cite{AlexanderianZhuSalloumEtAl17}, where we consider the pressure distribution in certain subdomains
of a tumor model (\cref{sec:biotransport}).

\textbf{Article overview}.
This article is structured as follows.  In~\cref{sec:prelim}, we set up
the notation used throughout the article, and collect the assumptions on the
functional QoIs under study. We also provide a brief review of variance-based
GSA for functional QoIs, following the developments
in~\cite{GamboaJanonKleinEtAl14,AlexanderianGremaudSmith17}, in~\cref{sec:prelim}.  
In \cref{sec:functional_DGSMs} we present a mathematical framework
for derivative-based GSA of functional QoIs.  We elaborate our proposed
adjoint-based framework for models governed by linear elliptic PDEs in \cref{sec:DGSM_PDE}.  
This is followed by our computational experiments
that are detailed in \cref{sec:numerics}. Finally, we provide some
concluding remarks in \cref{sec:conc}.

\section{Preliminaries}\label{sec:prelim}

\subsection{The basic setup}
Let $\Theta \subseteq \R^\Np$  be the uncertain
parameter space, and consider the probability space 
$(\Theta, \B, \mu)$, where $\B$ is the Borel $\sigma$-algebra on $\Theta$ and
$\mu$ is the law of the uncertain parameter vector $\theta$. 
In the present work, $\Theta$ is of the form 
$\Theta = \Theta_1 \times \Theta_2 \times \cdots 
\times \Theta_\Np$, where $\Theta_j \subseteq \R$, 
$j = 1, \ldots, \Np$.
The expectation of a random variable $g:\Theta \to \R$ is  
denoted by
\[
   \E{g} = \int_\Theta g(\theta) \mu(d\theta).
\]
We assume the components of the random vector $\theta$
are independent and
admit probability density functions $\pi_j(\theta_j)$, in which case
$\mu(d\theta) = \prod_{j = 1}^\Np \pi_j(\theta_j) d\theta_j$.
Next, let
$\X \subset \R^d$, with $d = 1, 2$, or $3$ be a compact set. With this setup, 
we consider a process,
$f:\X \times \Theta \to \R$ as in~\cref{equ:QoI}. 
Note that this
setup covers both time-dependent and spatially distributed processes. In the
former case, $\X$ is a time interval, and in the latter
case, $\X$ is a spatial region. 

\textbf{Assumptions on the process.}
We consider random processes that satisfy the following
assumptions.
\begin{assumption}\label{assump:main}
We assume 
\begin{enumerate}[label=(\alph*)]
\item $f\in L^2(\X \times \Theta)$ and $f$ is mean square continuous; that is, 
for any sequence $\{s_n\}$ in $\X$ converging to $s\in \X $ we have that
$\lim_{n\rightarrow \infty} \E{[f(s_n, \theta)-f(s, \theta)]^2} = 0$.
\item $\frac{\partial f}{\partial \theta_j}(s, \theta) $is defined for all $s\in \X$ and $\theta\in \Theta$, $j = 1, \ldots, \Np$;
\item 
$\frac{\partial f}{\partial \theta_j}(s, \theta)\in L^2(\X \times \Theta)$, 
$j = 1, \ldots, \Np$;
\item and $\{\theta_j\}_{j = 1}^\Np$, are real-valued independent 
random variables, and have distribution laws
that are absolutely continuous with respect to the Lebesgue measure. 
\end{enumerate}
\end{assumption}

We remark that (a) is a fundamental assumption on the process $f$.  From this,
we can conclude continuity of the mean and covariance function of the process;
see, e.g.,~\cite[Theorem 7.3.2]{HsingEubank15},~\cite[Theorem
2.2.1]{Adler10}.  This in turn facilitates application of Mercer's
Theorem~\cite{Mercer1909,Lax02} (needed below) and implies that $f$ admits a KL
expansion~\cite{Loeve77}. 
The assumptions (b) and (c) are needed in the context of 
derivative-based global sensitivity analysis.
Note that \cref{assump:main}(b) can be relaxed by requiring 
$\frac{\partial f}{\partial \theta_j}(s, \theta)$
be defined almost everywhere in $\X \times \Theta$.

\subsection{Variance-based sensitivity analysis for functional 
outputs}
We first recall the classical Sobol' 
indices and 
Analysis of Variance (ANOVA)
decomposition~\cite{Sobol07,Sobol:2001,Sobol:1990}, 
which can be
defined pointwise in $\X$. Let $K = \{1 ,2, \ldots, 
\Np\} $ be an index set, let $U = \{j_1, j_2, \ldots, j_m\}$ be a
subset of $K$, and let $U^c$ be the complement of $U$ in $K$, $U^c = K \setminus U$. 
We denote $\theta_U = \{\theta_{j_1}, \theta_{j_2}, \ldots, \theta_{j_m}\}$.
For each $s \in \X$, we have the ANOVA decomposition~\cite{Sobol07} 
\begin{equation}\label{equ:ANOVA}
        f(s, \theta) = f_0(s) + f_1(s, \theta_U) + f_2(s, \theta_{U^c}) + f_{12}(s, \theta), 
\end{equation}
where $f_0$ is the mean of the process, and
\[
f_1(s, \theta_U) = \E{f(s, \cdot) | \theta_U} -f_0(s), 
\quad
f_2(s,\theta_{U^c}) = \E{f(s, \cdot) | \theta_{U^c}} - f_0(s), 
\]
and $f_{12}(s, \theta) = f(s, \theta) - f_0(s) - f_1(s, \theta_U) - 
f_2(s,\theta_{U^c})$.
This enables decomposing
the total variance $D(f; s) = \var{f(s, \cdot)}$ of $f(s,\cdot)$ according
to 
\[
        D(f; s) = D_{U}(f; s) + D_{U^c}(f; s) + D_{U,U^c}(f; s),
\]
where 
$D_U(f; s) = \EE{\theta_U}{f_1(s, \theta_U)^2}$,  
$D_{U^c}(f; s) = \EE{\theta_{U^c}}{f_2(s, \theta_{U^c})^2}$, and
$D_{U,U^c}(f; s)$ is the remainder. (Here $\EE{\theta_U}{\cdot}$ 
indicates expectation with respect to $\theta_U$.) 
Then, we can define the first and total order Sobol' indices as follows:
\[
S_U(f; s) = \frac{D_U(f; s)}{D(f; s)}
\quad \text{and} \quad 
\quad S_{U}^\text{tot}(f; s) = \frac{D_U^{tot}(f; s)}{D(f; s)}, 
\]
where $D_U^{tot}(f; s) = D_U(f; s) + D_{U,U^c}(f; s)$.
Note that,
\[
S_{U}^\text{tot}(f; s) = \frac{D(f; s) - D_{U^c}(f; s)}{D(f; s)}
= 1 -\frac{D_{U^c}(f; s)}{D(f; s)} = 1 - S_{U^c}(f; s).
\]
When the index set $U$ is a singleton, $U = \{ j \}$, $j \in \{1, \ldots, \Np\}$,
we denote the corresponding first and total order Sobol' indices by $S_j(f; s)$ and
$S_j^\text{tot}(f; s)$, respectively. 

Here we assume that $D(f; s) > 0$ almost everywhere in $\X$.  If $D(f; s) = 0$
for some $s \in \X$, we use the convention $S_{U}(f; s) = 0$.

\subsection{Functional Sobol' indices}
Following~\cite{AlexanderianGremaudSmith17}, 
we define the functional first order Sobol' index as
\[
    \S_U(f; \X) = \frac{ \int_\X D_U(f; s) \, ds}{\int_\X D(f; s) \, ds}.
\]
The following lemma provides a simple representation for the functional 
Sobol' index in terms of the pointwise classical Sobol' indices:
\begin{lemma}\label{lem:Sobol_weighted_avg}
We have
$\S_U(f; \X) = \int_\X S_U(f; s) w(s) \, ds$,
with   
$w(s) = \dfrac{D(f; s)}{\int_\X D(f; y) \, dy}$. 
\end{lemma}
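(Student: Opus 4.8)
The plan is to unfold the two definitions and observe that the claimed identity is essentially a pointwise rewriting of $S_U(f;s)=D_U(f;s)/D(f;s)$ under the integral sign. First I would record the product form of the pointwise Sobol' index, namely
\[
D_U(f; s) = S_U(f; s)\, D(f; s),
\]
which holds by definition at every $s \in \X$ where $D(f; s) > 0$. The only point requiring a moment's care is the degenerate set where $D(f; s) = 0$: there, the inequalities $0 \le D_U(f; s) \le D(f; s) = 0$ force $D_U(f; s) = 0$, while the stated convention sets $S_U(f; s) = 0$, so the product identity still holds with both sides equal to zero. Thus the identity is valid for all $s \in \X$; in any event it holds almost everywhere, since we have assumed $D(f; s) > 0$ a.e. in $\X$, and a.e.\ validity is all the subsequent integrals require.

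Next I would substitute this product form into the numerator of the definition of the functional first order Sobol' index. Starting from
\[
\S_U(f; \X) = \frac{\int_\X D_U(f; s)\, ds}{\int_\X D(f; s)\, ds},
\]
replacing $D_U(f; s)$ by $S_U(f; s)\, D(f; s)$ in the numerator yields
\[
\S_U(f; \X) = \frac{\int_\X S_U(f; s)\, D(f; s)\, ds}{\int_\X D(f; s)\, ds}.
\]
Since the denominator $\int_\X D(f; y)\, dy$ is a constant with respect to the integration variable $s$, I would pull it inside the numerator integral to obtain
\[
\S_U(f; \X) = \int_\X S_U(f; s)\, \frac{D(f; s)}{\int_\X D(f; y)\, dy}\, ds = \int_\X S_U(f; s)\, w(s)\, ds,
\]
which is exactly the claimed representation.

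There is no substantive obstacle here: the statement is an immediate consequence of the definitions once the product identity is in hand. The only thing worth flagging explicitly — and which I would state rather than belabor — is the consistency of the convention $S_U(f; s) = 0$ on the (measure-zero) set $\{s : D(f; s) = 0\}$ with the product identity, so that no ambiguity arises in the integrand. I would also note, for completeness, that the integrals are well defined: $D(f; s)$ and $D_U(f; s)$ are integrable over the compact set $\X$ as a consequence of $f \in L^2(\X \times \Theta)$ together with the mean square continuity in \cref{assump:main}(a), so that $w$ is a genuine probability density on $\X$ and the weighted-average interpretation is meaningful.
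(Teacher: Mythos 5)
Your proof is correct and is precisely the ``straightforward calculation'' the paper invokes: substituting $D_U(f;s)=S_U(f;s)\,D(f;s)$ into the definition of $\S_U(f;\X)$ and pulling the constant denominator inside the integral. Your additional care on the degenerate set $\{s : D(f;s)=0\}$ and on integrability is a welcome (if minor) elaboration of what the paper leaves implicit.
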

\begin{proof}
The result follows by a straightforward calculation.
\end{proof}
We can also define the functional total Sobol' indices
\[
\ST_U(f; \X) = \frac{ \int_\X D_U^\text{tot}(f; s) \, ds}{\int_\X D(f; s) \, ds}
= 1 - \S_{U^c}(f; \X).
\]
Using \cref{lem:Sobol_weighted_avg}, we note 
\begin{equation}\label{equ:total_Sobol_weighted_avg}
\ST_U(f; \X) = 1 - \S_{U^c}(f; \X) =  
\int_\X (1 - S_{U^c}(f; s)) w(s) \, ds =  
\int_\X S_{U}^\text{tot}(f; s) w(s) ds.
\end{equation}

\textbf{Error estimates.}
We can use the total Sobol' index of a parameter to rank its importance.  In
particular, parameters with small Sobol' indices can be deemed unimportant. In
this section, we briefly 
discuss the impact of fixing
these unimportant parameters in terms of approximation errors.
Let $U = \{j_1, j_2, \ldots, j_m\} \subset \{1, \ldots,\Np\}$ index the 
set of important parameters, and suppose we set
$\theta_{U^c}$ to a nominal vector $\nominal$. Consider the ``reduced'' model:
\[
   \freduced(s, \theta_U) = f(s, \theta_U, \nominal),
\] 
where the right hand side function is understood to be $f(s, \theta)$, with
entries of $\theta_{U^c}$ fixed at $\nominal$.

For $U = \{j_1, j_2, \ldots, j_m\}$ we define 
$\Thetau = \Theta_{j_1} \times \cdots \times \Theta_{j_m}$.
Integration on $\Thetau$ will be with respect to 
$\mu(d\theta_U) = \prod_{k=1}^{m} \pi_{j_k}(\theta_{j_k})d\theta_{j_k}$. 

For $\nominal \in \Thetaz$ we define the mean-square error 
\[
        \err{f}{\nominal} = \int_\Theta 
                   (f(s, \theta)- \freduced(s, \theta_U))^2\, \mu(d\theta).
\]

We consider the relative mean square error
\begin{equation}\label{equ:globalL2}
     \ee(f; \nominal) = \frac{\displaystyle \int_\X \int_\Theta 
                     (f(s, \theta)- \freduced(s, \theta_U))^2\, \mu(d\theta) ds}
       {\displaystyle\int_\X \int_\Theta f(s, \theta)^2\, \mu(d\theta) ds}.
\end{equation}
This provides a measure of the error that occurs when fixing the values 
of $\theta_{U^c}$. 
The following proposition quantifies this error in terms of the
functional total Sobol' indices. This result is a straightforward
modification of the error estimate presented in~\cite{AlexanderianGremaudSmith17};
we provide a proof in \cref{sec:proof_L2} for completeness. 
\begin{proposition}\label{prop:L2}
$\int_\Thetaz \ee(f; \nominal) \mu(d\nominal) \leq 2\ST_{\theta_{U^c}}(f, \X)$.
\end{proposition}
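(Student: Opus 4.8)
The plan is to reduce the claimed inequality to a pointwise-in-$s$ identity that expresses the averaged mean-square replacement error as twice the unnormalized total-effect variance of $\theta_{U^c}$. Since the denominator of $\ee(f;\nominal)$ does not depend on $\nominal$, I would first invoke Tonelli's theorem (the integrand is nonnegative) to pull the integration over $\nominal\in\Thetaz$ through the ratio and interchange the orders of integration, writing
\[
\int_\Thetaz \ee(f;\nominal)\,\mu(d\nominal)
= \frac{\int_\X I(s)\,ds}{\int_\X \int_\Theta f(s,\theta)^2\,\mu(d\theta)\,ds},
\qquad
I(s) := \int_\Thetaz \int_\Theta \big(f(s,\theta)-f(s,\theta_U,\nominal)\big)^2\,\mu(d\theta)\,\mu(d\nominal).
\]
Here $\mu(d\theta)=\mu(d\theta_U)\,\mu(d\theta_{U^c})$, and $\nominal$ ranges over the same space as $\theta_{U^c}$ with the same law, so that $\theta_U$, $\theta_{U^c}$, and $\nominal$ act as three independent draws with $\theta_{U^c}$ and $\nominal$ identically distributed. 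The entire argument hinges on evaluating $I(s)$ for fixed $s$.

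The key step is the computation of $I(s)$, which is a pick-freeze-type identity. Expanding the square produces three terms. The first term $\int\!\int f(s,\theta)^2$ integrates trivially over $\nominal$ to give $\E{f(s,\cdot)^2}$, and after relabeling $\nominal$ as $\theta_{U^c}$ the third term equals $\E{f(s,\cdot)^2}$ as well. For the cross term, fixing $\theta_U$ lets the integrals over $\theta_{U^c}$ and $\nominal$ factor, producing $\big(\E{f(s,\cdot)\mid\theta_U}\big)^2$; using the ANOVA decomposition $\E{f(s,\cdot)\mid\theta_U}=f_0(s)+f_1(s,\theta_U)$ together with $\EE{\theta_U}{f_1(s,\theta_U)}=0$ and $D_U(f;s)=\EE{\theta_U}{f_1(s,\theta_U)^2}$, the cross term evaluates to $-2\big(f_0(s)^2+D_U(f;s)\big)$. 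Combining these with $\E{f(s,\cdot)^2}=D(f;s)+f_0(s)^2$ and the variance splitting $D(f;s)-D_U(f;s)=D_{U^c}(f;s)+D_{U,U^c}(f;s)=D_{U^c}^{\mathrm{tot}}(f;s)$ cancels the mean contribution and yields the clean identity $I(s)=2\,D_{U^c}^{\mathrm{tot}}(f;s)$.

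Finally, I would integrate this identity over $\X$ and divide by the unchanged denominator. Since $\E{f(s,\cdot)^2}=D(f;s)+f_0(s)^2\ge D(f;s)$, replacing the denominator $\int_\X \E{f(s,\cdot)^2}\,ds$ by the smaller $\int_\X D(f;s)\,ds$ only increases the ratio, so that
\begin{align*}
\int_\Thetaz \ee(f;\nominal)\,\mu(d\nominal)
&= \frac{2\int_\X D_{U^c}^{\mathrm{tot}}(f;s)\,ds}{\int_\X \E{f(s,\cdot)^2}\,ds} \\
&\le \frac{2\int_\X D_{U^c}^{\mathrm{tot}}(f;s)\,ds}{\int_\X D(f;s)\,ds}
= 2\,\ST_{U^c}(f;\X),
\end{align*}
which is the asserted bound. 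I expect the main obstacle to be the bookkeeping in the cross-term computation of $I(s)$: one must carefully exploit the independence of $\theta_U$, $\theta_{U^c}$, $\nominal$ and the identical law of $\theta_{U^c}$ and $\nominal$ to factor the product, and then correctly peel off the mean contribution $f_0(s)^2$ so that what remains is precisely the total-effect variance $D_{U^c}^{\mathrm{tot}}$ rather than the plain variance $D_{U^c}$. Everything else—the $\nominal$-independence of the denominator and the elementary lower bound $\E{f(s,\cdot)^2}\ge D(f;s)$ that converts the resulting equality into the claimed inequality—is routine.
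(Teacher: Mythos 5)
Your proof is correct, and its outer skeleton matches the paper's: both reduce the claim to the pointwise identity $\int_\Thetaz \err{f}{\nominal}\,\mu(d\nominal) = 2D_{U^c}^\text{tot}(f;s)$ (the paper's Lemma~\ref{lem:ptwise_var}), then integrate over $\X$ via Tonelli and bound the constant denominator below by $\int_\X D(f;s)\,ds$. Where you genuinely differ is in how that identity is established. The paper substitutes the full four-term ANOVA decomposition into the squared difference (the $f_0$ and $f_1$ terms cancel), invokes the vanishing marginal integrals of $f_2$ and $f_{12}$ to annihilate all cross terms, and arrives at the expression $D_{U^c}(f;s)+D_{U,U^c}(f;s)+f_2^2(s,\nominal)+\int_{\Theta_U} f_{12}^2(s,\theta_U,\nominal)\,\mu(d\theta_U)$, which it then averages over $\nominal$. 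You instead expand $(f-\freduced)^2$ directly and evaluate the cross term by a pick-freeze argument: conditioning on $\theta_U$ and using the independence and equal laws of $\theta_{U^c}$ and $\nominal$, the cross term factors into $-2\,\EE{\theta_U}{\big(\E{f(s,\cdot)\,|\,\theta_U}\big)^2} = -2\big(f_0(s)^2 + D_U(f;s)\big)$, after which the identity follows from $D(f;s) - D_U(f;s) = D_{U^c}^\text{tot}(f;s)$. Your route is somewhat more elementary: it needs only the characterization $f_1 = \E{f(s,\cdot)\,|\,\theta_U} - f_0$ and the complementarity identity, not the orthogonality relations for $f_2$ and $f_{12}$, and it makes transparent the connection to pick-freeze estimators of Sobol' indices. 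The paper's route, at the cost of more bookkeeping, identifies exactly which ANOVA components constitute the error for each fixed $\nominal$ before averaging, which is slightly finer information. Both arguments are complete; your bookkeeping in the cross term (independence of $\theta_U$, $\theta_{U^c}$, $\nominal$, and the identical laws of the latter two) is handled correctly.
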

\begin{proof}  
See \cref{sec:proof_L2}.
\end{proof}
The estimate in Proposition~\ref{prop:L2} says that when 
fixing $\theta_{U^c}$ to a nominal parameter $\nominal \in \Thetaz$, in average,
the relative error $\ee(f; \nominal)$ is bounded by $2\ST_{\theta_{U^c}}(f, \X)$.  

\section{Derivative-based GSA for 
functional QoIs}\label{sec:functional_DGSMs}
Let us first consider a scalar-valued random variable $g:\Theta \to \R$.
Here $g$ and its partial derivatives are assumed to be square integrable.
We recall the following commonly used DGSM~\cite{SobolKucherenko09}:
\[
    \nu_j(g) = \int_\Theta \Big(\frac{\partial g}{\partial \theta_j}\Big)^2 \mu(d\theta).
\]
DGSMs can be used to screen for unimportant variables.
This is justified by the relation between DGSMs and total Sobol'
indices, which was first addressed in~\cite{SobolKucherenko09} 
for scalar-valued random variables. 
While the estimation of $\nu_j$
requires a Monte Carlo (MC) sampling procedure, it has been observed that in
practice the number of samples required for estimation of $\nu_j$'s does not
need to be very large to provide sufficient accuracy in identifying
unimportant variables.  
We present the following result which partially
explains this phenomenon.

\begin{proposition}\label{prp:BD}
Assume that 
\[
   a_j \leq  \Big(\frac{\partial g}{\partial \theta_j}(\theta)\Big)^2 \leq b_j, \quad 
   j = 1, \ldots, \Np, \quad \text{for all } \theta \in \Theta.
\]
Consider the MC estimator 
\[
   \nu_j^{(\Ns)}(g) := \frac1\Ns \sum_{k = 1}^\Ns \Big(\frac{\partial g}{\partial\theta_j}
                                            (\theta^k) \Big)^2,
\]
with $\theta^k$ independent and identically distributed according to the law of $\theta$.
Then,
\begin{equation}\label{equ:ineq}
    \var{\nu_j^{(\Ns)}(g)} \leq \frac1\Ns \big(b_j - \nu_j(g)\big)\big(\nu_j(g) - a_j\big) 
                         \leq \frac{1}{4\Ns} (b_j - a_j)^2, 
\end{equation}
for $j = 1, \ldots, \Np$. 
\end{proposition}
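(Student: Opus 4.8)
The plan is to reduce the statement to a classical variance bound for a single bounded random variable. First I would set $X := \big(\frac{\partial g}{\partial\theta_j}(\theta)\big)^2$, viewed as a random variable on $(\Theta, \B, \mu)$. By hypothesis $X$ takes values in $[a_j, b_j]$, and by the definition of the DGSM we have $\E{X} = \nu_j(g)$. Since the samples $\theta^k$ are i.i.d.\ according to the law of $\theta$, the summands $X_k := \big(\frac{\partial g}{\partial\theta_j}(\theta^k)\big)^2$ are i.i.d.\ copies of $X$. Independence then makes the variance of the average factor cleanly, so that
\[
   \var{\nu_j^{(\Ns)}(g)} = \frac{1}{\Ns^2}\sum_{k=1}^{\Ns}\var{X_k} = \frac{1}{\Ns}\var{X}.
\]
This reduces both inequalities to a statement about the single variable $X$, namely $\var{X} \leq (b_j - \nu_j(g))(\nu_j(g) - a_j) \leq \tfrac14(b_j - a_j)^2$.

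For the first inequality (the Bhatia--Davis bound), the key observation is that $(X - a_j)(b_j - X) \geq 0$ holds pointwise on $\Theta$, since $a_j \leq X \leq b_j$. Taking expectations and expanding the product gives $\E{X^2} \leq (a_j + b_j)\E{X} - a_j b_j$. Subtracting $(\E{X})^2$ from both sides and regrouping yields $\var{X} \leq (a_j+b_j)\E{X} - a_j b_j - (\E{X})^2$, and a short algebraic check shows the right-hand side equals $(b_j - \E{X})(\E{X} - a_j)$; substituting $\E{X} = \nu_j(g)$ completes this step. For the second inequality, I would regard $t \mapsto (b_j - t)(t - a_j)$ as a downward-opening parabola in $t$ whose maximum over all $t$ is $\tfrac14(b_j - a_j)^2$, attained at $t = \tfrac12(a_j+b_j)$; equivalently, this is just the AM--GM inequality applied to the two nonnegative factors $(b_j - \nu_j(g))$ and $(\nu_j(g) - a_j)$.

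I do not anticipate a genuine obstacle here: the only substantive ingredient is recognizing the Bhatia--Davis inequality and verifying it from the nonnegativity of $(X-a_j)(b_j-X)$. The main points requiring care are purely bookkeeping: confirming the i.i.d.\ factorization $\var{\nu_j^{(\Ns)}(g)} = \Ns^{-1}\var{X}$, and noting that $\nu_j(g) = \E{X} \in [a_j, b_j]$ so that both factors in the bound are nonnegative and the chain of inequalities is meaningful. The result then holds for each $j = 1, \ldots, \Np$ identically.
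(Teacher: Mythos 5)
Your proposal is correct and follows essentially the same route as the paper's proof: reduce via the i.i.d.\ factorization $\var{\nu_j^{(\Ns)}(g)} = \Ns^{-1}\var{X}$ to a variance bound for the single bounded variable $X = \big(\frac{\partial g}{\partial\theta_j}\big)^2$, then apply the Bhatia--Davis and Popoviciu inequalities with $m = \nu_j(g)$. The only difference is that the paper cites these two inequalities as known results, whereas you verify them from the pointwise nonnegativity of $(X - a_j)(b_j - X)$ and the parabola maximum, which makes your argument self-contained but does not change its substance.
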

\begin{proof}
See \cref{sec:proof_BD}.
\end{proof}
This proposition says that if the partial derivatives
do not vary too much (i.e., $a_j$ and $b_j$ are not
too far from one another), indicating a desirable
regularity property of the parameter-to-QoI mapping, 
then the MC estimator $\nu_j^{(\Ns)}(g)$ will have a
small variance for a modest choice of $\Ns$. In such 
cases the MC sample size for 
estimating $\nu_j(g)$ does not need to be very large. 

\textbf{Functional DGSMs.}
Next, we turn to DGSMs for functional QoIs. 
We propose the following definition 
for a functional 
DGSM
\begin{equation}\label{equ:DGSM_functional}
   \N_j(f; \X) = \int_\X \int_\Theta \Big(\frac{\partial f}{\partial \theta_j}(s, \theta)\Big)^2 
   \mu(d\theta) ds = 
   \int_\X \nu_j(f(s, \cdot))\, ds,
\end{equation}
which is a natural choice. These indices can be normalized in different ways
to make their comparison easier. For instance, we may consider the normalized indices
\[
     \frac{\N_j(f; \X)}{\sum_{k = 1}^\Np \N_k(f; \X)}, \quad j = 1, \ldots, \Np.
\]

We can relate $\N_j(f; \X)$ to the corresponding 
functional total Sobol' indices
$\ST_j(f; \X)$, $j = 1, \ldots, \Np$, analogously to the scalar case. 
Specifically, we present the following result
that shows functional total Sobol' indices can be bounded in terms of the
proposed functional DGSMs.
\begin{theorem}\label{thm:bound_main}
Let $f(s, \theta)$ be a random process satisfying
\cref{assump:main}. Suppose $\theta_j$ 
are independent and distributed 
according to uniform or normal distribution, for
$i = 1, \ldots, \Np$. Then,
\begin{equation}\label{equ:bound}
\ST_j(f; \X) \leq \alpha_j \frac{\N_j(f; \X)}{\trace(\Cq)}, 
\quad j = 1, \ldots, \Np,
\end{equation}
where $\Cq$ is
the covariance operator of the random function $f(s, \theta)$,
and
\[
\alpha_j = \begin{cases} 
(b-a)^2/\pi^2, \quad \text{if } \theta_j \sim U(a, b),\\
\sigma_j^2,   \quad \text{if } \theta_j \sim \mathcal{N}(0, \sigma_j^2).
\end{cases} 
\]
\end{theorem}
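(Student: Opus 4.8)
The plan is to reduce the functional bound to a pointwise scalar Sobol'--Kucherenko inequality and then integrate over $\X$. First I would fix $s\in\X$ and establish the pointwise estimate $D_j^{\text{tot}}(f;s)\le \alpha_j\,\nu_j(f(s,\cdot))$, where $D_j^{\text{tot}}(f;s)$ denotes the pointwise total-effect variance of the single parameter $\theta_j$, i.e., $D_U^{\text{tot}}(f;s)$ with $U=\{j\}$. Writing $\theta_{U^c}$ for all parameters other than $\theta_j$, the identity $D_j^{\text{tot}}(f;s)=D(f;s)-D_{U^c}(f;s)$ together with the law of total variance gives
\[
D_j^{\text{tot}}(f;s)=\EE{\theta_{U^c}}{\var{f(s,\cdot)\mid\theta_{U^c}}}.
\]
For each fixed value of $\theta_{U^c}$, the inner conditional variance is the variance of $f(s,\cdot)$ viewed as a function of $\theta_j$ alone, which I would bound by a one-dimensional Poincar\'e inequality for the law of $\theta_j$:
\[
\var{f(s,\cdot)\mid\theta_{U^c}}\le \alpha_j\,\EE{\theta_j}{\Big(\tfrac{\partial f}{\partial\theta_j}(s,\cdot)\Big)^2\,\Big|\,\theta_{U^c}}.
\]
Taking the expectation over $\theta_{U^c}$ and using independence collapses the nested expectation into $\nu_j(f(s,\cdot))$, which yields the pointwise bound.

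The key analytic input is the identification of the Poincar\'e constants. For $\theta_j\sim U(a,b)$ the sharp constant is the reciprocal of the first nonzero \emph{Neumann} eigenvalue of $-d^2/d\theta_j^2$ on $[a,b]$, namely $(b-a)^2/\pi^2$; it is essential that this is the Neumann and not the Dirichlet constant, since the variance is measured against a zero-mean rather than a zero-boundary constraint. For $\theta_j\sim\mathcal{N}(0,\sigma_j^2)$ the relevant constant is $\sigma_j^2$, which is the Gaussian Poincar\'e inequality (equivalently, the spectral gap of the associated Ornstein--Uhlenbeck operator). These are exactly the $\alpha_j$ appearing in the statement, and the restriction to uniform and normal laws is precisely what guarantees a finite, explicit Poincar\'e constant; on an unbounded support a naive domain-based argument would fail, so the Gaussian case genuinely requires the spectral-gap inequality.

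Finally I would integrate the pointwise estimate over $s\in\X$ and assemble the functional quantities. Integration gives $\int_\X D_j^{\text{tot}}(f;s)\,ds\le \alpha_j\int_\X\nu_j(f(s,\cdot))\,ds=\alpha_j\,\N_j(f;\X)$ by the definition in \cref{equ:DGSM_functional}. For the denominator I would use $\trace(\Cq)=\int_\X D(f;s)\,ds$: under \cref{assump:main}(a), Mercer's theorem applies and $\Cq$ has continuous kernel $c(s,s')=\mathrm{Cov}(f(s,\cdot),f(s',\cdot))$, so its trace is the integral of the diagonal, $\int_\X c(s,s)\,ds=\int_\X D(f;s)\,ds$. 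Dividing the two relations and recalling $\ST_j(f;\X)=\int_\X D_j^{\text{tot}}(f;s)\,ds/\int_\X D(f;s)\,ds$ produces \cref{equ:bound}. I expect the main obstacle to be the pointwise Poincar\'e step — justifying that the conditional variance may be controlled by the conditional mean square of $\partial f/\partial\theta_j$ (this is where \cref{assump:main}(b)--(c) enter) and pinning down the correct Neumann/Gaussian constants — whereas the integration over $\X$ and the trace identification are routine.
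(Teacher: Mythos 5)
Your proposal is correct and follows essentially the same route as the paper's proof: a pointwise-in-$s$ bound $S_j^\text{tot}(f;s) \leq \alpha_j\,\nu_j(f(s,\cdot))/D(f;s)$, integration of that bound over $\X$ against the variance weight, and identification of $\int_\X D(f;s)\,ds$ with $\trace(\Cq)$ via Mercer's theorem. The only difference is that where the paper simply cites the Kucherenko--Iooss result for the pointwise inequality, you derive it inline — writing $D_j^\text{tot}(f;s)=\EE{\theta_{U^c}}{\var{f(s,\cdot)\mid\theta_{U^c}}}$ by the law of total variance and applying the one-dimensional Neumann (uniform) and Gaussian Poincar\'e inequalities with constants $(b-a)^2/\pi^2$ and $\sigma_j^2$ — which is a correct, self-contained proof of the cited lemma.
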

\begin{proof}
For a fixed $s \in \X$, by the results in~\cite{KucherenkoIooss17},
\begin{equation}\label{equ:poincare}
    S_j^\text{tot}(f; s) \leq \frac{\alpha_j}{D(f; s)} \nu_j(f; s).
\end{equation}
Then, using~\cref{equ:total_Sobol_weighted_avg},
\begin{equation}\label{equ:bound_almost_there}
\begin{aligned}
  \ST_j(f; \X) &= \int_\X S_j^\text{tot}(f; s) w(s)\, ds
\\
               &\leq \int_\X \frac{\alpha_j}{D(f; s)} \nu_j(f; s) w(s) \, ds =\alpha_j\frac{\N_j(f; \X)}{\int_\X D(f; s)\, ds}.
\end{aligned}
\end{equation}
Now, let $\Cq$ be the covariance operator of the random process 
$f(s, \theta)$, 
and let $c(s, t)$ be its covariance function.
As a consequence of Mercer's Theorem~\cite{Mercer1909,Lax02}, we have 
\[
    \int_\X D(f; s) \, ds =  \int_\X \var{f(s, \cdot)} \, ds = 
   \int_\X c(s, s) \, ds
   = \trace(\Cq).
\]
Combining this with~\cref{equ:bound_almost_there} we obtain the desired result. 
\end{proof}

The DGSM-based upper bounds on the functional total Sobol' indices 
provided by \cref{thm:bound_main}
enable identifying inputs with small total Sobol' indices, hence providing
an efficient way of identifying unimportant parameters. 
Note that the theorem is stated for $\theta_j$ that are distributed
uniformly or normally, because these distributions are commonly used in
modeling under uncertainty.  However, the result holds for other families of
distributions.  Specifically, in~\cite{LamboniIoossPopelinEtAl13}, it is shown
that~\cref{equ:poincare} holds for   the Boltzmann family of distributions with
appropriate choices of the constants $\alpha_j$, $j = 1, \ldots, \Np$, which
provides immediate extension of \cref{thm:bound_main} to  Boltzmann
family of distributions.
We mention that an important class of Boltzmann distributions is the family 
of log-concave distributions that includes Normal, Exponential, Beta, Gamma,
Gumbel, and Weibull distributions~\cite{LamboniIoossPopelinEtAl13}.

Similar to the case of scalar QoIs, estimating functional DGSM often
requires fewer samples than are required for direct calculation of the
Sobol' indices via MC Sampling.
The following result, which is similar to \cref{prp:BD}, provides a bound on the variance of the 
corresponding MC estimator, given appropriate boundedness assumptions on 
the partial derivatives of the functional QoI.
\begin{proposition}\label{prp:BD_functional}
Assume that there exist non-negative integrable functions $a_j$ and $b_j$, defined on 
$\X$ such that for each $s \in \X$,
\[
   a_j(s) \leq  \Big(\frac{\partial f(s, \theta)}{\partial \theta_j}\Big)^2 \leq b_j(s), \quad
   j = 1, \ldots, \Np, \quad \text{for all } \theta \in \Theta.
\]
Consider the MC estimator
\[
   \N_j^{(\Ns)}(f; \X) := \frac1\Ns \sum_{k = 1}^\Ns \int_\X 
\Big(\frac{\partial f}{\partial\theta_j}(s, \theta^k)\Big)^2\, ds,
\]
with $\theta^k$ independent and identically distributed according to the law of $\theta$.
Then,
\begin{equation*}
    \var{\N_j^{(\Ns)}(f; \X)} 
     \leq \frac{1}{4\Ns} \|b_j - a_j\|_{L^1(\X)}^2, \quad j = 1, \ldots, \Np.
\end{equation*}
\end{proposition}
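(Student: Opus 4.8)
The plan is to reduce the claim to a one-dimensional boundedness argument for the scalar random variable obtained by integrating the squared partial derivative over $\X$, and then invoke a Popoviciu-type variance bound, exactly paralleling the scalar result \cref{prp:BD}. Concretely, for each sample index $k$ I would introduce the scalar random variable
\[
   X_k := \int_\X \Big(\frac{\partial f}{\partial \theta_j}(s, \theta^k)\Big)^2\, ds,
\]
so that $\N_j^{(\Ns)}(f; \X) = \tfrac{1}{\Ns}\sum_{k=1}^\Ns X_k$. Since the $\theta^k$ are i.i.d.\ according to the law of $\theta$, the $X_k$ are i.i.d.\ copies of $X_1$; joint measurability of $\partial f/\partial\theta_j$ (from \cref{assump:main}(c)) together with Tonelli's theorem makes $X_1$ a bona fide random variable, and \cref{assump:main}(c) also guarantees it has finite mean and variance. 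By independence,
\[
   \var{\N_j^{(\Ns)}(f; \X)} = \frac{1}{\Ns^2}\sum_{k=1}^\Ns \var{X_k} = \frac{1}{\Ns}\var{X_1}.
\]

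The second step is to transfer the pointwise hypotheses into two-sided bounds on $X_1$. Integrating the assumed inequality $a_j(s) \leq (\partial f/\partial\theta_j)^2 \leq b_j(s)$ over $s \in \X$ — which is legitimate because it holds for \emph{every} $\theta$, hence in particular for the realization $\theta^1$ — yields
\[
   A_j := \|a_j\|_{L^1(\X)} \leq X_1 \leq \|b_j\|_{L^1(\X)} =: B_j \quad \text{almost surely}.
\]
Because the pointwise bounds force $a_j(s) \leq b_j(s)$ for all $s$, the difference $b_j - a_j$ is non-negative, so $B_j - A_j = \int_\X \big(b_j(s) - a_j(s)\big)\, ds = \|b_j - a_j\|_{L^1(\X)}$. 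This identity is precisely what converts the difference of two $L^1$ norms into the single norm appearing in the statement.

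Finally, I would apply Popoviciu's inequality on variances: any random variable supported in $[A_j, B_j]$ has variance at most $\tfrac14 (B_j - A_j)^2$. Hence $\var{X_1} \leq \tfrac14\|b_j - a_j\|_{L^1(\X)}^2$, and combining this with the variance-of-the-mean identity above gives the stated estimate. (A sharper version in the spirit of \cref{prp:BD} is available at no extra cost: noting $\E{X_1} = \N_j(f; \X)$ by Tonelli's theorem, the Bhatia--Davis inequality yields $\var{X_1} \leq \big(B_j - \N_j(f;\X)\big)\big(\N_j(f;\X) - A_j\big)$.) I do not anticipate a serious obstacle: the argument is the functional analog of \cref{prp:BD}, and the only points requiring genuine care are the measurability of $X_1$ as a function of $\theta$ and the observation that the difference of $L^1$ norms collapses to $\|b_j - a_j\|_{L^1(\X)}$ exactly because $b_j \geq a_j$ holds pointwise.
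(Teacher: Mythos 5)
Your proposal is correct and follows essentially the same route as the paper: both reduce to the scalar random variable $G_j(\theta) = \int_\X \big(\frac{\partial f}{\partial\theta_j}(s,\theta)\big)^2\, ds$, bound it in $\big[\|a_j\|_{L^1(\X)}, \|b_j\|_{L^1(\X)}\big]$, and apply Popoviciu's inequality together with the $1/\Ns$ variance reduction for i.i.d.\ averages. The only (cosmetic) difference is that the paper passes from $\|b_j\|_{L^1(\X)} - \|a_j\|_{L^1(\X)}$ to $\|b_j - a_j\|_{L^1(\X)}$ via the reverse triangle inequality, whereas you observe this is an exact equality since $a_j \le b_j$ pointwise; you also make explicit the $\var{\N_j^{(\Ns)}} = \frac1\Ns\var{G_j}$ step that the paper leaves implicit.
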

\begin{proof}
See \cref{sec:proof_BD}.
\end{proof}

The indices $\N_j$ can be computed by sampling the partial 
derivatives. 
Gradient computation can be performed using
various techniques. The simplest approach is to use the finite difference method.
However, this approach becomes prohibitive for computationally 
intensive models with a
large number of input parameters. For models governed by differential
equations, one can use the
so called \emph{sensitivity equations} 
for computing derivatives. We demonstrate this
in one of our numerical examples in \cref{sec:numerics}. Unfortunately, 
this approach also suffers from the curse of dimensionality, and becomes cumbersome
for complex systems.  Another approach, not explored in the
present work, is that of automatic differentiation.  The challenges of gradient
computation are compounded for models governed by expensive-to-solve PDEs
with high-dimensional input parameters. For such models, we propose an approach 
that combines low-rank KL expansions and adjoint-based gradient computation.

With the strategy of using low-rank KL expansions for the purposes
of computing DGSMs in mind, we examine 
functional QoIs of the form
\begin{equation}\label{equ:finite_rank}
    f(s, \theta) = \sum_{i = 1}^\Nq \gamma_i f_i({\theta}) \phi_i(s),
\end{equation}
where $\phi_i$ are orthonormal with respect to $L^2(\X)$ inner product,
$\{\gamma_i\}$ are non-negative and sorted in descending order, 
$\E{f_i} = 0$, $i = 1, \ldots, \Nq$, and $\E{f_i f_j} = \delta_{ij}$. 
Suppose also that $f_i$ have square integrable partial derivatives. 

\begin{theorem}\label{thm:finite_rank}
Let $f$ be a random process of the form~\cref{equ:finite_rank}. 
The following hold:
\begin{enumerate}
\item
$\N_j(f; \X) = \sum_{i=1}^\Nq \gamma_j^2 \nu_j(f_i)$, $j = 1, \ldots, \Np$.
\item We have the bound 
\[
   \ST_j(f; \X) \leq \frac{\N_j(f; \X)}{\sum_{i=1}^\Nq \gamma_i^2}
   = \frac{\sum_{i = 1}^\Nq \gamma_i^2 \nu_j(f_i)}
          {\sum_{i=1}^\Nq \gamma_i^2}, 
   \quad j = 1, \ldots, \Np.
\]
\end{enumerate}
\end{theorem}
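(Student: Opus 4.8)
The plan is to prove assertion~(1) by a direct computation from the definitions and then deduce assertion~(2) by feeding~(1) into \cref{thm:bound_main}, the only extra ingredient being the value of $\trace(\Cq)$ for the finite-rank process. Starting from the definition~\cref{equ:DGSM_functional} of $\N_j$, I would substitute the representation~\cref{equ:finite_rank} and differentiate term by term in $\theta_j$, which is legitimate since the sum is finite and each $f_i$ has a square-integrable partial derivative; this gives $\frac{\partial f}{\partial\theta_j}(s,\theta)=\sum_{i=1}^{\Nq}\gamma_i\frac{\partial f_i}{\partial\theta_j}(\theta)\,\phi_i(s)$. Squaring yields a double sum over $i,k$, and integrating over $s\in\X$ collapses it to the diagonal because the $\phi_i$ are $L^2(\X)$-orthonormal, so $\int_\X\phi_i(s)\phi_k(s)\,ds=\delta_{ik}$; what survives is $\int_\X\big(\frac{\partial f}{\partial\theta_j}\big)^2\,ds=\sum_{i=1}^{\Nq}\gamma_i^2\big(\frac{\partial f_i}{\partial\theta_j}\big)^2$. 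Integrating over $\Theta$ and interchanging the finite sum with the integral then produces $\N_j(f;\X)=\sum_{i=1}^{\Nq}\gamma_i^2\,\nu_j(f_i)$, which is assertion~(1).

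For assertion~(2), I would appeal to \cref{thm:bound_main}, which controls $\ST_j(f;\X)$ in terms of $\N_j(f;\X)$ and $\trace(\Cq)$, so the task reduces to identifying the trace. Since $\E{f_i}=0$ the process has zero mean, and using $\E{f_i f_k}=\delta_{ik}$ its covariance function is $c(s,t)=\E{f(s,\cdot)\,f(t,\cdot)}=\sum_{i=1}^{\Nq}\gamma_i^2\,\phi_i(s)\phi_i(t)$. The $L^2(\X)$-orthonormality of the $\phi_i$ shows at once that they are eigenfunctions of $\Cq$ with eigenvalues $\gamma_i^2$, so by Mercer's Theorem (as already invoked in the proof of \cref{thm:bound_main}) $\trace(\Cq)=\int_\X c(s,s)\,ds=\sum_{i=1}^{\Nq}\gamma_i^2$. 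Substituting this value together with assertion~(1) into the bound of \cref{thm:bound_main} yields the stated inequality.

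I expect the trace identification to be the only substantive step: assertion~(1) is essentially bookkeeping, and it is safe here precisely because the expansion is a \emph{finite} sum of square-integrable terms, so no dominated-convergence or Fubini subtlety arises. The crux in~(2) is the observation that the $L^2(\X)$-orthonormality of the $\phi_i$ together with the uncorrelatedness $\E{f_i f_k}=\delta_{ik}$ of the coefficients simultaneously diagonalizes both the spatial and the stochastic structure of the process; this is exactly what converts the abstract $\trace(\Cq)$ appearing in \cref{thm:bound_main} into the concrete denominator $\sum_{i=1}^{\Nq}\gamma_i^2$.
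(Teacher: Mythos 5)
Your proof is correct and follows essentially the same route as the paper: part (1) by term-by-term differentiation of the finite sum and collapsing the resulting double sum via the $L^2(\X)$-orthonormality of the $\phi_i$, and part (2) by combining (1) with \cref{thm:bound_main} after identifying $\trace(\Cq)=\sum_{i=1}^{\Nq}\gamma_i^2$. The only differences are cosmetic---you integrate over $\X$ before $\Theta$ rather than after, and you spell out the trace identification (via the covariance function and Mercer's theorem) that the paper's proof calls ``straightforward to see''---so there is nothing substantive to add.
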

\begin{proof} 
First, we note 
\begin{multline*}
   \nu_j(f(s, \cdot)) = 
   \int_\Theta \Big(\frac{\partial}{\partial \theta_j}\sum_{i = 1}^\Nq 
   \gamma_i f_i(\theta) \phi_i(s)\Big)^2 \, \mu(d\theta) 
   =\int_\Theta \Big(\sum_{i = 1}^\Nq
   \gamma_i \delj{i} \phi_i(s)\Big)^2 \, \mu(d\theta)
   \\ =
   \sum_{i,k = 1}^\Nq \gamma_i\gamma_k 
   \Big(\int_\Theta\delj{i} \delj{k} \,\mu(d\theta)
   \Big)\phi_i(s)\phi_k(s). 
\end{multline*}
Therefore, 
\begin{align*}
	\N_j(f; \X)   &= \int_\X\nu_j(f(s,\cdot))\,ds\\
	          &=	\int_\X \left(\sum_{i,k}^\Nq \gamma_i\gamma_k\,
			\int_\Theta\delj{i}\delj{k}\,
			\mu(d\theta)\phi_i(s)\phi_k(s)\right)\,ds\\
		  &=    \sum_{i,k = 1}^\Nq \gamma_i\gamma_k 
  		        \Big(\int_\Theta\delj{i} \delj{k} \,\mu(d\theta)
   		        \Big)\int_\X \phi_i(s)\phi_k(s) \,ds \\
	          &=    \sum_{i = 1}^\Nq \gamma_i^2
  		    \Big[\int_\Theta\Big(\delj{i}\Big)^2\,\mu(d\theta)\Big] 
		  = \sum_{i=1}^\Nq \gamma_j^2 \nu_j(f_i). 		
\end{align*}
This establishes the first assertion of the theorem.
Next,
letting $\Cq$ be the covariance operator of $f(s, \theta)$,  
it is straightforward to see that $\trace(\Cq) = \sum_{i = 1}^\Nq \gamma_i^2$.
Thus, combining the first assertion of the 
theorem with \cref{thm:bound_main}, we have
\[
   \ST_j(f; \X) \leq \frac{\N_j(f; \X)}{\sum_{i=1}^\Nq \gamma_i^2}
   = \frac{\sum_{i = 1}^\Nq \gamma_i^2 \nu_j(f_i)}
          {\sum_{i=1}^\Nq \gamma_i^2}, 
   \quad j = 1, \ldots, \Np. 
\]
\end{proof}

\textbf{Computing DGSMs for functional outputs.}
To enable efficient computation of functional DGSMs, we use a truncated
KL expansion of $f$. Let $(\lambda_i(\Cq), \phi_i)$ be
the eigenpairs of the covariance operator of $f$; we consider the truncated KL
expansion 
\begin{equation}\label{equ:KLE_f}
    f(s, \theta) \approx \hat{f}(s, \theta):= \bar{f}(s) + \sum_{i = 1}^\Nq
            \sigma_i f_i(\theta) \phi_i(s),
            \quad \text{with } \sigma_i = \sqrt{\lambda_i(\Cq)},
\end{equation}
where $\bar{f}$ is the mean of the process and the KL modes $f_i$ are given by
\begin{equation}\label{equ:KLmodes}
    f_i(\theta) = \fac
\int_\X (f(s, \theta) - \bar{f}(s)) \phi_i(s) \, ds,
\quad i = 1, \ldots, \Nq.
\end{equation}

In many applications of interest, where the process $f$ is defined in terms
of the solution of a differential equation, the eigenvalues
$\lambda_i(\Cq)$ decay rapidly, and thus a small $\Nq$ can be
afforded. Such processes, which we refer to as low-rank, are 
common in physical and biological applications.
Computing the KL expansion numerically
can be accomplished e.g., using Nystr\"{o}m's method, which is the
approach taken in the numerical experiments in the present work.  We refer
to~\cite{AlexanderianReeseSmithEtAl18}, for a convenient reference for
numerical computation of KL expansions using Nystr\"{o}m's method.  
We point out that this process requires approximating the covariance function 
of $f$, through sampling, when solving the eigenvalue problem for $\{\lambda_i(\Cq)\}_{i \geq 1}$
and the corresponding eigenvectors $\{\phi_i\}_{i \geq 1}$. 
This computation requires an ensemble of model
evaluations $\{f(\cdot, \theta^k)\}_{k = 1}^\Ns$. Typically a modest sample
size $\Ns$ is sufficient for computing the dominant eigenpairs of $\Cq$. This 
is demonstrated in our numerical results in \cref{sec:numerics}.

The approximate model $\hat{f}$ can then be used as a surrogate for $f$ for the
purposes of sensitivity analysis.  Specifically we compute the functional DGSMs
of $\hat{f}$ as a proxy for those of $f$. The computation of functional DGSMs
for $\hat{f}$ and the DGSM-based bound on functional Sobol' indices is
facilitated by \cref{thm:finite_rank}.

The expression for the functional DGSM
given in \cref{thm:finite_rank} requires computing 
DGSMs for the KL modes $f_i$, $i = 1, \ldots, \Nq$, which are
scalar-valued random variables. 
Differentiability of $f_i$ can be established by requiring 
certain boundedness assumptions on the partial 
derivatives. We consider a generic KL mode, which we denote by 
\begin{equation}\label{equ:F}
F(\theta) := \int_\X (f(s, \theta) - \bar{f}(s)) v(s) \, ds, 
\end{equation}
where we use a generic $v \in L^2(\X)$ in the place 
of the eigenvectors.
\begin{proposition}\label{prp:KL_mode_differentiability}
Let $f$ be a process satisfying \cref{assump:main}, and
moreover assume partial derivatives of $f$ with respect to 
$\theta_j$, $j = 1, \ldots, \Np$ satisfy 
\begin{equation}\label{equ:bdd_deriv}
    \left|\frac{\partial f}{\partial \theta_j}(s, \theta)\right| \leq z_j(s),
\quad \text{ for all } (s, \theta) \in \X \times \Theta,
\end{equation}
where $z_j \in L^2(\X)$, $j=1, \ldots, \Np$.
Let $F$ be as in~\cref{equ:F}. Then, 
for $j = 1, \ldots, \Np$,
\begin{enumerate}[label=(\alph*)]
\item 
$\displaystyle\frac{\partial F}{\partial \theta_j}(\theta) 
= \int_\X \frac{\partial f}{\partial \theta_j}(s, \theta) v(s) \, ds$,
\item and $\displaystyle\frac{\partial F}{\partial \theta_j} \in L^2(\Theta)$. 
\end{enumerate}
\end{proposition}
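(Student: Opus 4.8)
The plan is to establish part (a) by justifying differentiation under the integral sign in the definition~\cref{equ:F} of $F$, and to obtain part (b) as a direct consequence of the uniform bound~\cref{equ:bdd_deriv} together with the finiteness of the measure $\mu$. Throughout, I would use that $\bar{f}$ is independent of $\theta$, so it drops out of every difference quotient and of the derivative.

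For part (a), I would fix $\theta \in \Theta$ and $j \in \{1, \ldots, \Np\}$ and examine the difference quotient
\[
\frac{F(\theta + h e_j) - F(\theta)}{h} = \int_\X \frac{f(s, \theta + h e_j) - f(s, \theta)}{h}\, v(s)\, ds,
\]
where $e_j$ is the $j$-th coordinate vector and $h \neq 0$ is small. By \cref{assump:main}(b) the map $\theta_j \mapsto f(s, \theta)$ is differentiable for every fixed $s$, so the mean value theorem applies pointwise in $s$ and lets me write the integrand as $\frac{\partial f}{\partial \theta_j}(s, \theta^h)\, v(s)$ for some $\theta^h$ on the segment between $\theta$ and $\theta + h e_j$. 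The uniform bound~\cref{equ:bdd_deriv} then dominates the integrand in absolute value by $z_j(s)\,|v(s)|$, independently of $h$. Since $z_j, v \in L^2(\X)$ and $\X$ is compact, the Cauchy--Schwarz inequality gives $z_j |v| \in L^1(\X)$, so this dominating function is integrable. As $h \to 0$ the integrand converges pointwise (in $s$) to $\frac{\partial f}{\partial \theta_j}(s, \theta)\, v(s)$, and the dominated convergence theorem yields the formula in (a).

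For part (b), I would bound $\frac{\partial F}{\partial \theta_j}$ directly from the expression just obtained. Applying the triangle inequality and then~\cref{equ:bdd_deriv} inside the integral gives, for every $\theta \in \Theta$,
\[
\Big| \frac{\partial F}{\partial \theta_j}(\theta) \Big| \leq \int_\X \Big| \frac{\partial f}{\partial \theta_j}(s, \theta) \Big| |v(s)|\, ds \leq \int_\X z_j(s)\, |v(s)|\, ds \leq \norm{z_j}_{L^2(\X)}\,\norm{v}_{L^2(\X)},
\]
the last step again being Cauchy--Schwarz. Thus $\frac{\partial F}{\partial \theta_j}$ is bounded by a constant independent of $\theta$; since $\mu$ is a probability measure on $\Theta$, integrating the square of this bound immediately gives $\frac{\partial F}{\partial \theta_j} \in L^2(\Theta)$. (Alternatively, one could integrate $|\frac{\partial F}{\partial \theta_j}(\theta)|^2 \leq \norm{v}_{L^2(\X)}^2 \int_\X (\frac{\partial f}{\partial \theta_j}(s,\theta))^2\,ds$ over $\Theta$ and invoke \cref{assump:main}(c), but the boundedness route is cleaner.)

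The only genuinely delicate point is the interchange of limit and integral in part (a). The hard part will be verifying the hypotheses of the dominated convergence theorem, specifically producing a single integrable majorant valid for all small $h$; this is exactly where the pointwise mean value theorem and the uniform bound~\cref{equ:bdd_deriv} do the essential work, reducing everything to the integrability of $z_j |v|$, which the compactness of $\X$ and the $L^2$ memberships secure. The remaining manipulations are routine applications of Cauchy--Schwarz.
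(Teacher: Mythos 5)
Your proof is correct. On part (a) it is in substance the same as the paper's: where you run the mean value theorem plus dominated convergence argument explicitly, the paper simply cites the standard theorem on differentiation under the integral sign (Folland, Theorem 2.27), whose proof is precisely your argument; the verification work---dominating the integrand by $z_j|v|$ and using Cauchy--Schwarz to get $z_j|v|\in L^1(\X)$---is identical. (Minor point: compactness of $\X$ plays no role there; the product of two $L^2$ functions is in $L^1$ by Cauchy--Schwarz alone.) On part (b) you genuinely diverge. You use the uniform bound \cref{equ:bdd_deriv} to conclude $\bigl|\frac{\partial F}{\partial \theta_j}(\theta)\bigr| \leq \norm{z_j}_{L^2(\X)}\norm{v}_{L^2(\X)}$ for every $\theta$, i.e.\ $\frac{\partial F}{\partial \theta_j}\in L^\infty(\Theta)$, and then invoke finiteness of the probability measure $\mu$. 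The paper instead applies Cauchy--Schwarz to the formula from (a) and integrates over $\Theta$, bounding $\int_\Theta \bigl(\frac{\partial F}{\partial \theta_j}\bigr)^2\,\mu(d\theta)$ by $\norm{v}_{L^2(\X)}^2\,\bigl\|\frac{\partial f}{\partial \theta_j}\bigr\|_{L^2(\X\times\Theta)}^2$, which is finite by \cref{assump:main}(c). The trade-off: your route yields the stronger conclusion of essential boundedness but leans entirely on \cref{equ:bdd_deriv}, whereas the paper's argument for (b) needs only square integrability of the partial derivatives, so it remains valid under the relaxed, locally bounded hypotheses discussed after the proposition. Indeed, the paper's closing remark explicitly acknowledges your $L^\infty$ observation, noting the authors chose the $L^2$ estimate deliberately since that is the form needed for derivative-based GSA.
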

\begin{proof}
Showing (a) amounts to establishing the standard requirements
for differentiating under the integral sign; see e.g.,~\cite[Theorem 2.27]{Folland99}.
Without loss of generality, we assume 
$\bar{f} \equiv 0$.
First, we note that for each $\theta \in \Theta$,
\[
    \int_\X \left| \frac{\partial f}{\partial \theta_j}(s, \theta) v(s)\right| \, ds
    \leq 
\Big[\int_\X \Big(\frac{\partial f}{\partial \theta_j}(s, \theta)\Big)^2 \, ds\Big]^{1/2}
\Big[\int_\X v(s)^2 \, ds\Big]^{1/2} < \infty,
\quad j = 1, \ldots, \Np,
\] 
where we used the Cauchy--Schwarz inequality and
\cref{assump:main}(b),(c). Next, we note that 
$|\frac{\partial f}{\partial \theta_j}(s, \cdot) v(\cdot)| \leq z_j |v|$ and 
applying the Cauchy--Schwartz inequality, we get that $\int_\X |z_j(s)v(s)| \, ds < \infty$.
Thus, assertion (a) follows from~\cite[Theorem 2.27]{Folland99}. The assertion 
(b) of the proposition follows from, \cref{assump:main}(c) and
\begin{multline*}
\int_\Theta \Big(\frac{\partial F}{\partial \theta_j}(\theta)\Big)^2 \, 
\mu(d\theta)
= 
\int_\Theta \Big( \int_\X \frac{\partial f}{\partial \theta_j}(s, \theta) v(s) \, ds
\Big)^2 \, \mu(d\theta)\\
\leq 
\int_\Theta 
\Big[\int_\X \Big(\frac{\partial f}{\partial \theta_j}(s, \theta)\Big)^2 \,ds\Big]
\Big[ \int_\X v(s)^2 \, ds\Big]
\mu(d\theta)
=\| v\|_{L^2(\X)}^2 
\left\| \frac{\partial f}{\partial \theta_j}\right\|^2_{L^2(\X \times \Theta)}
<\infty
\end{multline*}
\end{proof}

Note that the assumption~\cref{equ:bdd_deriv} can in fact be used to conclude
$\frac{\partial F}{\partial \theta_j} \in L^\infty(\Theta)$; we showed square
integrability of these partial derivatives for clarity as this is the result
needed for the purposes of derivative-based GSA.  Note also that the
assumption~\cref{equ:bdd_deriv} can be relaxed in the statement of the
proposition by requiring local (in $\Theta$) boundedness of the partial
derivatives by square integrable (in $\X$) functions. 

The above framework, based on low-rank KL expansions, is useful as 
it provides a natural setting for deploying an adjoint-based
approach for computing the derivatives of the 
KL modes, in models governed by
PDEs (or ODEs). The computational advantage of 
adjoint-based approach is immense: the cost of computing the 
gradient of $f_i$'s does not scale with the dimension of
the input parameter $\theta$. This leads to a computationally 
efficient and scalable framework for computing DGSMs. We detail this 
approach in the next section for models governed by elliptic PDEs 
and demonstrate its effectiveness 
in numerical examples in~\cref{sec:numerics}.

\section{Adjoint-based GSA for models governed by elliptic PDEs} \label{sec:DGSM_PDE}
We consider a linear elliptic PDE with a random coefficient function:
\begin{equation}\label{equ:fwd}
    \begin{aligned}
      -\nabla \cdot (\kappa  \nabla p) &= b \quad \text{ in }\D, \\
                    p  &= g \quad \text{ on } \GD, \\
                    \kappa \nabla p \cdot n &= h \quad \text{ on } \GN.
    \end{aligned}
\end{equation}
The coefficient field $\kappa$ is modeled as a log-Gaussian random field whose
covariance operator is given by $\Cp$. As is common practice in 
the uncertainty quantification community, we represent the random field
coefficient $\kappa$ using a truncated KL expansion. Namely, 
let 
\[
\hat{a}(x, \theta) = \bar{a}(x) + \sum_{j = 1}^\Np \sqrt{\lambda_j(\Cp)} \theta_j e_j(x)
\]
be a truncated KL expansion of the log-permeability field, $a(x, \theta) = \log \kappa(x, \theta)$. 
We consider the weak form of the PDE. The associated 
trial and test function spaces are, respectively,
\[
   \Vs = \{ v \in H^1(\D) : v\mid_{\GD} = g\}, \quad 
   \Vo = \{ v \in H^1(\D) : v\mid_{\GD} = 0\}.
\]
The weak form of~\cref{equ:fwd} is as follows: find $p \in \Vs$ such that 
\begin{equation}\label{equ:fwd_weak}
    \ip{ e^{\hat{a}(x, \theta)}  \nabla p}{\nabla \ut{p}} = \ip{b}{\ut{p}} + \ipg{h}{\ut{p}}, \quad \text{for all }
    \ut{p} \in \Vo,
\end{equation}
where $\ip{\cdot}{\cdot}$ is the $L^2(\D)$ inner product, and
$\ipg{\cdot}{\cdot}$ is $L^2(\GN)$ inner product. 
Let $\X$ a closed subset of $\D$, and let
$\Q : L^2(\D) \to L^2(\X)$ be the restriction operator
\[
\Q u = \restr{u}{\X}.
\]
Below we also need the adjoint $\Q^*$ of $\Q$:
it is straightforward to see that 
$\Q^*:L^2(\X) \to L^2(\D)$ is given by 
\[
    (\Q^* u)(x) = \begin{cases} u(x), \quad x \in \X\\ 0, \quad x \notin \X.\end{cases}.
\]
We consider the QoI, 
\[
   f(x, \theta) = \Q p(x, \theta), 
\]
and consider its truncated KL expansion
\begin{equation}\label{equ:KLE_QoI}
    f(x, \theta) \approx \hat{f}(x, \theta) := \bar{f}(x) + \sum_{i = 1}^\Nq 
            \sigma_i f_i(\theta) \phi_i(x),
            \quad \text{with } \sigma_i = \sqrt{\lambda_i(\Cq)}. 
\end{equation}
where
\[
    f_i(\theta) = \fac 
\int_\X \big(f(x, \theta) - \bar{f}(x)\big) \phi_i(x) \, dx 
=\fac \int_\X \big(\Q p(x, \theta) - \bar{f}(x)\big)\phi_i(x) \, dx,
\]
where $p$ is the solution of~\cref{equ:fwd_weak}. 
We consider adjoint-based computation of 
$\frac{\partial f_i}{\partial \theta_j}$ for $i,j \in \{1, \ldots, \Nq\} \times 
\{1, \ldots, \Np\}$. 

\textbf{Computing gradient of $f_i$'s}. 
To compute the gradient we follow a formal Lagrange approach. 
We consider the Lagrangian
\[
   \L(p, \theta, q) = \fac \int_\X (\Q p - \bar{f})\phi_i \, dx 
   + \ip{ e^{\hat{a}(x, \theta)}  \nabla p}{\nabla q} - \ip{b}{q} - \ipg{h}{q}.
\]
Here $q$ is a Lagrange multiplier, which in the present context is referred to as the 
adjoint variable.
Taking variational derivatives of $\L$ with respect to $q$, and
$p$, give the state and the adjoint equations, respectively. 
In particular, the adjoint equation is found 
by considering
\[
   \frac{d}{d\epsilon} \L(p + \epsilon \ut{p}, \theta, q) \mid_{\epsilon = 0}~ = 0, \quad \text{for all } \ut{p} \in \Vo.
\]
This gives, 
\[
   \fac \ip{\Q\ut{p}}{\phi_i}_\X +  \ip{ e^{\hat{a}(x, \theta)}  \nabla \ut{p}}{\nabla q} = 0, \quad \text{for all } \ut{p} \in \Vo.
\]
The weak form of the adjoint equation can be stated as:
find $q \in \Vo$ such that
\[
\ip{ e^{\hat{a}(x, \theta)}  \nabla q}{\nabla \ut{p}}
= - \fac \ip{\Q^*\phi}{\ut{p}}, \quad \text{for all } \ut{p} \in \Vo.
\]
The strong form of the adjoint equation is 
\begin{equation}\label{equ:adj}
    \begin{aligned}
      -\nabla \cdot (\kappa  \nabla q) &= -\fac\Q^*\phi_i \quad \text{ in }\D, \\
                    q  &= 0 \quad \text{ on } \GD, \\
                    \kappa \nabla q \cdot n &= 0 \quad \text{ on } \GN.
    \end{aligned}
\end{equation}
Letting $p$ and $q$ be the solutions of the state and adjoint 
equations respectively, 
\begin{equation}\label{equ:grad}
   (\nabla_\theta f_i)^\top \ut{\theta} =
   \frac{d}{d\epsilon} \L(p, \theta + \epsilon \ut{\theta}, q) \mid_{\epsilon = 0}
  ~= \ip{ (\hat{a}(x, \ut{\theta}) -\bar{a}(x)) e^{\hat{a}(x, \theta)}  \nabla p}{\nabla q}, 
   \quad 
   \ut{\theta} \in \mathbb{R}^\Np.
\end{equation}
In particular, letting $\ut{\theta}$ be the $j$th coordinate direction in 
$\R^\Np$, we get
\[
   \frac{\partial f_i}{\partial \theta_j} = 
   \sqrt{\lambda_j(\Cp)} \ip{ e_j e^{\hat{a}(x, \theta)}  \nabla p}{\nabla q}.
\]

We can also consider a QoI of the form
\[
    f(\cdot, \theta) = \restr{p(\cdot,\theta)}{\GN},
\]
as done in one of our numerical examples in~\cref{sec:numerics}.
Computing the gradient for this QoI can be done in a similar way as above, 
except, in this case the adjoint equation takes the form:
\begin{equation}\label{equ:adj_trace}
    \begin{aligned}
      -\nabla \cdot (\kappa  \nabla q) &= 0 \quad \text{ in }\D, \\
                    q  &= 0 \quad \text{ on } \GD, \\
                    \kappa \nabla q \cdot n &= -\fac \phi_i \quad \text{ on } \GN.
    \end{aligned}
\end{equation}

Notice that evaluating the adjoint-based expression for the 
gradient of $f_i$, 
requires two PDE solves: we need to 
solve the state (forward) equation~\cref{equ:fwd} 
and the adjoint equation~\cref{equ:adj}. 
Moreover, the forward solves can be reused across the KL modes, 
and thus, computing the gradient of $\hat{f}$ in~\cref{equ:KLE_QoI}
requires $1 + \Nq$ PDE solves, independently of the dimension $\Np$ of the uncertain
parameter $\theta$. As shown in our numerical examples, a small $\Nq$ often 
results in suitable representations of the QoI $f$, due to the, often 
observed, rapid decay
of the eigenvalues $\lambda_i(\Cq)$.

\textbf{DGSM computation}.
In practice, the KL expansion should be computed numerically.  As mentioned
before, this can be accomplished using Nystr\"{o}m's method, which is the
approach taken in the present work, and requires an ensemble of model
evaluations $\{f(\cdot, \theta^k)\}_{k = 1}^\Ns$, typically with a modest sample size
$\Ns$. The model evaluations can be used to compute the approximate KL expansion
following~\cite[Algorithm 1]{AlexanderianReeseSmithEtAl18}. This same set of 
samples can be used for computing the DGSMs, $\nu_j(f_i)$, $j = 1, \ldots, 
\Np$, $i = 1, \ldots, \Nq$. These require an additional adjoint solve per
KL mode, and for each sample point $\theta^k$, $k = 1, \ldots, \Ns$.
Thus, the overall computational cost is
$\Ns(1 + \Nq)$ PDE solves. Note that the computational cost, in terms of 
PDE solves, is  
independent of the dimension $\Np$ of the uncertain parameter vector.
To compute the DGSM-based bound on functional Sobol' indices we also need
to compute $\trace(\Cq)$; this can be approximated accurately by summing
the dominant eigenvalues of $\Cq$, available from computing the KL expansion of 
$f$. The steps for DGSM computation using the present strategy are outlined
in~\cref{alg:DGSM_KLE}.

\newcommand{\Compute}{\textbf{compute}~}
\newcommand{\getKLE}{\textbf{getKLE}}
\newcommand{\Solve}{\textbf{solve}~}
\renewcommand{\algorithmicrequire}{\textbf{Input:}}
\renewcommand{\algorithmicensure}{\textbf{Output:}}
\begin{algorithm}
\caption{Algorithm for computing $\mathfrak{B}_j := \N_j(f; \X)/\trace(\Cq)$, 
$j = 1, \ldots, \Np$.}
\label{alg:DGSM_KLE}
\begin{algorithmic}[1]
\REQUIRE Parameter samples $\{ \theta^k\}_{k=1}^\Ns$ 
\ENSURE  Approximate DGSM-based bounds 
$\hat{\mathfrak{B}}_j$, $j = 1, \ldots, \Np$ 

\FOR{$k = 1, \ldots, \Ns$}
\STATE \Solve forward model~\cref{equ:fwd} with $\kappa = \exp{\hat{a}(\cdot, \theta^k)}$ 
\STATE \Compute QoI $f(\cdot, \theta^k)$
\ENDFOR
\STATE $[\{\lambda_i\}_{i=1}^\Nq, \{\phi_i\}_{i=1}^\Nq] = \getKLE(\{f(\cdot, \theta^k)\}_{k=1}^\Ns)$
\FOR{$k = 1, \ldots, \Ns$}
\FOR{$i = 1, \ldots, \Nq$}
\STATE \Solve adjoint problem~\cref{equ:adj} with $\kappa = \exp{\hat{a}(\cdot, \theta^k)}$
\STATE \Compute $\frac{\partial f_i(\theta^k)}{\partial \theta_j}$, $j = 1, \ldots, \Np$
using~\cref{equ:grad}
\ENDFOR
\ENDFOR

\STATE \Compute $\hat\nu_j(f_i) = \frac{1}{\Ns} \sum_{k=1}^\Ns 
\left[ \frac{\partial f_i(\theta^k)}{\partial \theta_j}\right]^2$
\STATE \Compute $T = \sum_{i=1}^\Nq \lambda_i$
\STATE \Compute $\hat{\mathfrak{B}}_j = \sum_{i=1}^\Nq \lambda_i \hat{\nu}_j(f_i) / T$, 
$j = 1, \ldots, \Np$

\end{algorithmic}
\end{algorithm}
In step 5 of~\cref{alg:DGSM_KLE}, \getKLE~indicates a
procedure that given sample realizations of the process $f$, computes its KL
expansion numerically. As mentioned before, this can be done, e.g., using
Nystr\"{o}m's method; see e.g.,~\cite[Algorithm 1]{AlexanderianReeseSmithEtAl18}.

\section{Numerical examples}\label{sec:numerics}
In this section, we present three numerical examples. In 
\cref{sec:cholera}, we consider an example involving  a nonlinear ODE
system with a time-dependent QoI, which is used to illustrate functional DGSMs
and the DGSM-based bound derived in \cref{thm:bound_main}.
Sections~\ref{sec:poi_spe} and~\ref{sec:biotransport} concern models governed
by elliptic PDEs that have  spatially distributed QoIs in one and two space
dimensions, respectively. For the PDE-based examples we implement the 
adjoint-based GSA framework described in \cref{sec:DGSM_PDE} and
illustrate its effectiveness.
\subsection{Sensitivity analysis for a model of cholera epidemics}
\label{sec:cholera} Consider the cholera model developed
in~\cite{HartleyEtAl2005}. We analyze the sensitivity of the infected
population as a function of time to uncertainties in model parameters.  This
problem was also studied in~\cite{AlexanderianGremaudSmith17} within the
context of variance-based GSA for time-dependent processes. 

\subsubsection{Model description}
A population of $N_\text{pop}$ individuals is split into susceptible,
infectious, and recovered individuals, which are denoted by $S$, $I$, and $R$,
respectively.  The concentrations of highly-infectious bacteria, $B_H$ and 
lowly-infectious bacteria, $B_L$ are also considered. These concentrations are measured in cells per milliliter.
According to the model developed in~\cite{HartleyEtAl2005}, 
the time-evolution of the state variables is governed by 
the following system of ODEs.
\newcommand{\ds}{\displaystyle}
\begin{equation}
\begin{array}{l}
  {\ds \frac{dS}{dt} = bN_\text{pop} - \beta_L S \frac{B_L}{\kappa_L + B_L} - \beta_H S \frac{B_H}{\kappa_H + B_H}  - b S} \\
  \noalign{\medskip}
  {\ds \frac{dI}{dt} =  \beta_L S \frac{B_L}{\kappa_L + B_L} + \beta_H S \frac{B_H}{\kappa_H + B_H} - (\gamma + b)I} \\
  \noalign{\medskip}
  {\ds \frac{dR}{dt} = \gamma I - b R} \\
  \noalign{\medskip}
  {\ds \frac{d B_H}{d t} = \xi I - \chi B_H} \\
  \noalign{\medskip}
  {\ds \frac{d B_L}{dt} = \chi B_H - \delta B_L}
\end{array}
\label{eq1}
\end{equation}
with initial conditions 
$(S(0), I(0), R(0), B_H(0), B_L(0)) = (S_0, I_0, R_0, B_{H_0},  B_{L_0})$.
The parameter units and nominal values from \cite{HartleyEtAl2005} are compiled in 
\cref{tbl:cholera_param}.  
We consider a 
total population of $N_\text{pop}=10{,}000$ and let the 
initial states be as follows: $S_0 = N_\text{pop} - 1$, $I_0 = 1$, $R_0 = 0$, and $B_{H_0} =
B_{L_0} = 0$. We solve the problem up to time  $T = 150$ using the 
\verb+ode45+ solver provided in \textsc{Matlab}~\cite{MATLAB}. 

\renewcommand{\arraystretch}{1.1}
\begin{table}[!tt]
\begin{center}
\begin{tabular}{l|c|c|c} \hline
Model Parameter & Symbol & Units & Values \\ \hline
Rate of drinking $B_L$ cholera & $\beta_L$ & $\frac{1}{\mbox{\rm week}}$ & 1.5 \\
Rate of drinking $B_H$ cholera & $\beta_H$ & $\frac{1}{\mbox{\rm week}}$  & 7.5\\
$B_L$ cholera carrying capacity & $\kappa_L$ & $\frac{\mbox{\rm \# bacteria}}{m\ell}$ & $10^6$  \\
$B_H$ cholera carrying capacity & $\kappa_H$ & $\frac{\mbox{\rm \# bacteria}}{m\ell}$ & $\frac{\kappa_L}{700}$ \\
Human birth and death rate & $b$ & $\frac{1}{\mbox{\rm week}}$  & $\frac{1}{1560}$ \\
Rate of decay from $B_H$ to $B_L$ & $\chi$ & $\frac{1}{\mbox{\rm week}}$ & $\frac{168}{5}$ \\
Rate at which infectious individuals & $\xi$  & $\frac{\mbox{\rm \# bacteria}}{\mbox{\rm \# individuals} \cdot m \ell \cdot \mbox{\rm week}}$ & 70 \\ \noalign{\vspace{-.1truein}}
spread $B_H$ bacteria to water & & & \\
Death rate of $B_L$ cholera & $\delta$ & $\frac{1}{\mbox{\rm week}}$ & $\frac{7}{30}$ \\
Rate of recovery from cholera & $\gamma$ & $\frac{1}{\mbox{\rm week}}$ & $\frac{7}{5}$ \\ \hline
\end{tabular}
\end{center}
\renewcommand{\arraystretch}{1.0}
\caption{Cholera model parameters from \cite{HartleyEtAl2005,AlexanderianGremaudSmith17}.}
\label{tbl:cholera_param}
\end{table}

To simplify the notation we use a generic vector $y \in \R^5$ 
to denote the 
state vector---$y = (y_1, y_2, y_3, y_4, y_5)^\top = (S, I, R, B_H, B_L)^\top$---and
denote the right hand side of the ODE system by $g(y; c)$, where
$c = (\beta_L, \beta_H, \kappa_L, b, \chi, \theta, \delta, \gamma)$ 
is the vector of uncertain model parameters.
The uncertainties in $c$ are parameterized by a random vector $\theta \in \R^8$
with iid $U(-1, 1)$ entries as follows:
\[
    c_i(\theta_i) = \frac12(a_i + b_i) + \frac12(b_i - a_i) \theta_i, \quad
                    i = 1, \ldots, 8,
\]
with $[a_i, b_i]$ the physical parameter ranges for $c_i$, adapted
from~\cite{AlexanderianGremaudSmith17}. 
The solution of the 
system is a random process, $y = y(t; \theta)$.
We focus on the infected population $I(t, \theta) = y_2(t; \theta)$, 
for $t \in [0, 150]$.
In~\cref{fig:cholera_infected}, we depict the time
evolution of $I(t, \theta)$ at the nominal parameter vector given by $\theta =
(0, 0, \ldots, 0)^\top \in \R^8$.  
\begin{figure}[ht]\centering
\includegraphics[width=.45\textwidth]{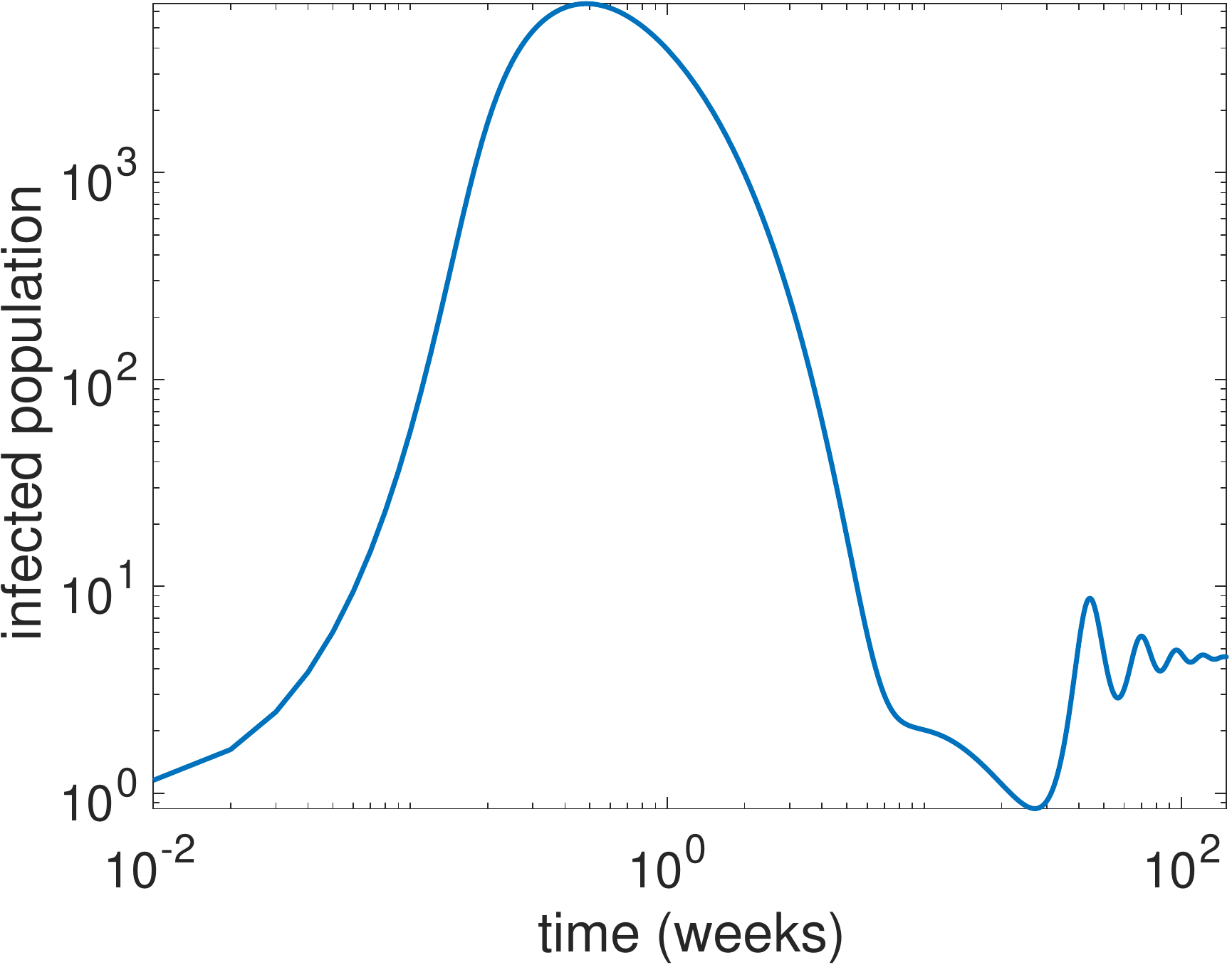}
\caption{The infected population $I(t; \theta)$ with $\theta = {0}$.}
\label{fig:cholera_infected}
\end{figure}
\subsubsection{Derivative-based GSA}
To compute the the partial derivatives 
$s_j(t; \theta) = \frac{\partial y(t; \theta)}{\partial \theta_j}$, $j = 1, \ldots, 8$,
needed for DGSM computation, we 
rely on the so called direct 
approach; this involves integrating the sensitivity 
equations~\cite{MalyPetzold96,SanduDaescuCarmichael03} along 
with the ODEs describing the system state. Specifically, we need to integrate 
the system
\[
    \begin{aligned}
    y' &= g(y; c(\theta)), \quad y(0) = y_0,\\
    s'_i &= {J}{s}_i + \frac{\partial g}{\partial \theta_i}, 
    \quad {s}_i(0) = {0}, \quad i = 1, \ldots, \Np.
    \end{aligned}
\] 
Here ${J}$ is the Jacobian 
$J_{ij} = \frac{\partial g_i}{\partial \theta_j} =  
\frac{\partial g_i}{\partial c_j}
\frac{\partial c_j}{\partial \theta_j}$, $i, j = 1, \ldots, \Np$.
In the present example this results in an ``augmented state vector''
$[y^\top \, s_1^\top \, \cdots \, s_8^\top]^\top \in \R^{45}$.

First, we consider the pointwise-in-time DGSMs, 
$\nu_j(f(t, \cdot))$, $j = 1, \ldots, 8$, for $t \in [0, 150]$ in
\cref{fig:GSA_overtime}~(left). To ensure an accurate estimate of 
the DGSMs, we approximate the integral over the parameters with a Monte
Carlo sample of size $10^5$. As seen in \cref{fig:GSA_overtime}~(left), 
these pointwise-in-time DGSMs are not 
straightforward to interpret. 
A clearer picture is obtained by considering 
\[
    \N_j(I; [0, t]) := \int_0^t \nu_j(I(s, \cdot))\, ds, \quad t \in [0, 150], 
\]
which amounts to computing the functional DGSMs over successively larger time
intervals; the results are reported in \cref{fig:GSA_overtime}~(right).

Finally, to get an overall picture, we compute the DGSM-based upper bounds on
the functional Sobol' indices, as given by \cref{thm:bound_main},
with $\X = [0, 150]$; see \cref{fig:cholera_bound}, where
we report the functional total Sobol' indices along with the DGSM-based bounds which are computed with Monte Carlo (MC) sample
sizes of $10^5$ and 100. Note that a small MC sample is very effective in
detecting the unimportant parameters. 

By \cref{thm:bound_main}, we know that a small DGSM-based bound for a
given parameter implies the corresponding total Sobol' index is small,
indicating the parameter is unimportant.  In the present experiment, we set an
\emph{importance threshold} of $0.05$. A parameter whose DGSM-based bound is
smaller than this importance threshold will be considered unimportant.  The
results reported in \cref{fig:cholera_bound} indicate that unimportant
parameters are given by $\theta_j$ with $j \in \{1, 4, 5, 7\}$. This is
consistent with results reported in~\cite{AlexanderianGremaudSmith17}, where
the statistical accuracy of the reduced model, obtained by fixing these unimportant
parameters was demonstrated numerically. Both panels of 
\cref{fig:cholera_bound} show the same information; however, in the right
panel we use a logarithmic scale in the vertical axis to clearly illustrate the
bound derived in \cref{thm:bound_main},  for the small functional Sobol'
indices.
\begin{figure}[ht]\centering
\includegraphics[width=.4\textwidth]{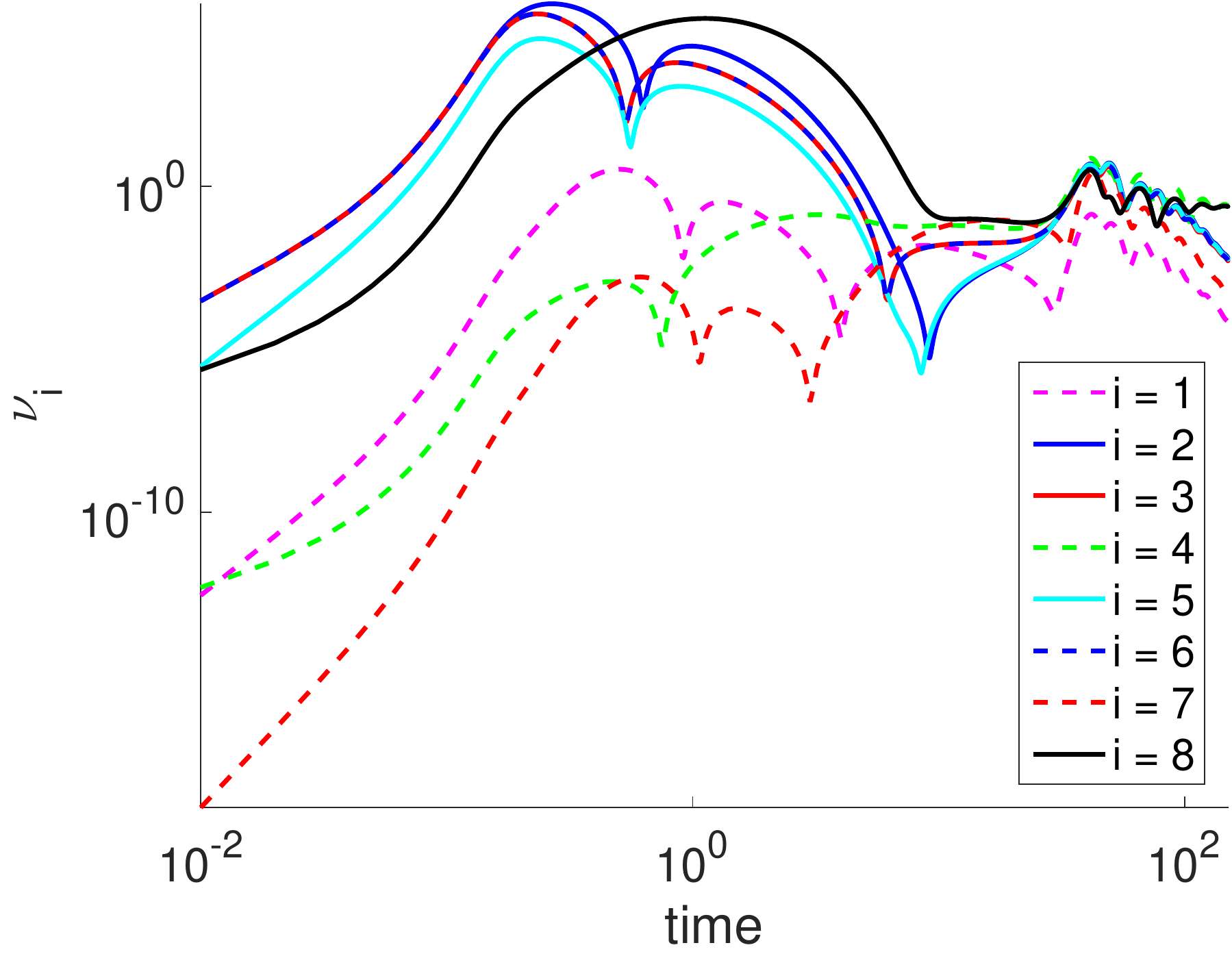}
\includegraphics[width=.4\textwidth]{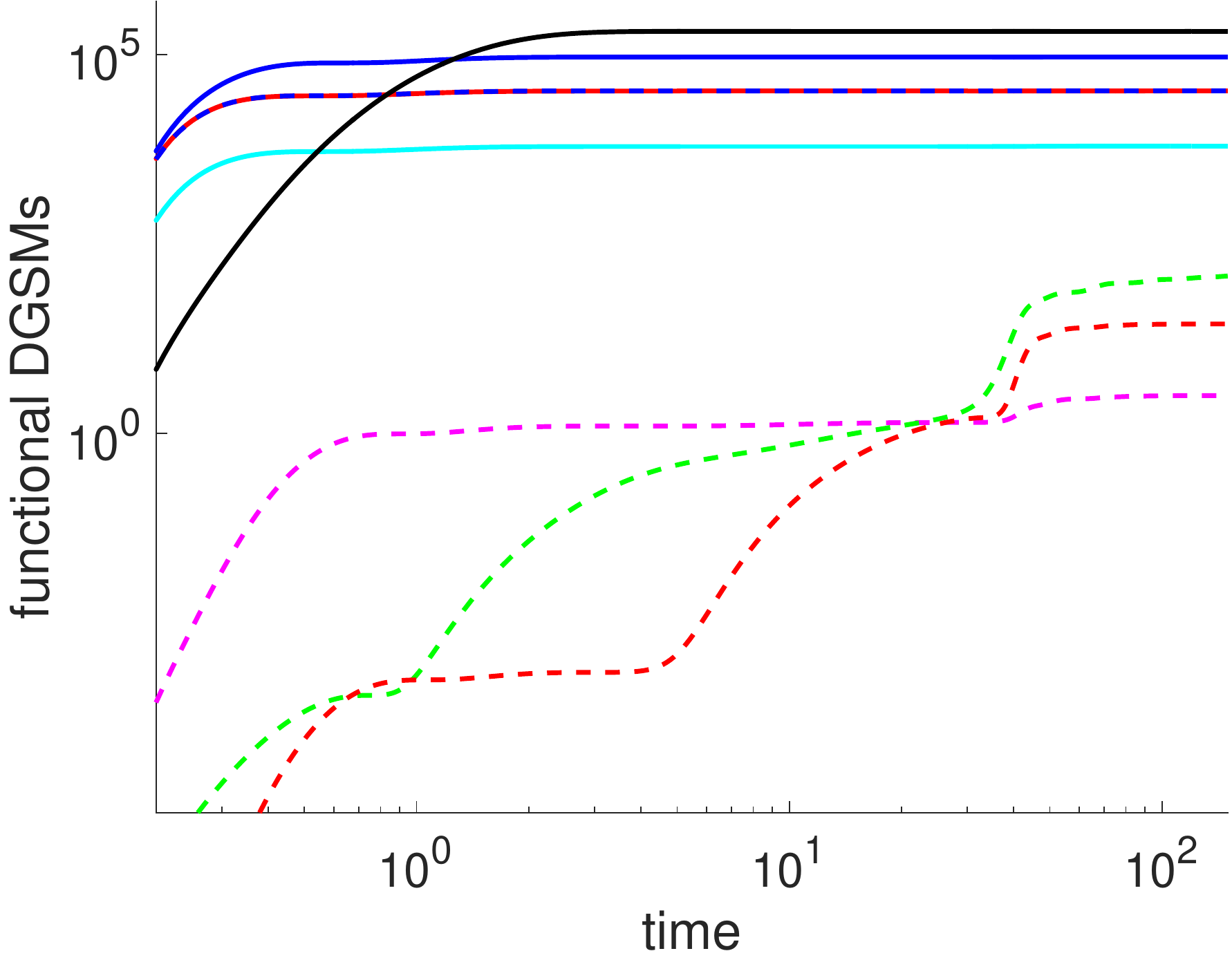}
\caption{Pointwise-in-time DGSMs $\nu_j(I(t, \cdot))$ (left) and 
functional DGSMs 
$\N_j(I; [0, t])$ (right) for $t\in[0,150]$.}
\label{fig:GSA_overtime}
\end{figure}
\begin{figure}[ht!]\centering
\includegraphics[width=.4\textwidth]{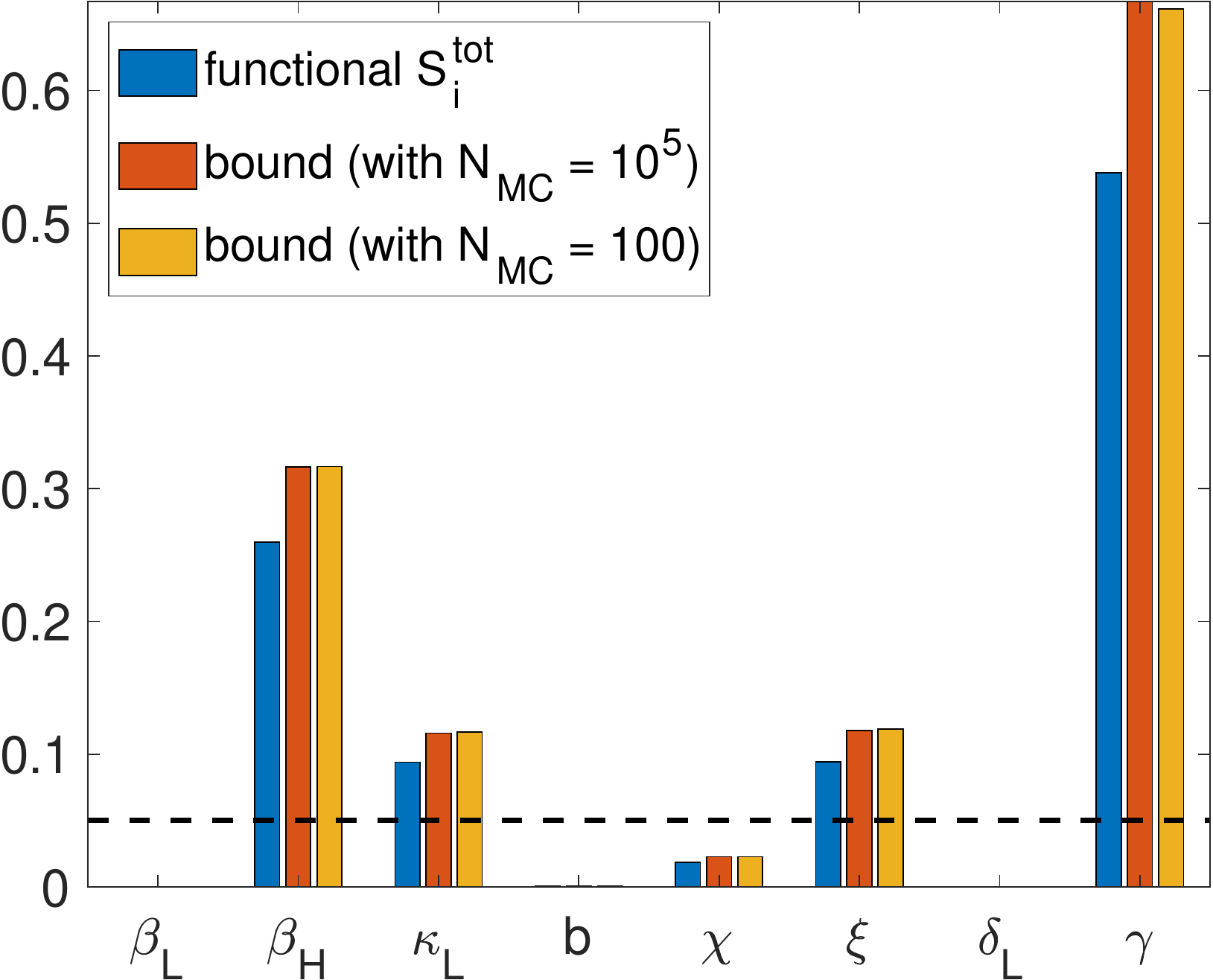}
\includegraphics[width=.4\textwidth]{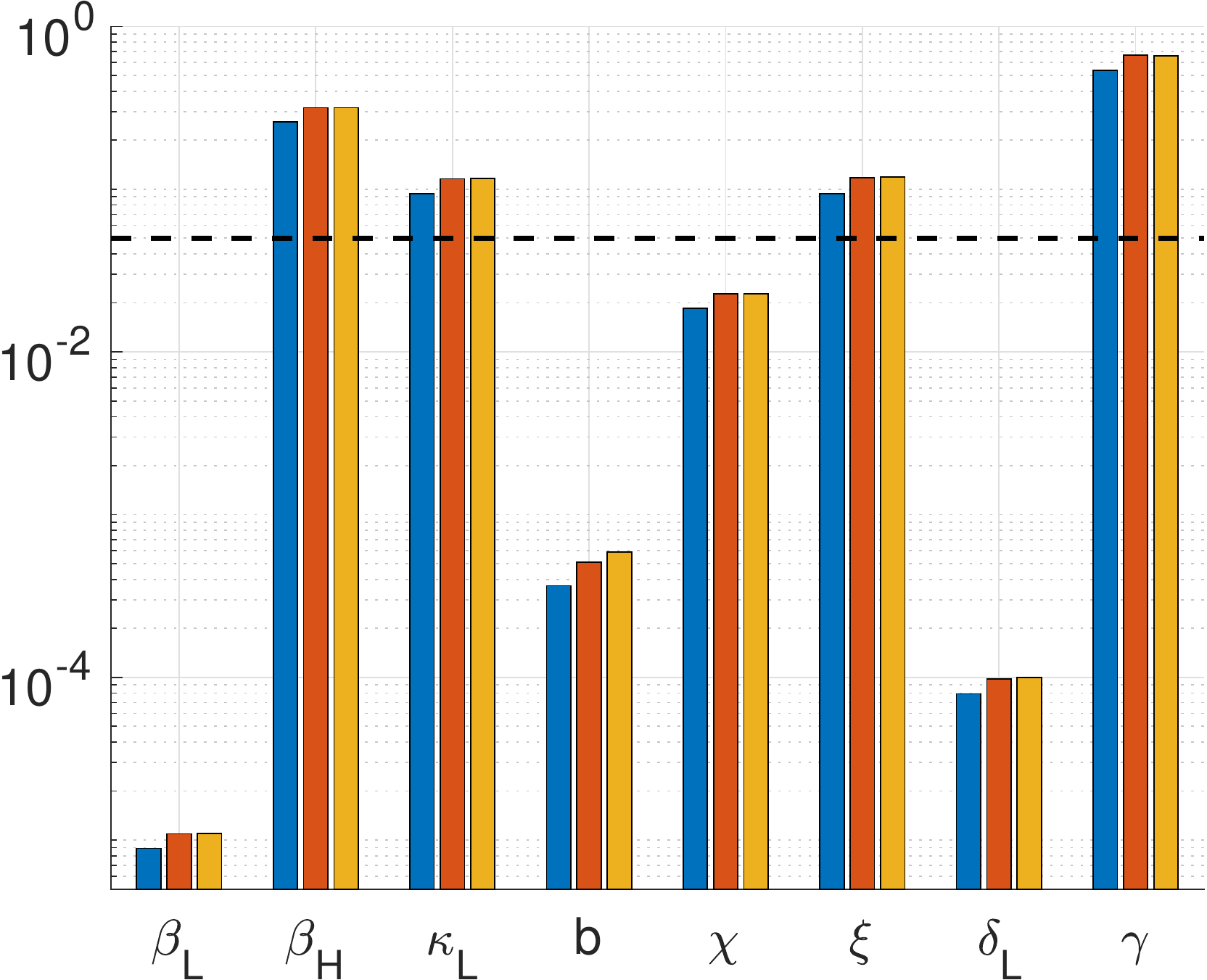}
\caption{Left: The functional Sobol indices and the corresponding bounds proven 
in \cref{thm:bound_main} for the cholera model; right: 
the same information as in the left plot, except we use log-scale on 
$y$-axis to clearly show $\ST_j(I; \X)$ and the corresponding bound, 
for small indices; the dashed black line indicates 
$y = 0.05$ that could be a reasonable tolerance to decide which random input
is unimportant.}
\label{fig:cholera_bound}
\end{figure}

\subsection{Sensitivity analysis in a subsurface flow
problem}~\label{sec:poi_spe}
In this section, we elaborate our proposed approach for sensitivity analysis
and dimension reduction on a model problem motivated by subsurface flow
applications. 

\subsubsection{Model description}
We consider the following equation modeling the fluid pressure in a 
single phase flow problem:
\begin{equation}\label{equ:poi_spe_detailed}
\begin{aligned}
-\nabla \cdot \Big(\frac{\kappa}{\eta} \, \nabla p\Big) &= b, &\quad \text{in } \D\\
p &= 0 &\quad \text{on } \Gamma_D,\\
\frac{\kappa}{\eta} \nabla p \cdot n &= 0, &\quad \text{on } \Gamma_N
\end{aligned}
\end{equation}
The domain is $\D = (-1, 1) \times (0, 1)$, $\Gamma_D$ is the union of 
the left, bottom, right parts of the boundary, 
and $\Gamma_N$ is the top boundary.
The right hand side function $b(x)$ is defined as a sum of mollified 
point sources, $b(x) = \sum_{i=1}^4 \alpha_i \delta_{x_i}(x)$, where 
\[
   \delta_{x_i}(x) = \frac{1}{2\pi L} \exp\left\{-\frac{1}{2L} \|x - x_i\|_2^2\right\},
\]
with
$x_1 = (-0.6, 0.2), x_2 = (-0.2, 0.4)$, and $x_3 = (0.2, .6)$, and
$x_4 = (0.6, 0.8)$. We chose $(\alpha_1, \alpha_2, \alpha_3, \alpha_4)
=(2, 5, 5, 2)$.
In this problem, we assume viscosity is $\eta = 1$ and 
consider uncertainties in the permeability field $\kappa$,
which is modeled as a log-Gaussian process: 
\begin{equation}\label{equ:param_process}
     \log \kappa(x, \omega) =: a(x, \omega) = \bar{a}(x) + \sigma_a z(x, \omega),
     \quad x \in \D, \omega \in \Omega,
\end{equation}
where $\Omega$ is an appropriate sample space, and 
$z(x, \omega)$ is a Gaussian process with mean zero and 
covariance function given by
\[
c_z(x, y) = 
\exp\left\{-\frac{|x_1 - y_1|}{\ell_x} - \frac{|x_2 - y_2|}{\ell_y}   \right\}, \quad x, y \in \D.
\]
In the present example, we use $\ell_x = 1/2$ and $\ell_y = 1/4$, implying
stronger correlations in the horizontal direction. The covariance 
operator $\Cp$ is defined by $\Cp u = \int_\X c_z(\cdot, y) u(y) \, dy$.
The mean of the process
$\bar{a}(x)$ is adapted from the simulated permeability data from the
Society for Petroleum Engineers (SPE) 2001 
Comparative Solutions Project~\cite{SPE}; see \cref{fig:spe_mean}.
For this problem we use $\sigma_a = 1.6$.
\begin{figure}\centering
\includegraphics[width=.5\textwidth]{./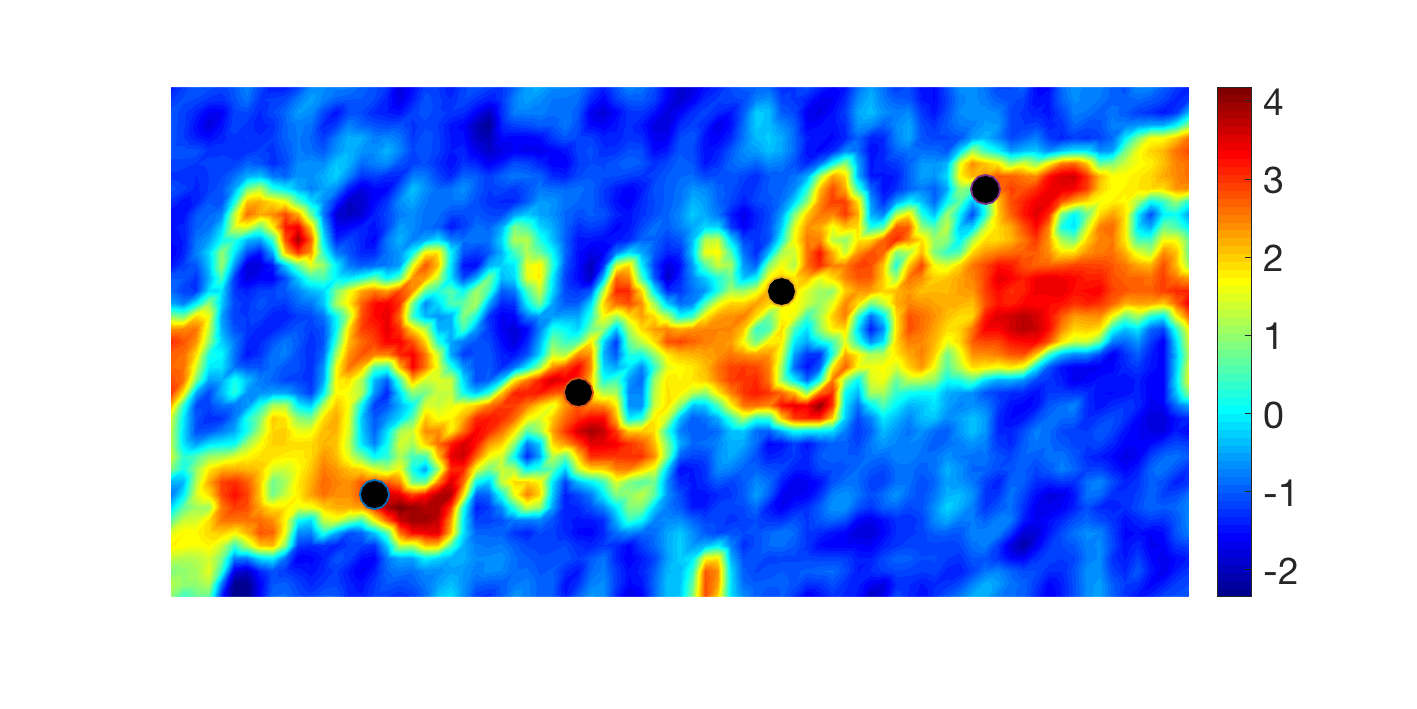}
\caption{Mean log-permeability field. The black dots indicate 
point source locations.}
\label{fig:spe_mean}
\end{figure}
We use a truncated KL expansion to represent the log-permeability field:
\begin{equation}\label{equ:truncated_KLE}
a(x, \omega) \approx
\bar{a}(x) + \sum_{k=1}^\Np \sqrt{\lambda_k(\Cp)} \theta_k(\omega) e_k(x),
\end{equation}
where $\theta_k,~k=1, 2, \ldots, \Np$ are independent standard normal random
variables, and $\lambda_k(\Cp)$ and $e_k(x)$ are the eigenpairs of the
covariance operator $\Cp$ of $a(x, \omega)$ (which is defined
in terms of the correlation function $c_z$ as before).
Note that when using the truncated KL expansion,
the uncertainty in the log permeability field is
characterized by the random vector $\theta=(\theta_1,\theta_2,\ldots,
\theta_\Np)^\top \in \R^\Np$.

To establish the truncation level, we consider the ratio
\[ 
 r_k = \frac{\sum_{i=1}^k \lambda_i}{\sum_{i=1}^\infty \lambda_i},
 \quad k = 1, 2, 3, \ldots, 
\]
where $\lambda_i$'s are the eigenvalues of the covariance operator $\Cp$. 
We depict the normalized
eigenvalues, $\lambda_k / \lambda_1$ in \cref{fig:spe_spectrum}~(left) and plot
the ratios $r_k$, for $k = 1, \ldots, 1000$. We find that $r_k > 0.9$, for $k =
126$; thus, we retain $\Np = 126$ in the KL expansion of the log-permeability
field. We will see shortly (see \cref{sec:spe_gsa}) that this is  an
unnecessarily large parameter dimension for the quantity of interest under
study.

As an illustration, 
in \cref{fig:perm_samp}, 
we show two realizations of the resulting log-permeability 
field (left) along with 
the corresponding pressure fields (right) obtained by 
solving \cref{equ:poi_spe_detailed}.
\begin{figure}\centering
\includegraphics[width=.4\textwidth]{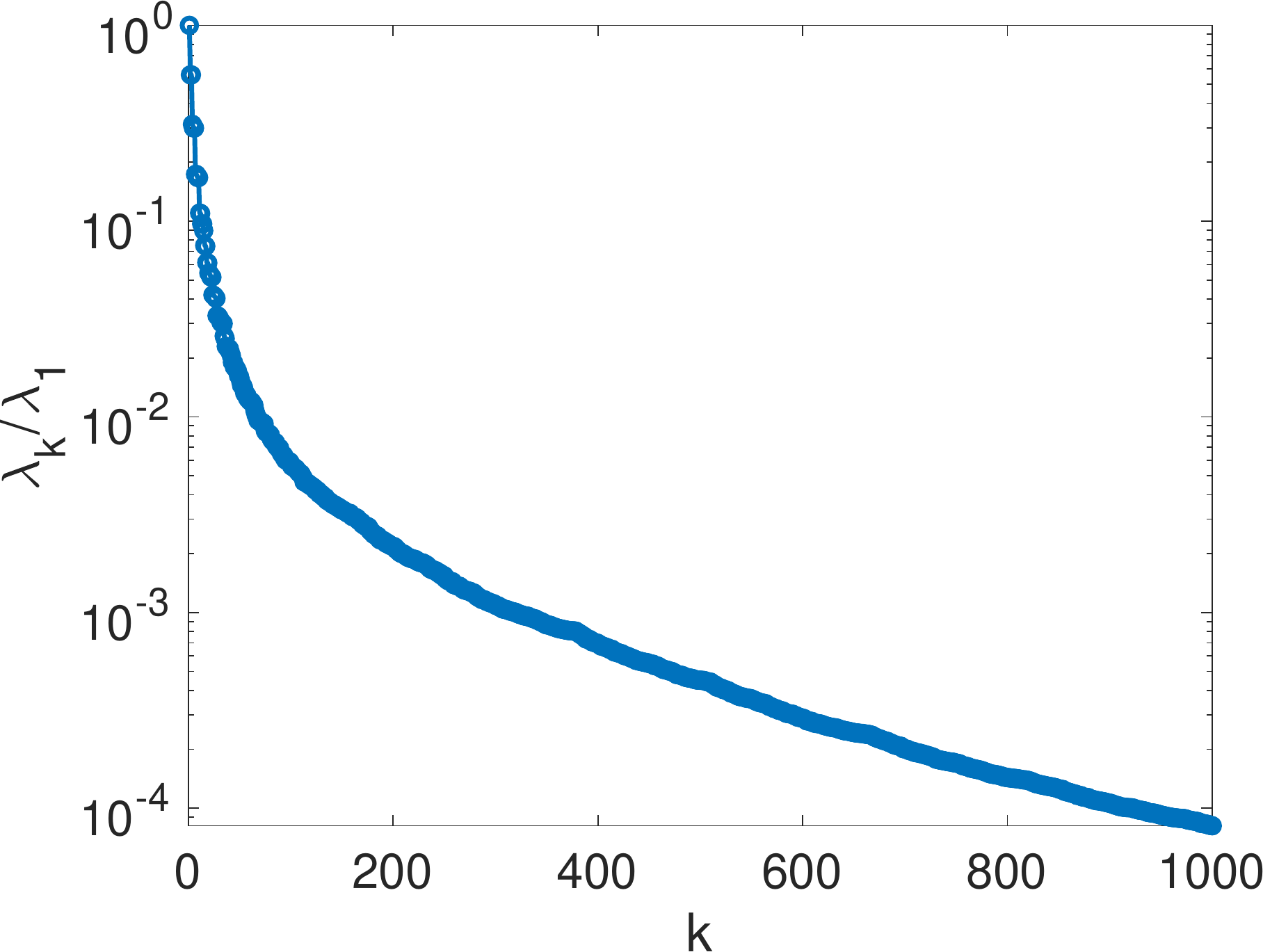}
\includegraphics[width=.4\textwidth]{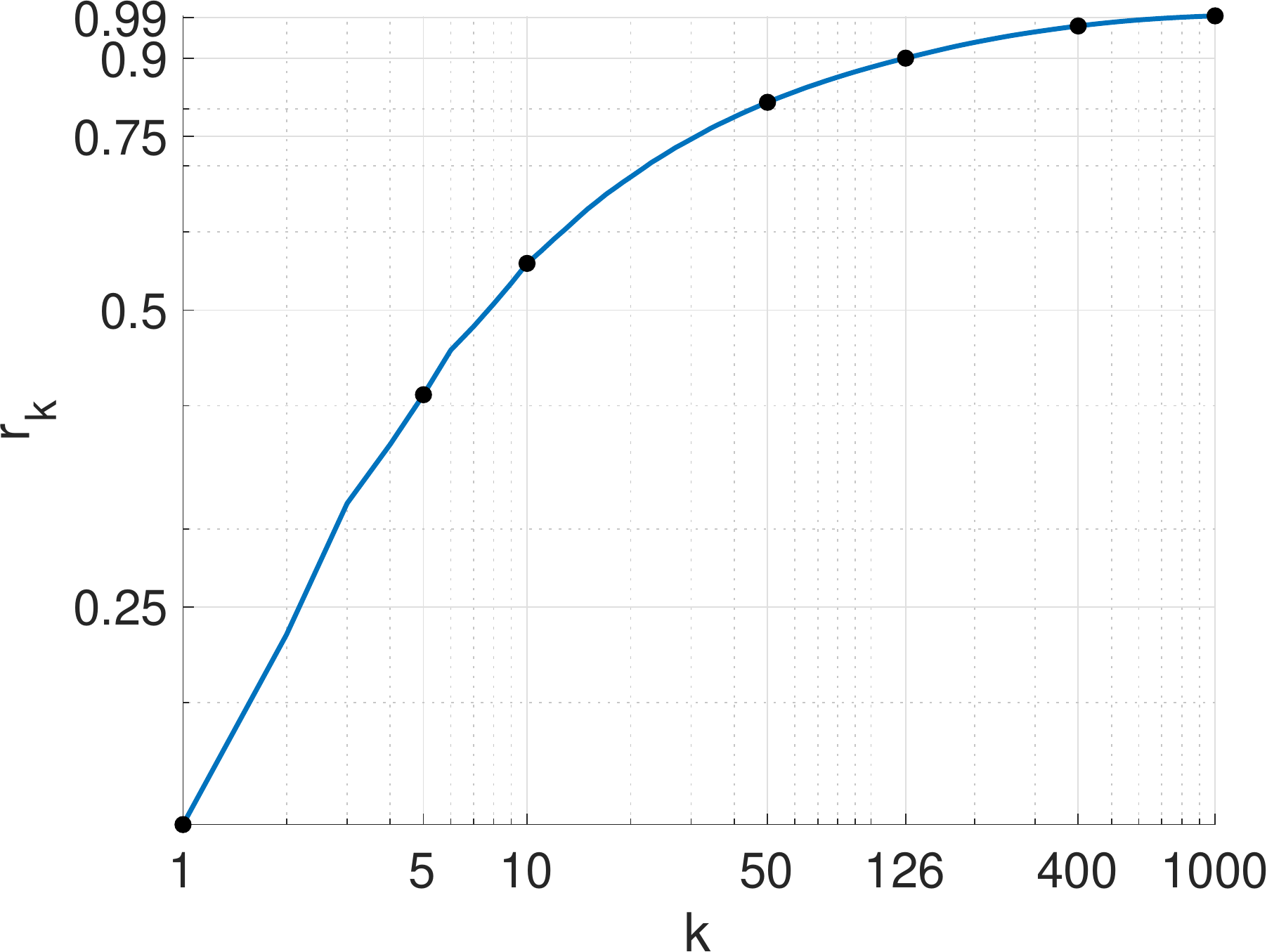}
\caption{Left: the normalized eigenvalues of the
log-permeability field covariance operator; right: the ratios $r_k$, 
for $k = 1, \ldots, 1000$.}
\label{fig:spe_spectrum}
\end{figure}

\begin{figure}\centering
\begin{tabular}{cc}
\includegraphics[width=.45\textwidth]{./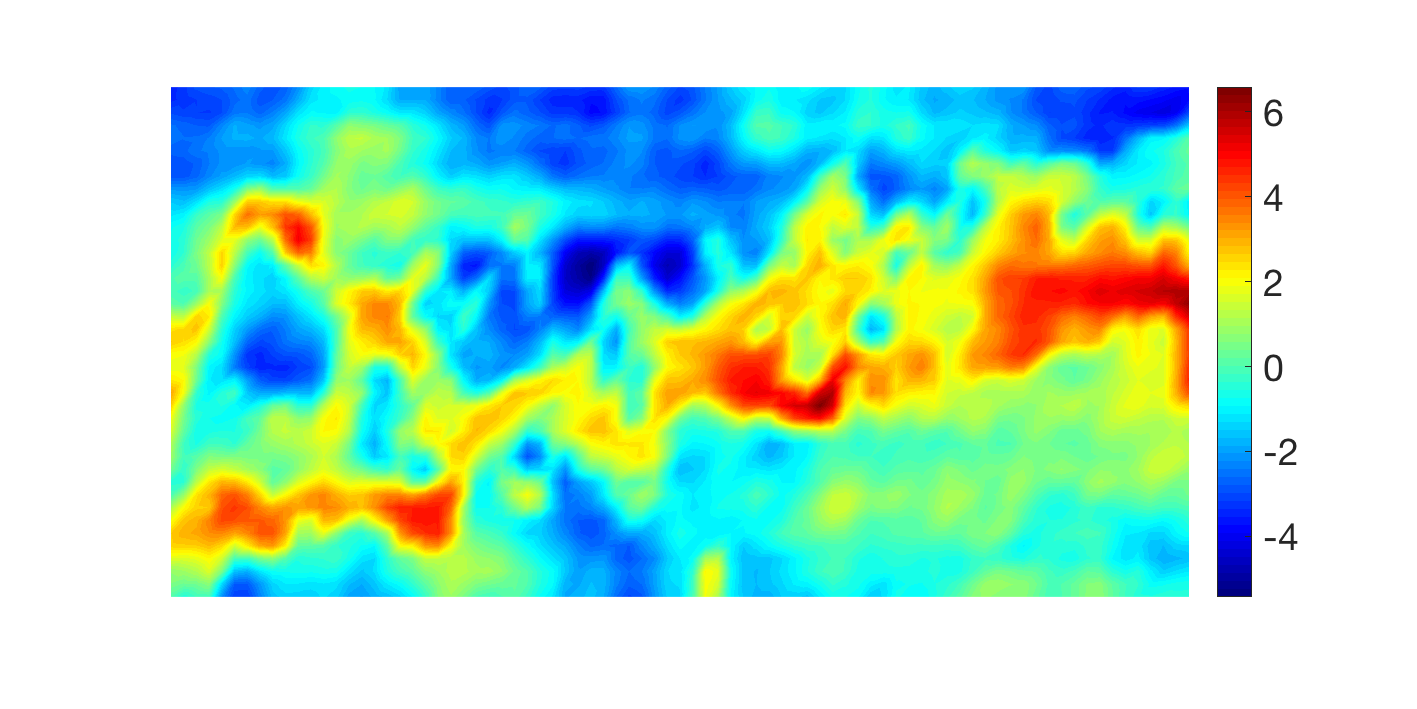}&
\includegraphics[width=.45\textwidth]{./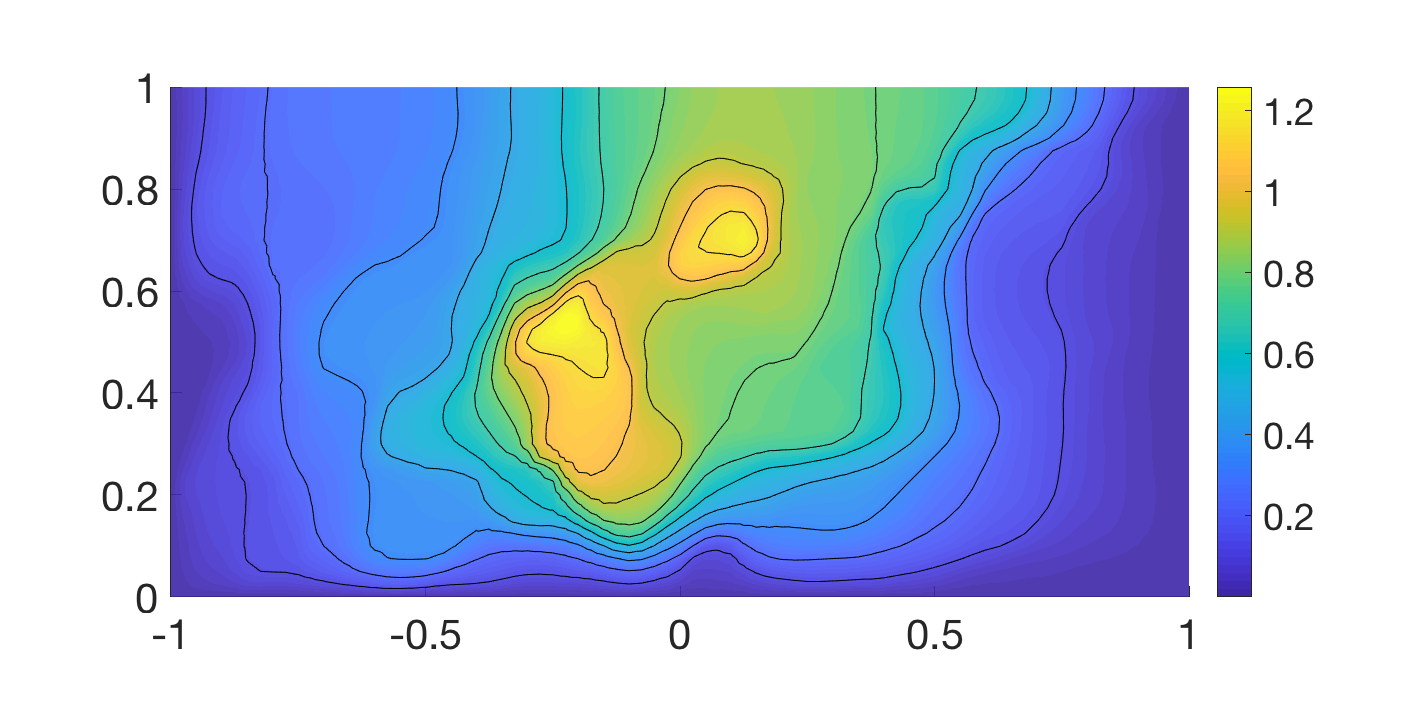}\\
\includegraphics[width=.45\textwidth]{./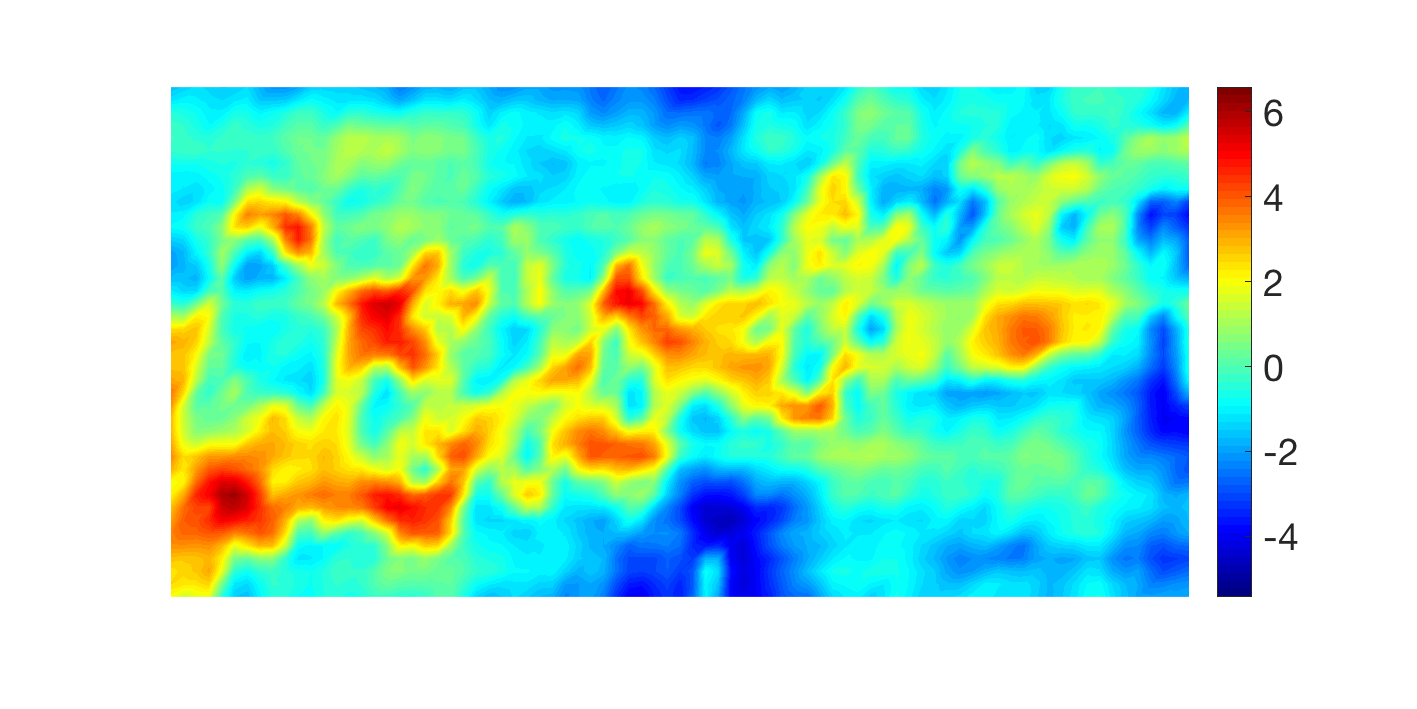}&
\includegraphics[width=.45\textwidth]{./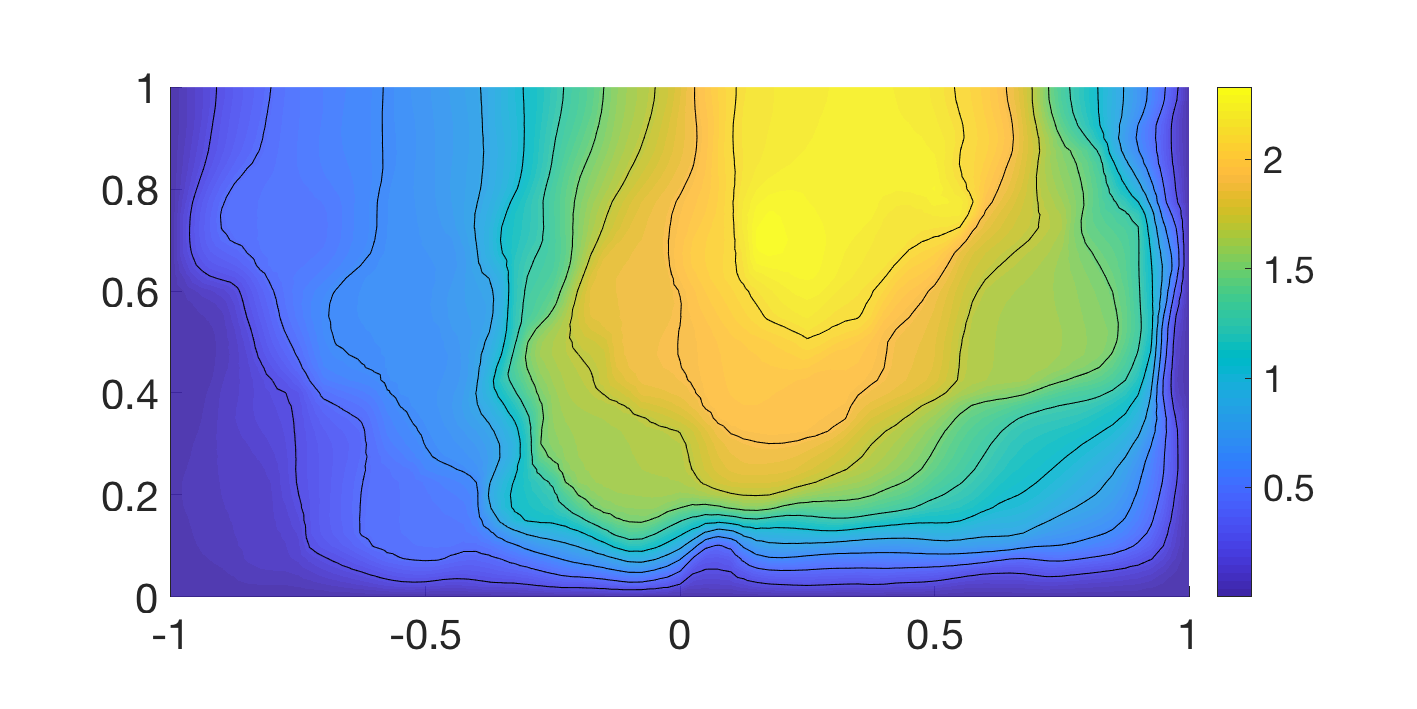}
\end{tabular}
\caption{Realizations of the log-permeability field (left), 
and the corresponding pressure fields (right).}
\label{fig:perm_samp}
\end{figure}

\subsubsection{The quantity of interest and its spectral representation}
\label{sec:qoi}
We consider the following quantity of interest:
\[
    f(x, \theta) := \restr{p(x, \theta)}{\Gamma_N}.
\]
A few realizations of $f(x, \theta)$ are plotted in
\cref{fig:spe_qoi}~(left).  To compute the KL expansion of $f$, we
use a sample average approximation of its covariance function, which is then
used to solve the discretized generalized eigenvalue problem for its KL modes.
The first 30 normalized eigenvalues of the covariance operator of $f$, which we denote by
$\Cq$, are plotted in \cref{fig:spe_qoi}~(middle, red color); 
these correspond to
computing the KL expansion of the QoI using sampling with a Monte Carlo (MC) sample of
size $\Ns = 1000$. We also plot the eigenvalues of the 
log-permeability field covariance
operator $\Cp$, in the same plot (blue color); note that 
the eigenvalues of $\Cq$ decay significantly faster than those of $\Cp$, as expected.
To assess the impact of the MC sample size on computation
of the dominant eigenvalues of $\Cq$, we report the normalized eigenvalues of $\Cq$
computed using successively larger sample sizes, in \cref{fig:spe_qoi}~(right). We observe that a sample 
of size $\mathcal{O}(100)$  
can be used for computing the dominant eigenvalues reliably.
\begin{figure}[ht!]\centering
\begin{tabular}{ccc}
\includegraphics[width=.32\textwidth]{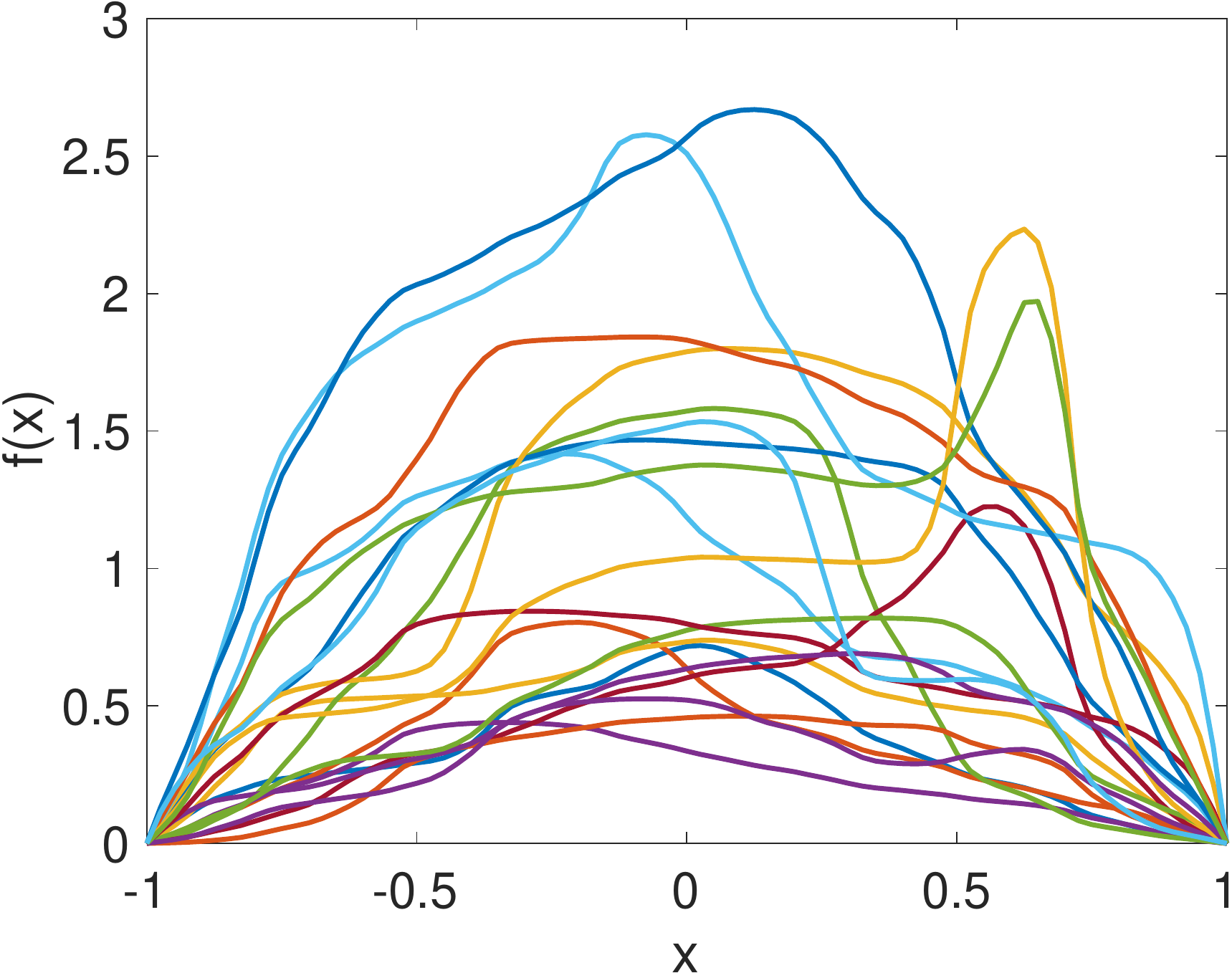}&
\includegraphics[width=.32\textwidth]{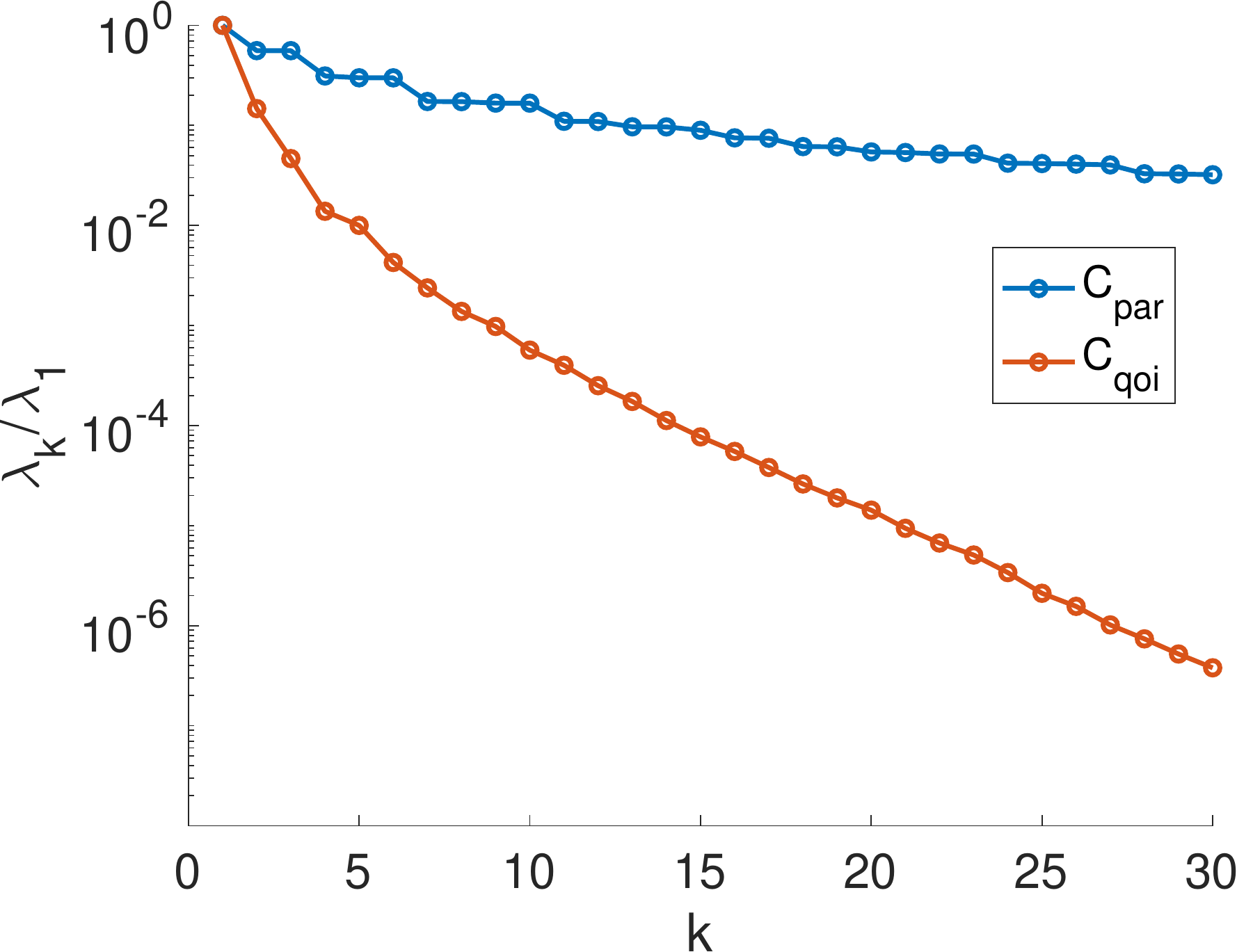}
\includegraphics[width=.32\textwidth]{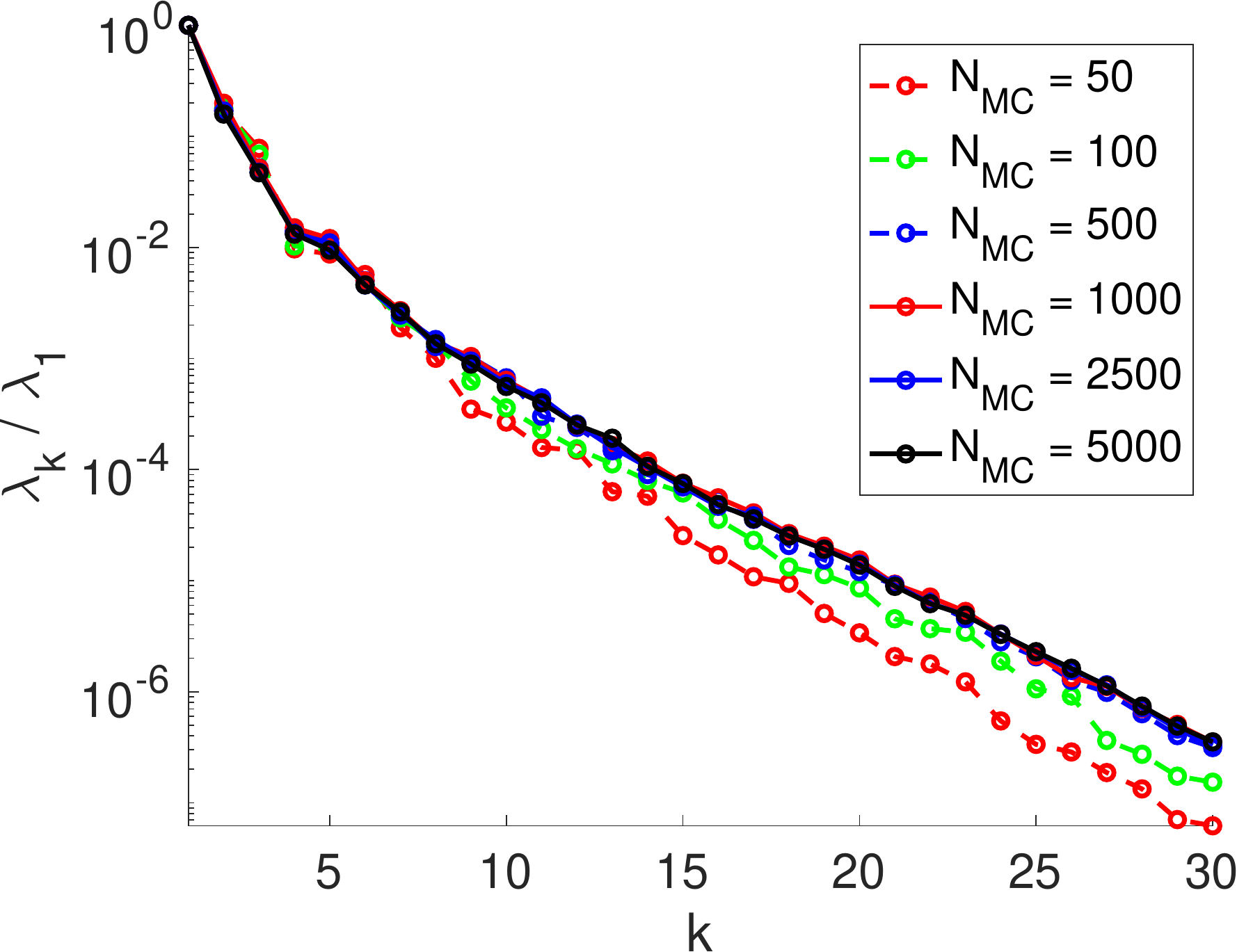}&
\end{tabular}
\caption{A few realizations of the QoI (left), 
eigenvalues of the output covariance operator 
versus those of the log-permeability field (middle).
Eigenvalues of the output covariance, with successively 
larger MC samples sizes for computing the output KLE (right).} 
\label{fig:spe_qoi}
\end{figure}

The fast decay of eigenvalues of $\Cq$ indicates the potential for
output dimension reduction. We note four orders of magnitude reduction in 
the size of the eigenvalues of $\Cq$ with only 15 modes in \cref{fig:spe_qoi}~(right). Hence, we consider a low-rank approximation 
of $f$, 
\begin{equation}\label{equ:trunc_f}
    f(x, \theta) \approx \hat{f}(x, \theta) 
    = \bar{f}(x) + \sum_{i=1}^\Nq \sqrt{\lambda_i(\Cq)} f_i(\theta) \phi_i(x) 
\end{equation}
with $\Nq = 15$. While this provides a low-rank approximation to $f$, the dimension of 
$\theta$ is still high, and is determined by the truncation of the
KL expansion of the 
log-permeability field at $\Np = 126$. Below, we use global sensitivity analysis to 
reduce the dimension of $\theta$.

\subsubsection{Derivative-based GSA}\label{sec:spe_gsa}
\begin{figure}[ht!]\centering
\includegraphics[width=0.4\textwidth]{./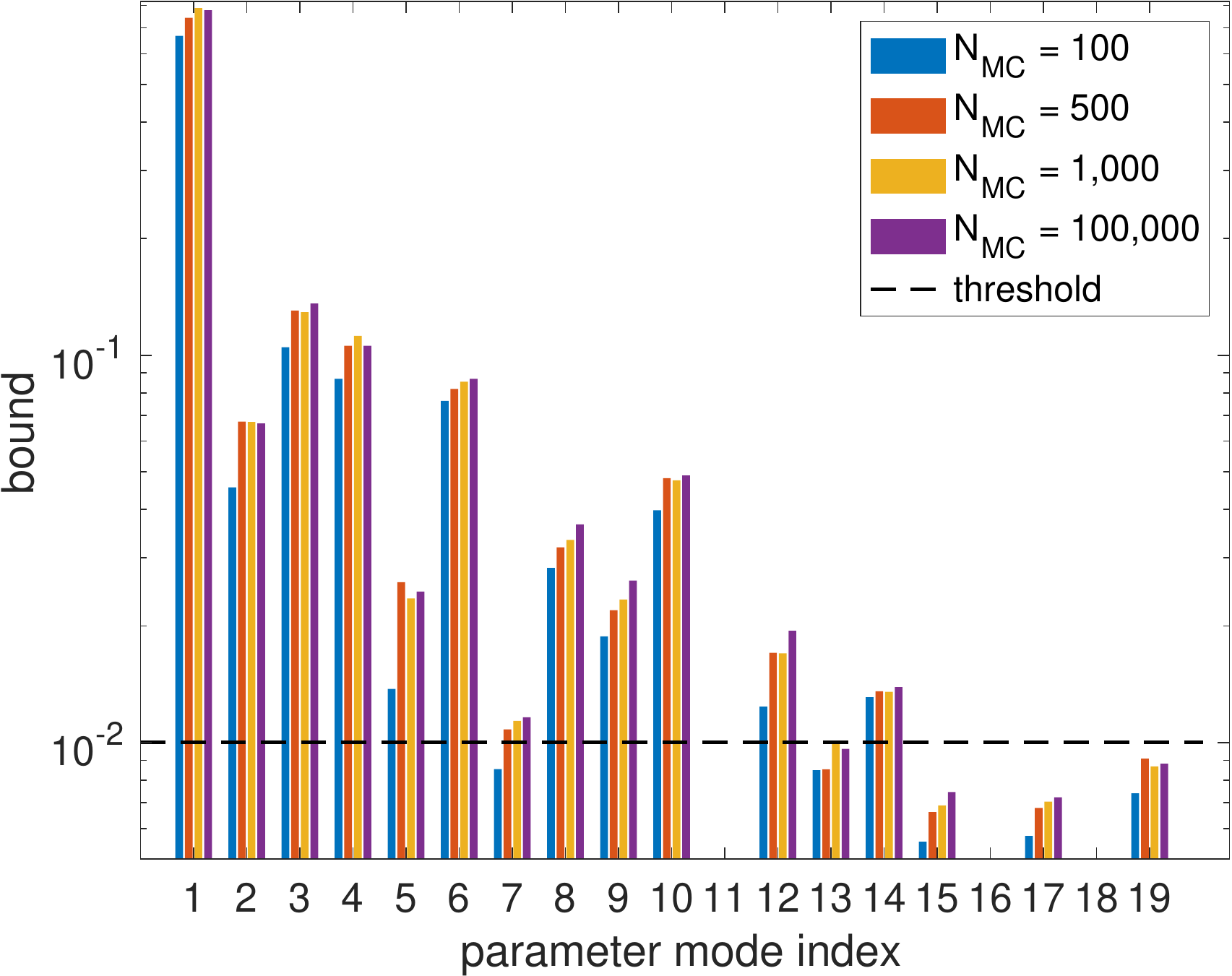}
\includegraphics[width=.4\textwidth]{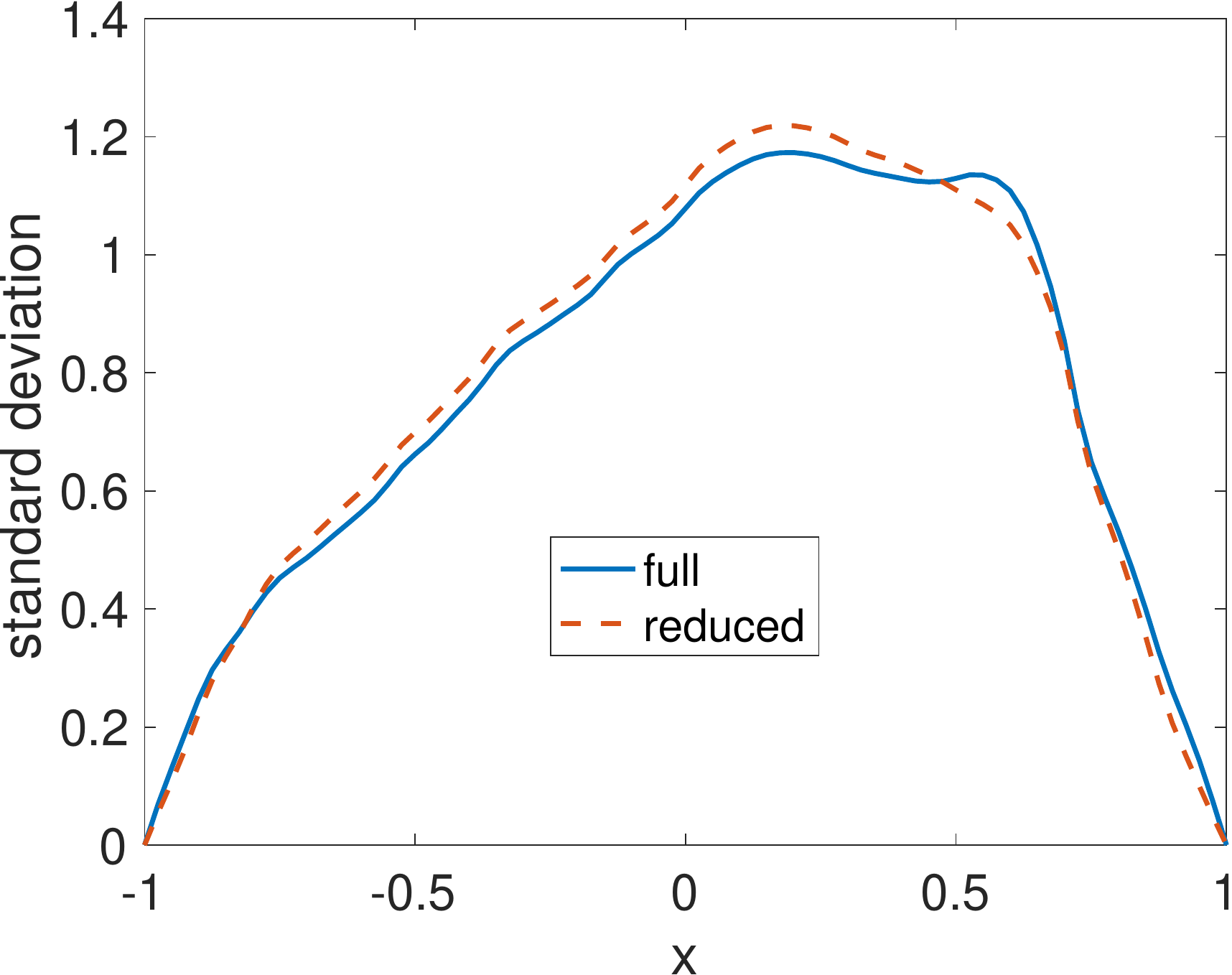} 
\caption{Left: 
DGSM-based bound in \cref{thm:bound_main} calculated for various sample sizes.
Right: standard deviation fields for full model versus 
that of the reduced model.} 
\label{fig:spe_dgsm}
\end{figure}
We begin by calculating the DGSM-based bounds on functional Sobol' indices from
\cref{thm:bound_main} for $\hat{f}$ defined in~\cref{equ:trunc_f}.
As seen before, this process requires sampling the QoI; we compute the
DGSM-based bounds by using MC samples of size $\Ns = 100, 500, 1{,}000$, and
$100{,}000$. The resulting bounds for the first $19$ parameters are reported in
\cref{fig:spe_dgsm}~(left). 

Note that \cref{fig:spe_dgsm}~(left) displays the bounds for only the first
$19$ modes, because the bounds for the remaining 107 modes were all well below the
chosen importance threshold of 0.01.  We note that the results calculated with $\Ns = 500, 1{,}000$, and $100{,}000$ provide a consistent
classification of important and unimportant parameters. This indicates that in
practice, a modest sample size is sufficient for obtaining informative
estimates of the DGSM-based bounds from \cref{thm:bound_main}.

\newcommand{\fred}{f^r}
The computed DGSM-based bounds indicate that the parameter KL modes $\theta_j$, with
$j \in \{1, 2, 3, 4, 5, 6, 7, 8, 9, 10, 12, 14\}$ were above the chosen importance threshold of 0.01 and
the remaining modes can be fixed at a nominal value of zero. This effectively
reduces the parameter dimension from $\Np = 126$ to $\Np = 12$. We denote the
resulting reduced model, now a function of only $12$ variables, by $\fred$.  To
test that $\fred$ reliably captures the variability of the true model $f$, we
sample both reduced and full models $10^5$ times to compare their statistical
properties. In \cref{fig:spe_dgsm}~(right), we
compare the standard deviation of the full and reduced models over 
the spatial domain $\X=[-1, 1]$ of the QoI.  
In \cref{fig:kde_spe} we report PDFs of $f(x, \cdot)$ and $\fred(x,
\cdot)$, at $x = -0.75,~ -0.25,~ 0.25,~ 0.75$. We note that the reduced model
captures the distribution of the QoI at the considered points closely.

\begin{figure}[ht!]\centering
\begin{tabular}{cc}
\includegraphics[width=.35\textwidth]{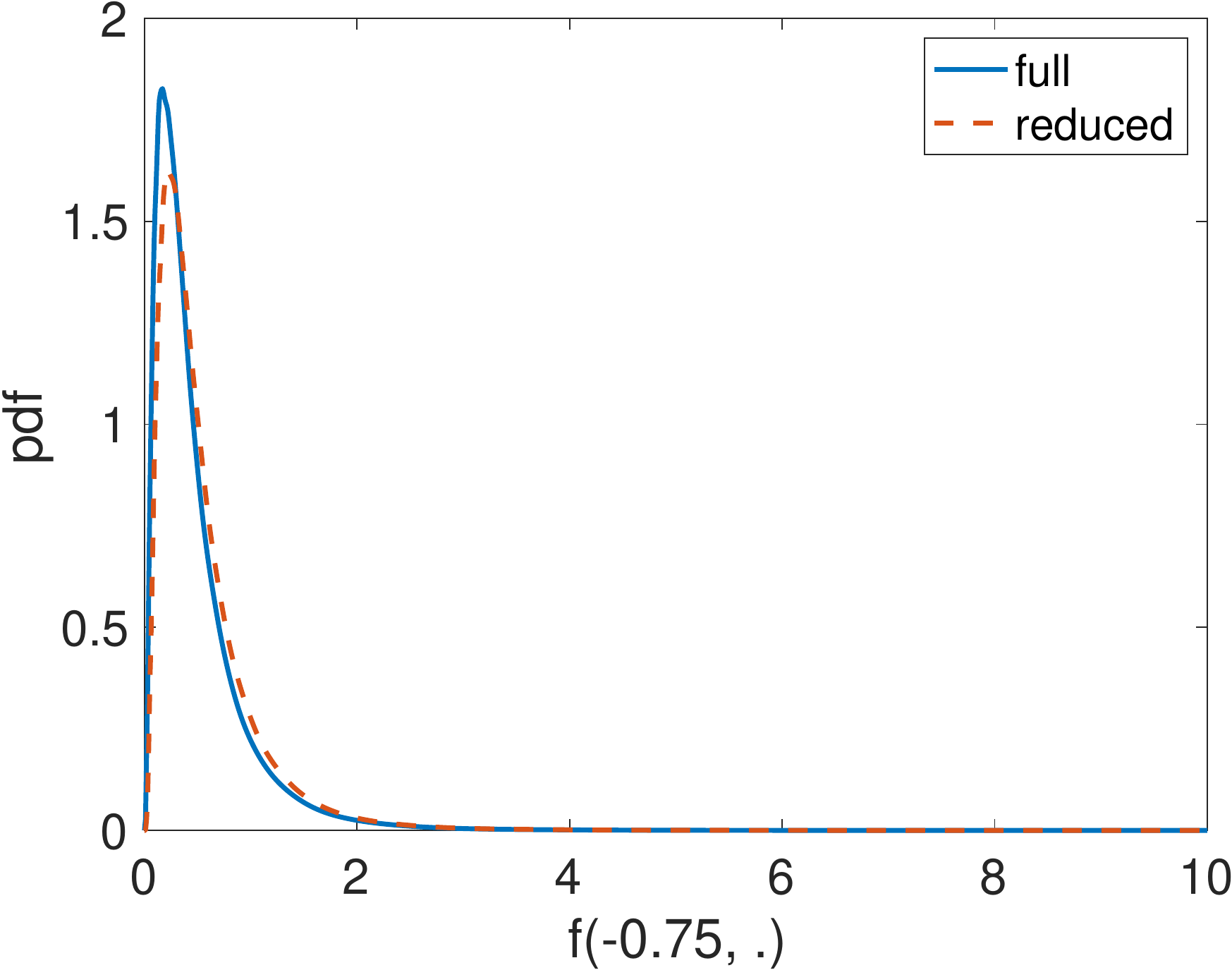}&
\includegraphics[width=.35\textwidth]{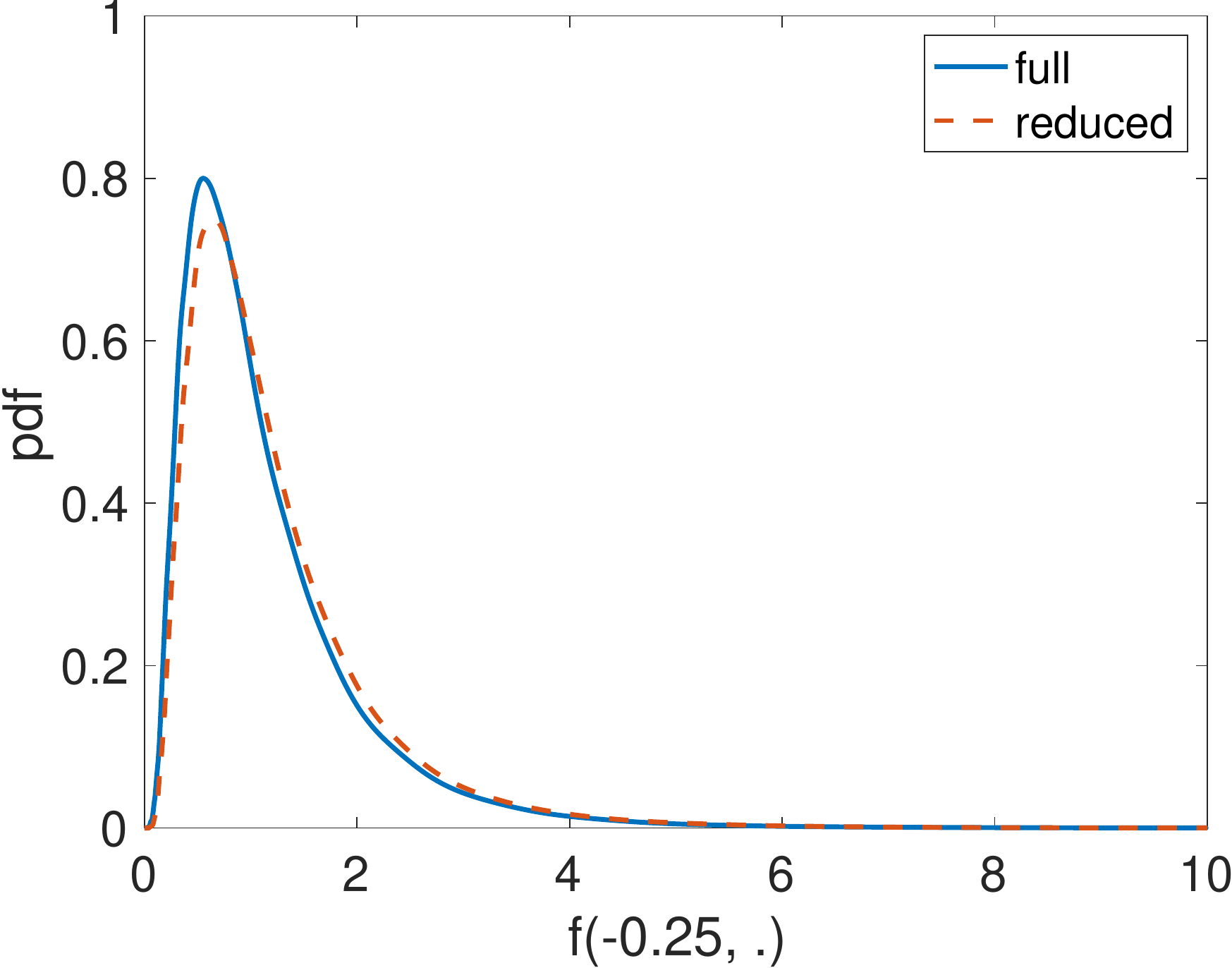}\\
\includegraphics[width=.35\textwidth]{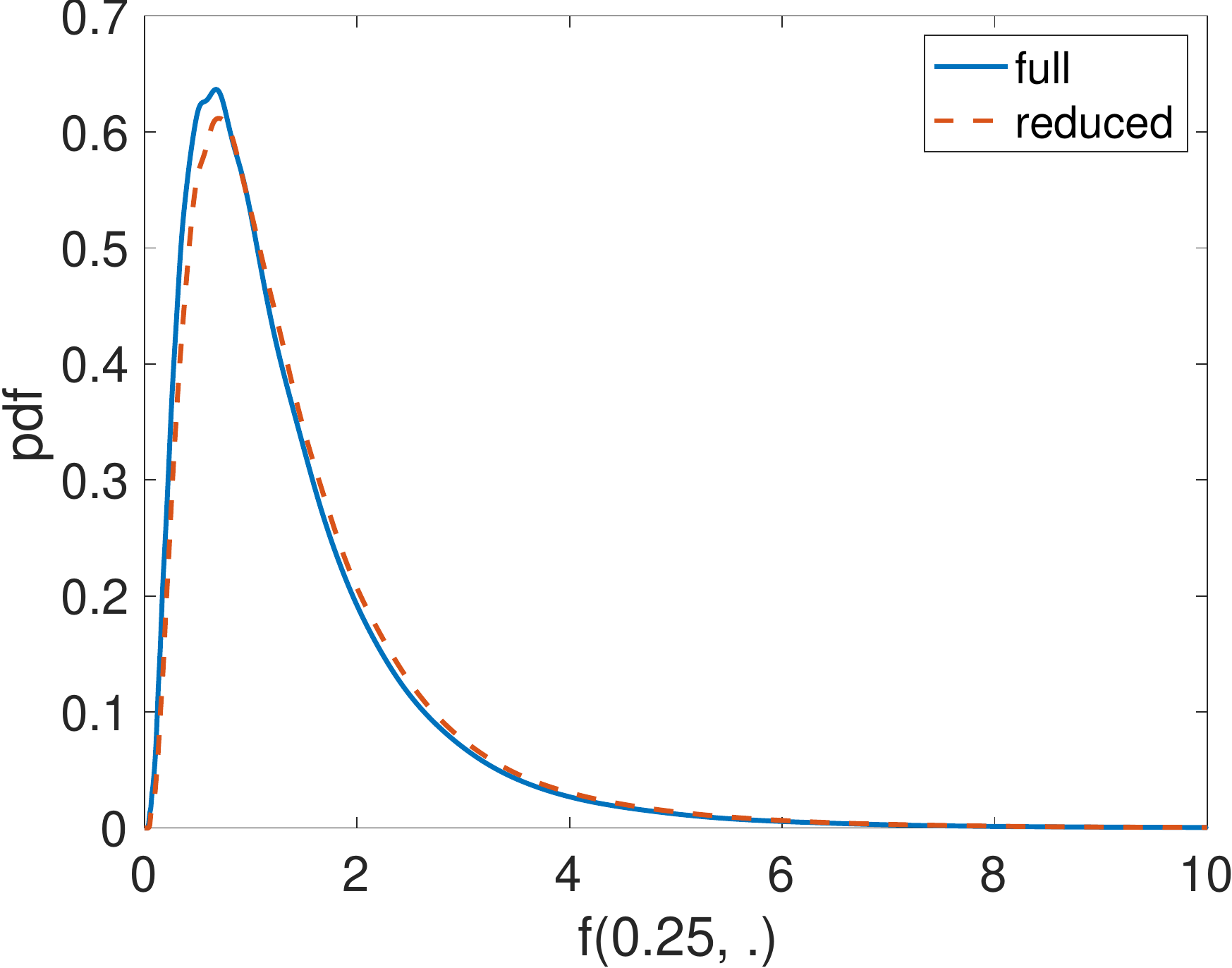}&
\includegraphics[width=.35\textwidth]{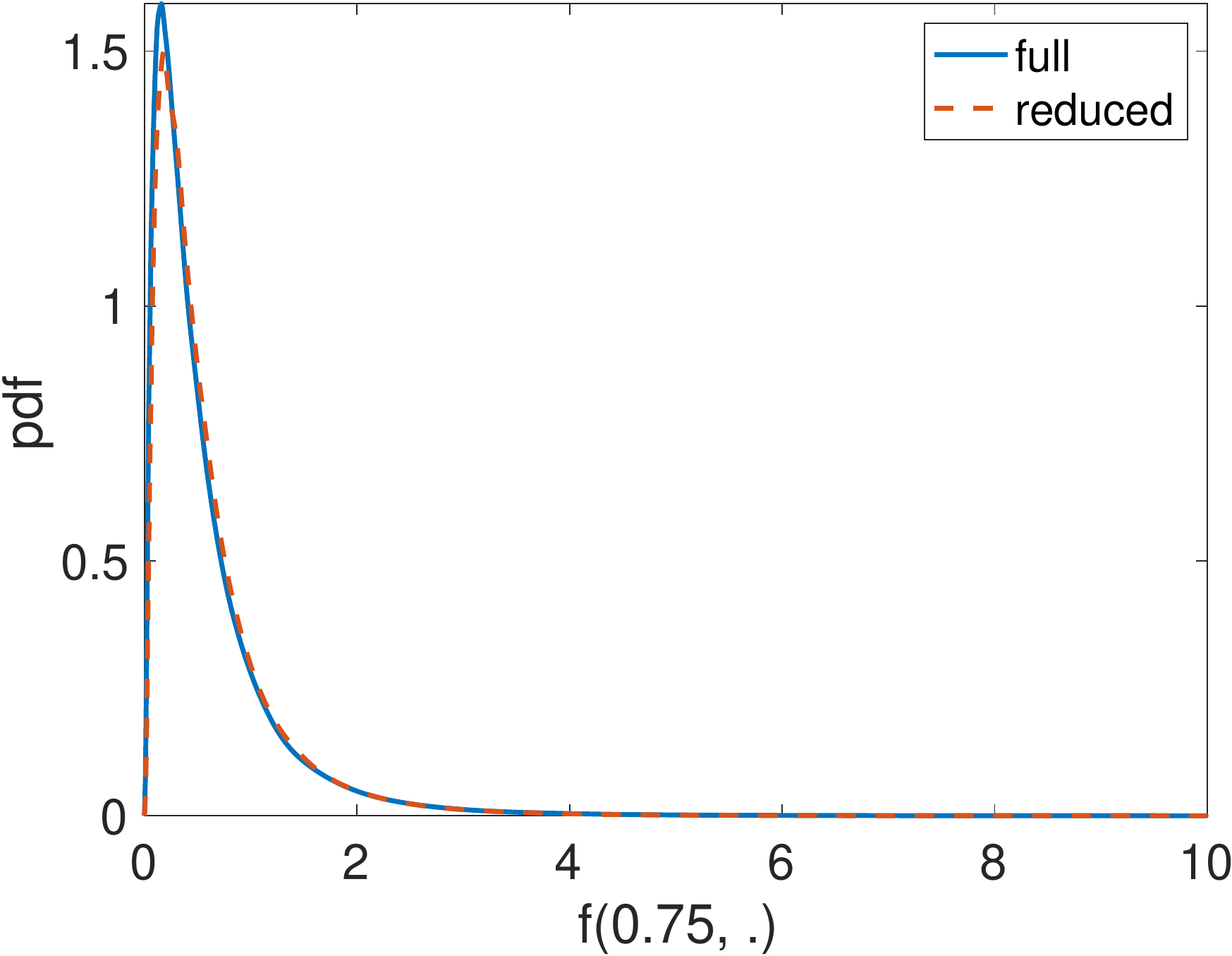}
\end{tabular}
\caption{pdf estimate for equally spaced points, $[-1,1]$} 
\label{fig:kde_spe}
\end{figure}

\subsection{Application to biotransport in tumors}\label{sec:biotransport}
In this section, we apply our derivative-based GSA  methods to a biotransport
problem.  Specifically, we consider biotransport in cancerous tumors with
uncertain material properties.  We focus on the resulting uncertainties in the pressure field
in a spherical tumor when a single needle injection occurs at the center of the
tumor.

\subsubsection{Model description}
Restricting our attention to a 2D cross-section, we consider Darcy's
law constrained by conservation of mass in a 2D physical domain $\D \subset
\R^2$ given by a circle of radius $R_\text{tumor} = 5$ mm with an inner circle
of radius $R_\text{needle} = 0.25$ mm, modeling the injection site, removed;
see \cref{fig:dom}. The inner and outer boundaries of the physical domain
$\D$ are denoted by $\Gamma_\text{N}$ and $\Gamma_\text{D}$, respectively. 
\begin{figure*}[ht!]
\begin{minipage}{.58\linewidth}
  \captionof{table}{Model parameters for the biotransport problem.}
  \begin{tabular}{lll}
  \toprule
  Parameter & Symbol & Nominal Value [unit]\\
  \bottomrule
  Permeability   &  $\kappa$           & $0.5~[md]$\\ 
  Viscosity & $\eta$   & $8.9 \times 10^{-4}~[Pa \cdot s]$\\ 
  Inflow rate & $Q$               & $1~[mm^2/min]$\\ 
  \end{tabular}
\label{tbl:biotransport_params}
  \end{minipage}%
  \begin{minipage}{.42\linewidth}
    \centering\includegraphics[width=.95\linewidth]{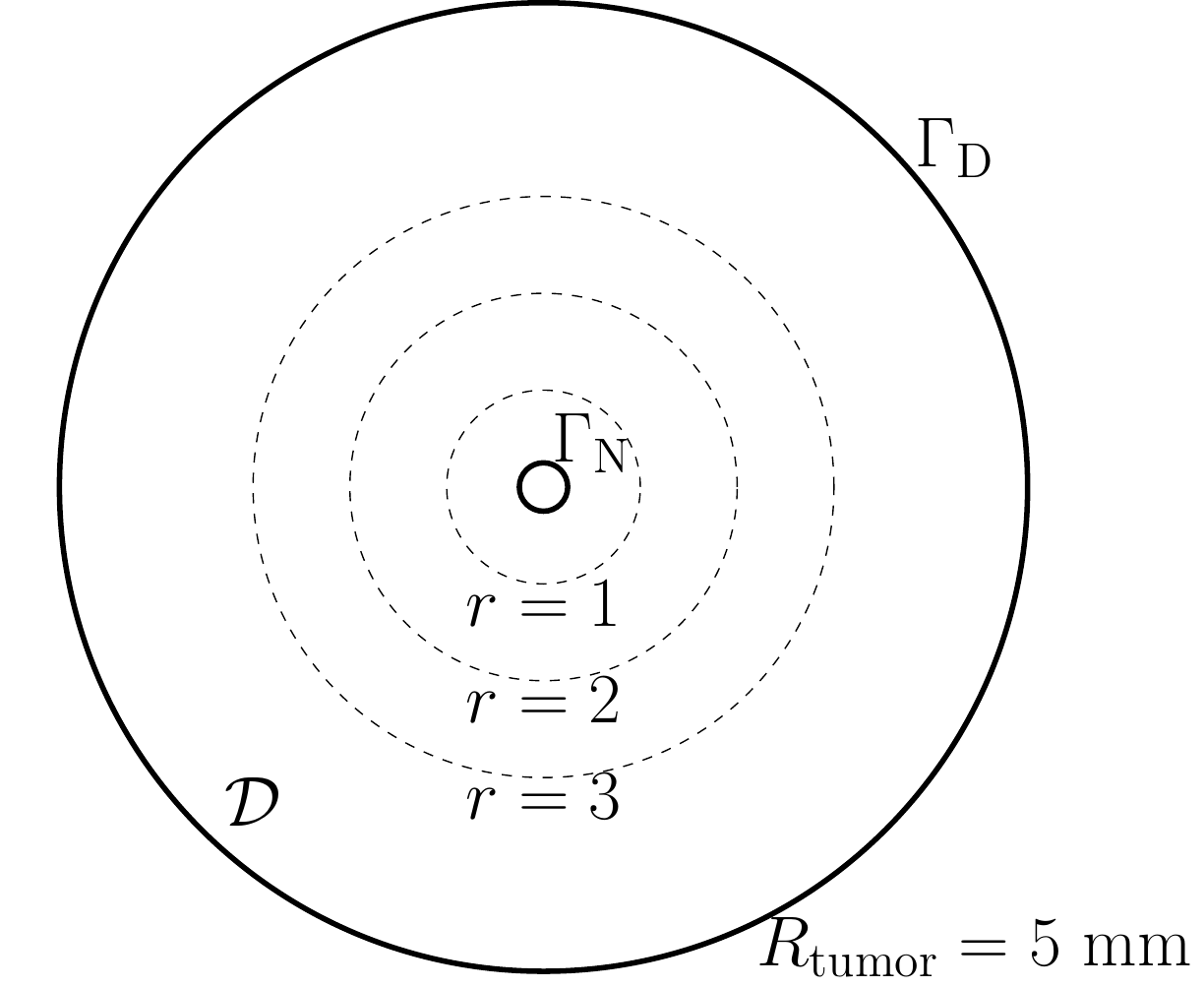}
    \captionof{figure}{The domain $\D$. The inner and 
    outer boundaries 
    are equipped with Neumann and Dirichlet boundary conditions
    and are denoted by $\Gamma_\text{N}$ and $\Gamma_\text{D}$, respectively.}
    \label{fig:dom}
\end{minipage}
\end{figure*}%

The fluid pressure $p$ is governed by the following elliptic PDE:
\begin{equation} \label{equ:2D_Darcy}
\begin{aligned}
-\nabla \cdot \left( \frac{\kappa}{\eta} \nabla p\right) &= 0 \quad \text{in } \D,\\ 
p &= 0 \quad \text{on } \Gamma_\text{D}, \\
\nabla p \cdot n &= \frac{Q \eta}{2\pi R_\text{needle} \kappa} 
\quad \text{on } \Gamma_\text{N}.
\end{aligned}
\end{equation}
Here $\kappa$ is the absolute permeability field, $\eta$ is the fluid dynamic
viscosity, $Q$ represents the volume flow rate per unit length, and ${n}$ is
the outward-pointing normal of the inner boundary $\Gamma_\text{N}$.  The
nominal values for the parameters in \cref{equ:2D_Darcy} are given in
\cref{tbl:biotransport_params}. These values are selected
according to those used in previous experimental and numerical studies of fluid
transport in tumors~\cite{maher:08,ma:12TF,Chen:07}. As has been discussed by
many researchers, tumor structure can be highly complicated due to its invasive
nature. In general, a tumor consists of loosely organized abnormal cells,
fibers, vasculature, and lymphatics~\cite{Clark:91}. This results in randomly
formed tumor tissues with structural heterogeneity.   

In this subsection, the permeability field is modeled as a log-Gaussian
random field as follows. 
Let $z(x, \omega)$ be a centered Gaussian process with 
the following covariance function:
\begin{equation}
c_z({x},
{y}) = \exp\left\{- \frac{1}{\ell} \| x-y\|_1\right\}, \quad x, y \in \D,
\end{equation}
where $\ell > 0$ is the correlation length.
Then, we define the log-permeability $a = \log\kappa$ as in \cref{equ:param_process},
where the pointwise mean and variance of the process are given 
by 
$\bar a \equiv \ln(0.5) + \sigma^2_a$ and
$\sigma^2_a = 0.25$, respectively. Note that $\bar a$ is selected to ensure that the
mode of the $\kappa$ distribution at each spatial point is $0.5~md$, which is the nominal value for $\kappa$ given in \cref{tbl:biotransport_params}.
We can represent $a(x,\omega)$ using a truncated KL expansion as in 
\cref{equ:truncated_KLE}. 

\subsubsection{The quantity of interest and its spectral representation}
\label{sec:qoi_Bio}
We consider the following QoI: 
\begin{equation}
f(x, \theta) =  
\Q p, 
\end{equation} 
where, as in \cref{sec:DGSM_PDE}, $\Q$ is the restriction operator to a
closed subset $\X$ of $\D$.  In this case, $\X$ is an annulus with the inner
boundary given by the inner boundary $\Gamma_\text{N}$ of $\D$ and with the
outer boundary having a radius $R_{out} = 1~mm$, $2~mm$, or $3~mm$ (see 
\cref{fig:dom}). 
The corresponding truncated KL expansion of $f$ reads 
\begin{equation} \label{equ:trunc_p}
\hat{f}(x, {\theta}) := \bar{f}(x) + \sum_{k = 1}^{\Nq} 
\sqrt{\lambda_k(\Cq)} f_k({\theta}) \phi_k(x),
\end{equation}
where the KL modes $f_i$ are defined as before, and
$\lambda_k(\Cq)$ and $\phi_k(x)$ are the 
eigenpairs of the QoI covariance operator $\Cq$. 

\subsubsection{Derivative-based GSA}\label{sec:spe_gsa_Bio}
As in \cref{sec:spe_gsa}, we calculate the DGSM-based bounds on
functional Sobol' indices from \cref{thm:bound_main} for the QoI defined
in~\cref{equ:trunc_p} and follow the adjoint-based framework outlined in
\cref{sec:DGSM_PDE}. 
As mentioned previously, a small DGSM-based bound for a given parameter implies
that the corresponding functional total Sobol' index is small and
thus, the parameter is deemed unimportant.  In the experiments in this section,
we set an \emph{importance threshold} of $0.025$. 
In \cref{fig:GSA_Test_Bio}, we study the effects of the MC
sampling size $\Ns$, the KL expansion dimension $\Np$ of the input and $\Nq$ of
the output, annulus size (i.e., size of $\X$), and correlation length $\ell$ on
DGSM-based bounds.  
Note that \cref{fig:GSA_Test_Bio} displays the DGSM-based bounds for the first $37$
modes, beyond which the DGSM-based bounds were all below the chosen importance threshold. 
Below, we explain the numerical experiments 
reported in \cref{fig:GSA_Test_Bio}, in detail.

In the first test, we examine the effect of the MC sample size $\Ns$ 
as needed in our approach for computing DGSMs (cf.\ \cref{alg:DGSM_KLE}).
Similar to the
observation in \cref{sec:spe_gsa}, a modest sample size is sufficient
for obtaining informative estimates of the DGSMs. Specifically, we 
present one set of test
results in \cref{fig:GSA_Test_Bio} (top left). Here, the outer
radius of the annulus is $1~mm$, the correlation length is $0.5~mm$, 
and we consider an input
dimension of $\Np = 150$, and an output dimension of $\Nq = 50$. 
We observe that a
sample size of $\Ns~=~750$ is sufficient for obtaining 
a reliable estimation of DGSMs-based bounds.
Therefore, the MC sample size in the following tests is fixed at $\Ns~=~750$.

We then test the effects of the annulus size and correlation length on DGSMs. In
these tests, the input and output dimensions are $\Np = 150$ and
$\Nq = 50$, respectively. 
From \cref{fig:GSA_Test_Bio} (top right and bottom left), 
we observe that when the annulus size increases or the correlation
length decreases, the QoI is sensitive to more KL terms of the input.
Interestingly, most of these sensitive parameters are from 
relatively high-order terms. For example, as shown in \cref{fig:GSA_Test_Bio} (bottom left), when the correlation length decreases 
from $2.0~mm$ to $0.5~mm$, the importance of KL
modes $\theta_j$, with $j \in \{9, 10, 17, 22, 23\}$ gradually grow. 
Implication of such issues on reduced-order modeling (ROM) will be 
discussed in the next section. 
Next, we examine the impact of increasing $\Np$ 
and $\Nq$. As seen in 
\cref{fig:GSA_Test_Bio}~(bottom right), 
increasing the input and output dimensions beyond the selected 
values of $\Np = 150$
and $\Nq = 50$ does not result in noticeable changes in DGSM estimates.

\begin{figure}[ht!]\centering
\begin{tabular}{cc}
\hspace{-0.6cm}
\includegraphics[width=.5\textwidth]{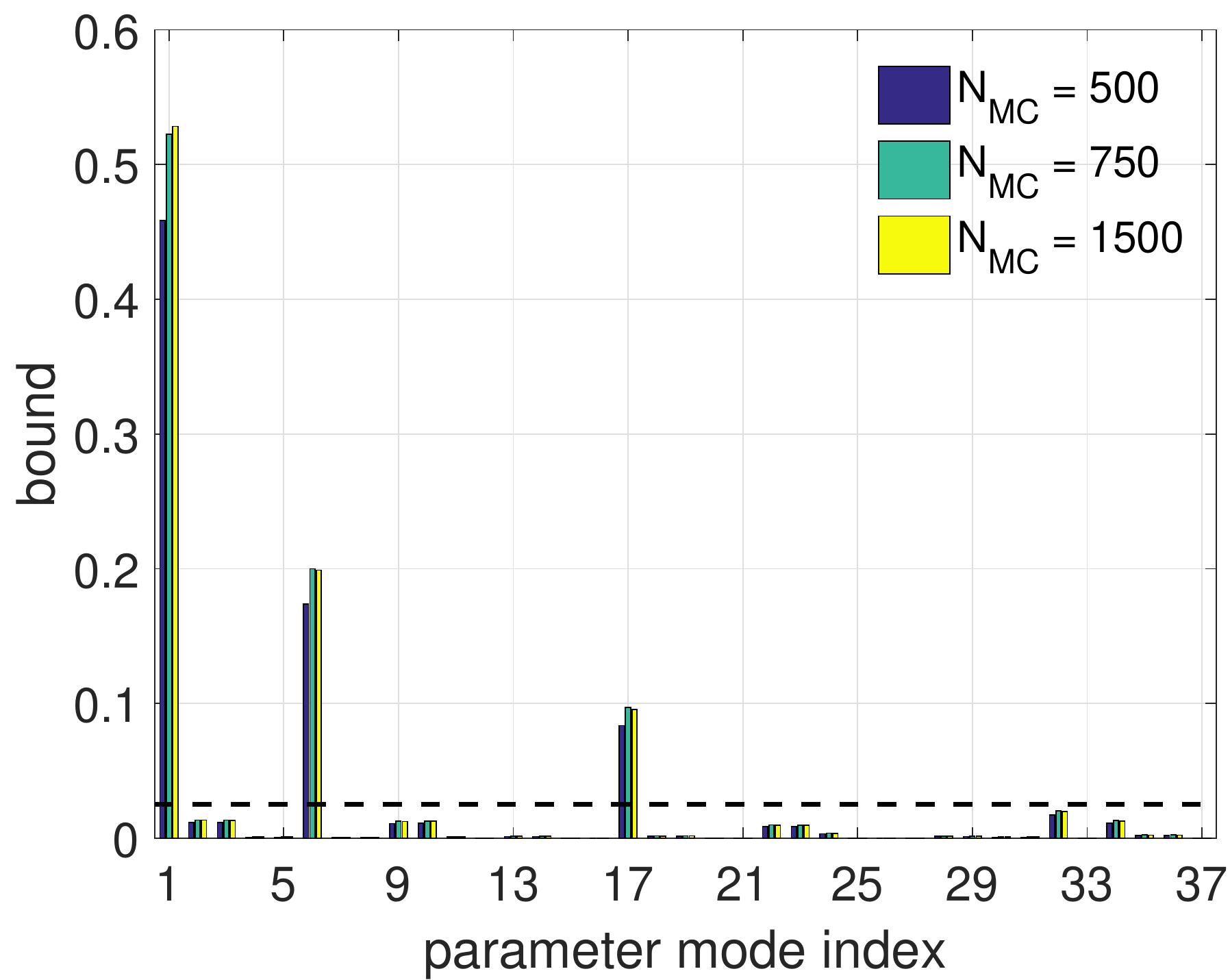}
\includegraphics[width=.5\textwidth]{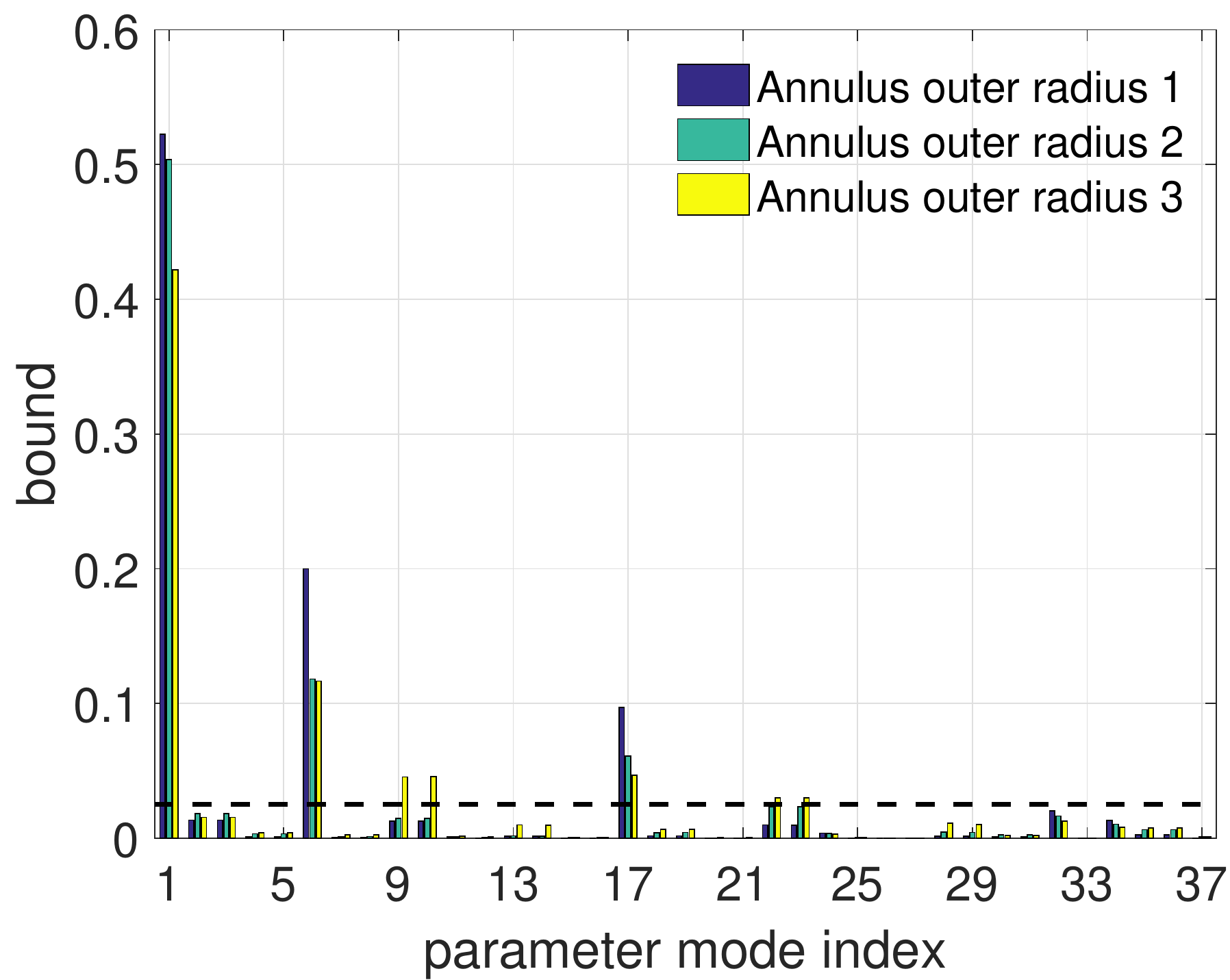}\\
\hspace{-0.6cm}
\includegraphics[width=.5\textwidth]{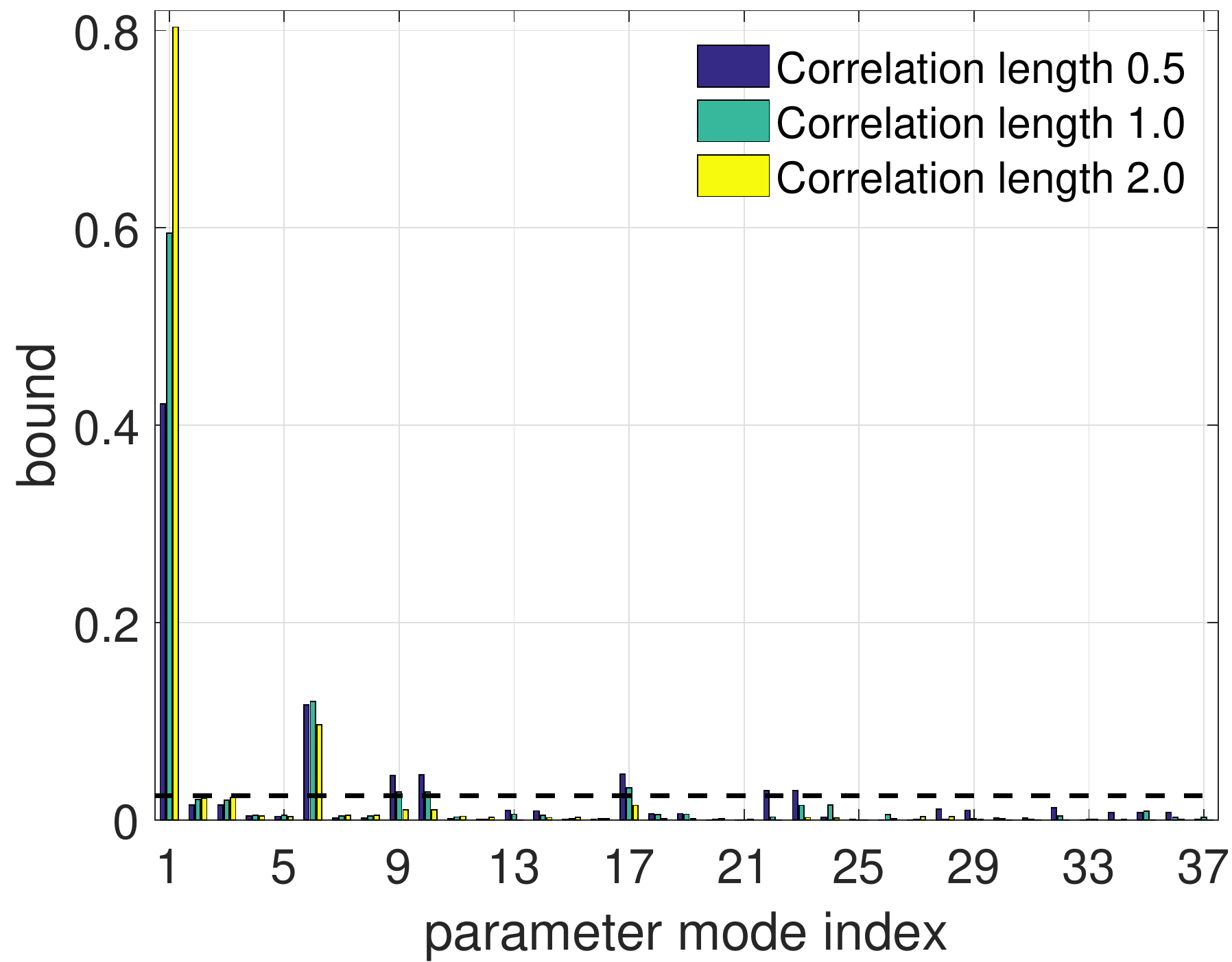}
\includegraphics[width=0.5\textwidth]{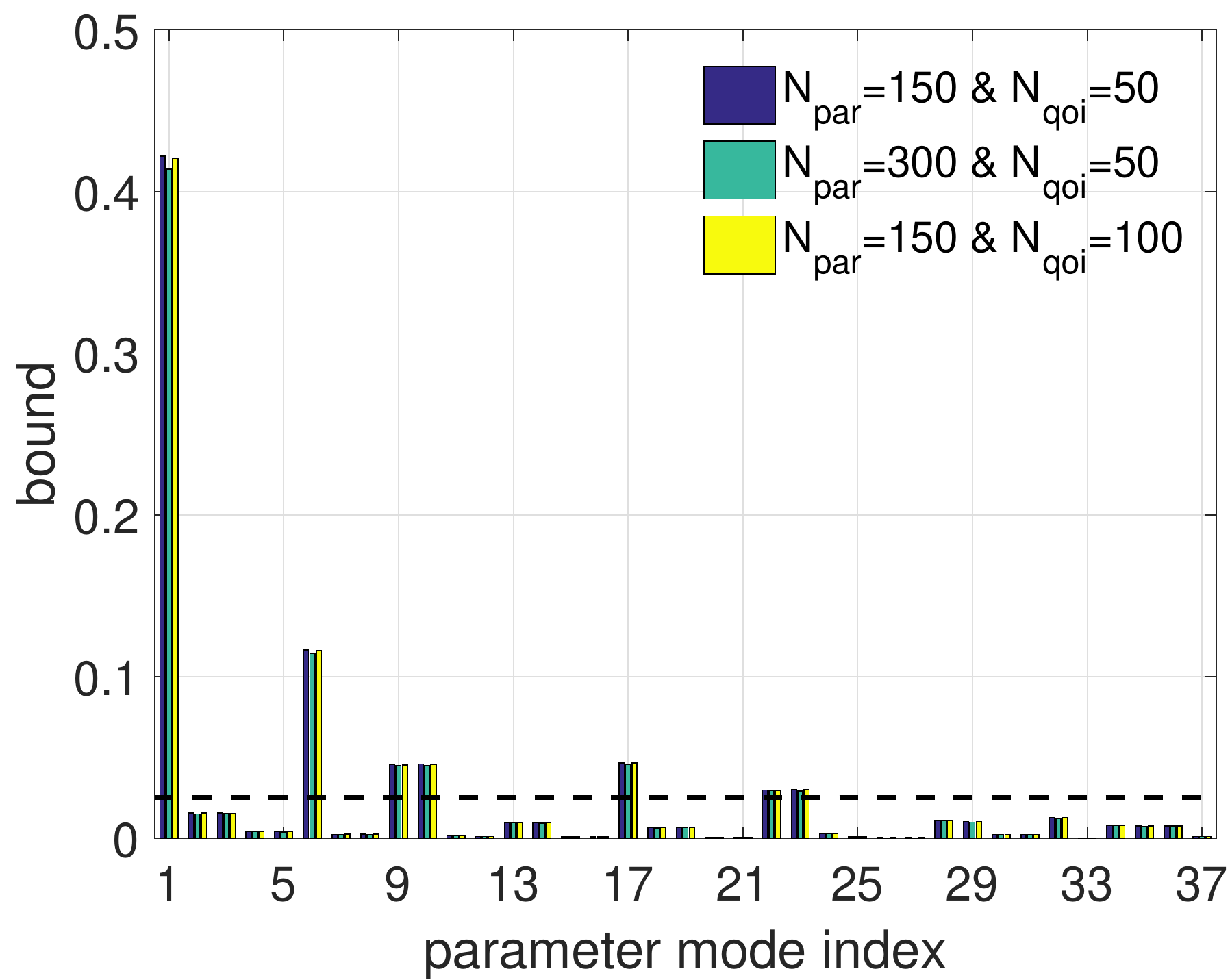}\\
\end{tabular}
\caption{The functional 
DGSM-based bounds of pressure fields in a tumor with uncertain permeability. Top left: Convergence study with the MC sample size $N_{MC}~=~500,~ 750,$ and $1500$. Top right: Comparison of DGSM-based bounds for different annulus sizes, namely the annulus outer radii of $1~mm$, $2~mm$, and $3~mm$. Bottom left: Comparison of DGSM-based bounds for different correlation lengths, namely $0.5~mm$, $1~mm$, and $2~mm$. Bottom right: DGSM-based bounds calculated with different combinations of the KL expansion dimensions of the input and output.} 
\label{fig:GSA_Test_Bio}
\end{figure}

\subsubsection[]{Insights on ROM assisted by DGSMs}\label{sec:ROM}
From the global sensitivity analysis, we find that the QoI is only sensitive to
several selected KL terms of the input. This can be used to guide 
ROM based on DGSMs.
In this section, we compare two ROM approaches: one is based on the GSA with DGSMs (termed as DGSM-based ROM) and the other is
based on directly selecting the first $k$-terms of the KL expansion of the random
input field (termed as KL-based ROM). Generally, the reduced-order model of the
input can be written as follows:
\begin{equation}\label{equ:a_rom}
    \tilde{a}(x, \omega) = 
    \bar a(x) + \sum_{k \in \mathcal{S}} \sqrt{\lambda_k(\Cp)} 
    \theta_k(\omega) e_k(x), 
\end{equation}
where $\mathcal{S}$ is the set which consists of the indices of the KL terms used in ROM.
We evaluate the
performance of the two ROM methods on recovering the PDFs of pressures at
different locations in the flow field.

As shown in \cref{fig:Points_Dis}, we select three points on the mesh
with different distances from the center of the domain:
the point
$P_1$ is on the inner boundary with a large relative standard deviation (RSD) 
of the pressure ($RSD = 0.143$); the point $P_2$ is close 
to the inner boundary  with a moderate RSD ($RSD = 0.105$); and 
the point $P_3$ is far from the inner boundary  with 
a relatively small RSD ($RSD = 0.0845$).
In the DGSM-based ROM, the first $n$ KL terms which the QoI is most sensitive
to are used to reconstruct the reduced-order model of the pressure field. In
the KL-based ROM, the first $n$ KL terms, corresponding to the $n$ largest
eigenvalues of the input covariance operator, are used to reconstruct the
reduced-order model. An MC sampling approach is used to construct PDFs
from the full model, which includes all the KL terms, and those from the
reduced-order models with different fidelities. The case with a small
correlation length ($\ell = 0.5~mm$) and a large annulus size 
($R_{out} = 3~mm$) is studied here. An MC sample of size $6000$ was found sufficient
for constructing the PDFs.

From \cref{fig:GSA_PDF_Bio}, we observe that at $P_1$, where the pressure
variance is large, the reduced-order model with only the first seven most sensitive
KL terms can nearly recover the PDF of the full model. Its performance is
comparable to that of the KL-based ROM with the first 30 KL terms. This is not
a surprise, because, as seen from the last figure in
\cref{fig:GSA_Test_Bio}, the first seven most sensitive KL terms
$\theta_j$, with $j \in \{1, 6, 9, 10, 17, 22, 23\}$, are within the first 30
KL terms used in the KL-based ROM.  Similar conclusions can be drawn at $P_2$
where a moderate pressure variance is observed. At $P_3$, we find that the
DGSM-based ROM with the first seven most sensitive KL terms does not recover the
PDF well; however, the PDFs obtained using the DGSM-based ROMs with more KL
terms, such as that with the first 15, 30 and 45 most sensitive KL terms,
gradually approach the PDF of the full model. On the other hand, with the same
number of KL terms, the KL-based ROM makes very slow progress towards the full
model PDF.  All these observations indicate that the DGSM-based ROM can be a
much more efficient reduced-order modeling approach than the KL-based ROM that
involves a priori truncation of the input field KL terms. 

\begin{figure}\centering
\includegraphics[width=.5\textwidth]{./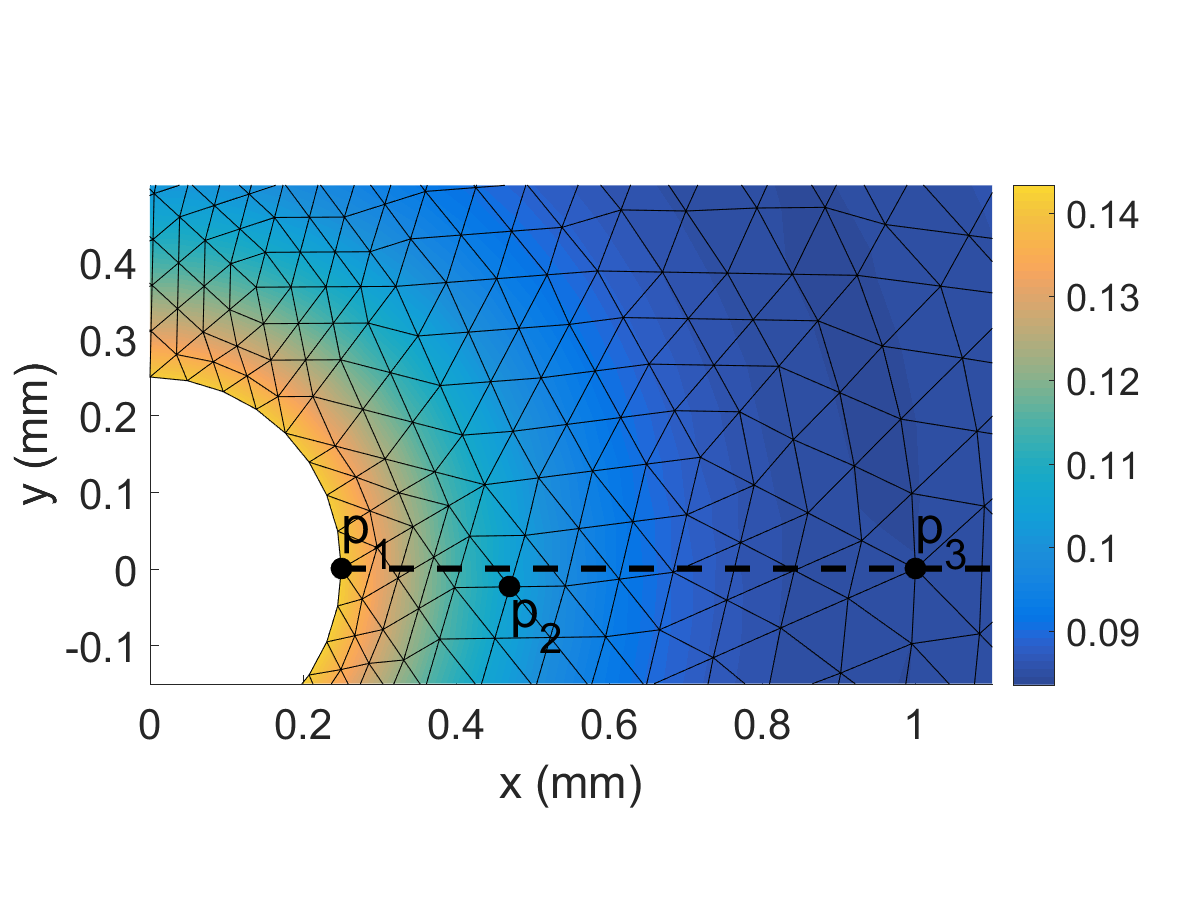}
\caption{Distribution of the points where PDFs of pressures are extracted, and the corresponding RSD field (contour).}
\label{fig:Points_Dis}
\end{figure}
\begin{figure}[ht!]\centering
\begin{tabular}{cc}
\hspace{-1cm}
\includegraphics[width=0.53\textwidth]{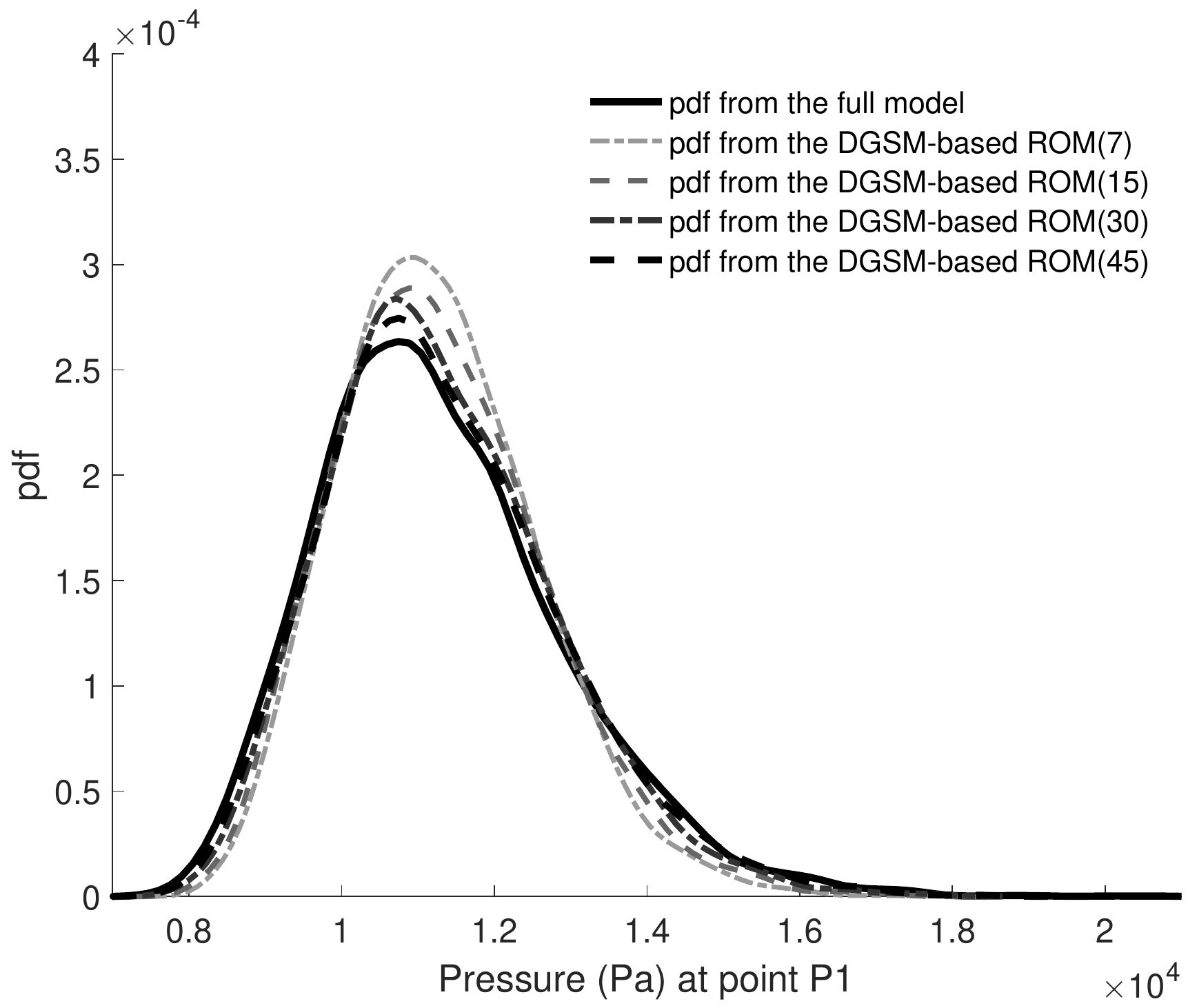}&
\hspace{-0.1cm}
\includegraphics[width=0.53\textwidth]{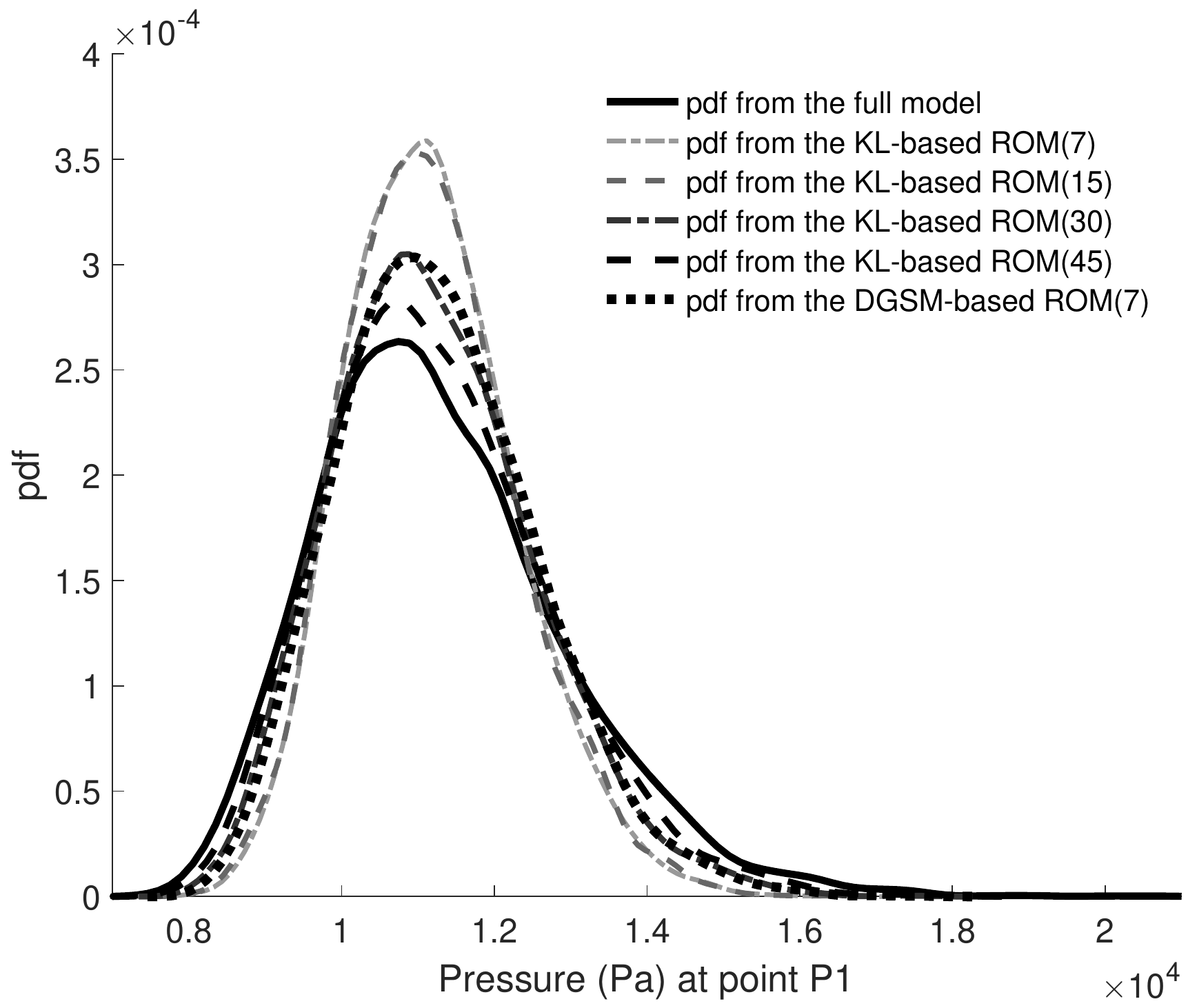}\\
\hspace{-1cm}
\includegraphics[width=0.53\textwidth]{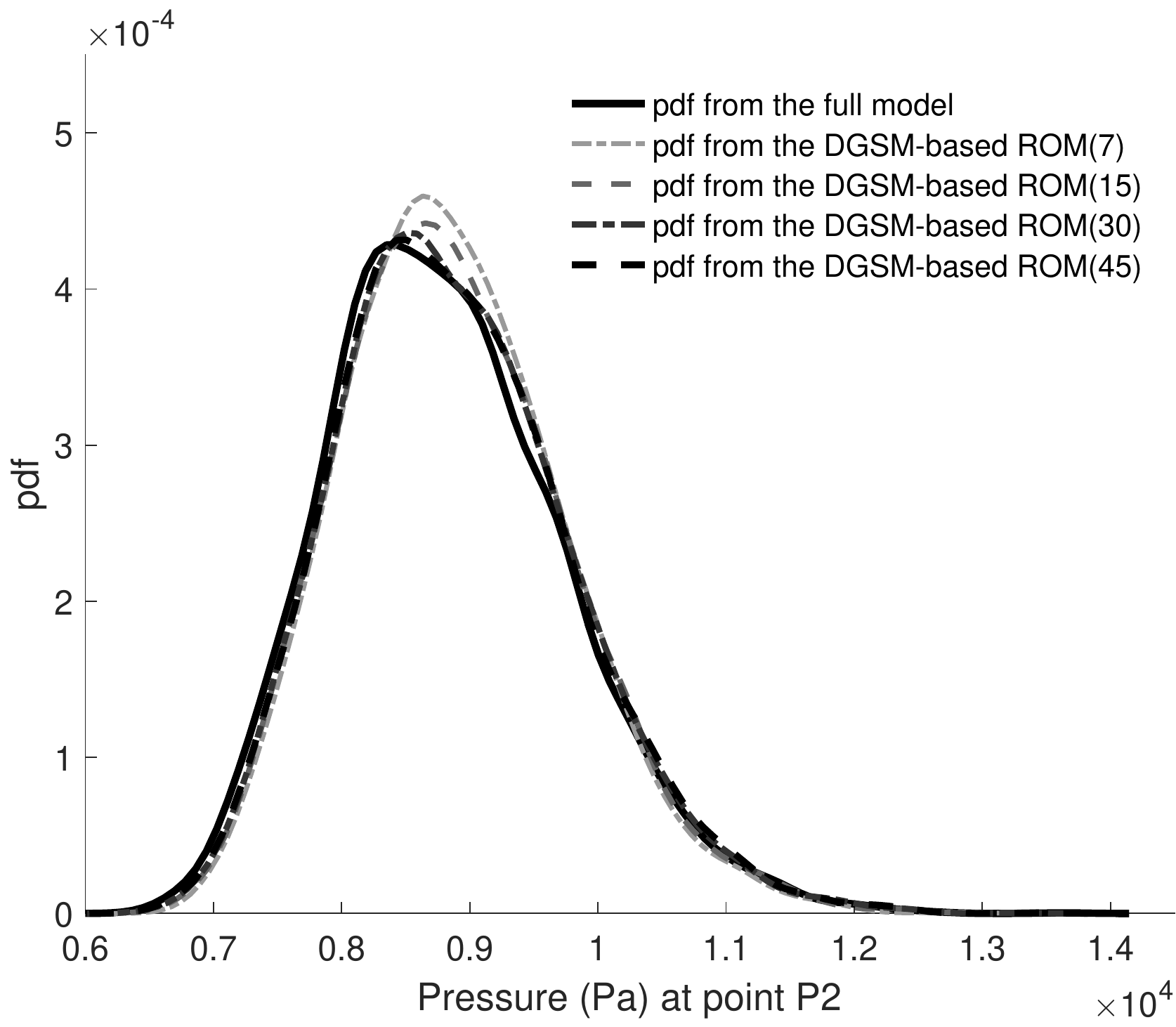}&
\hspace{-0.1cm}
\includegraphics[width=0.53\textwidth]{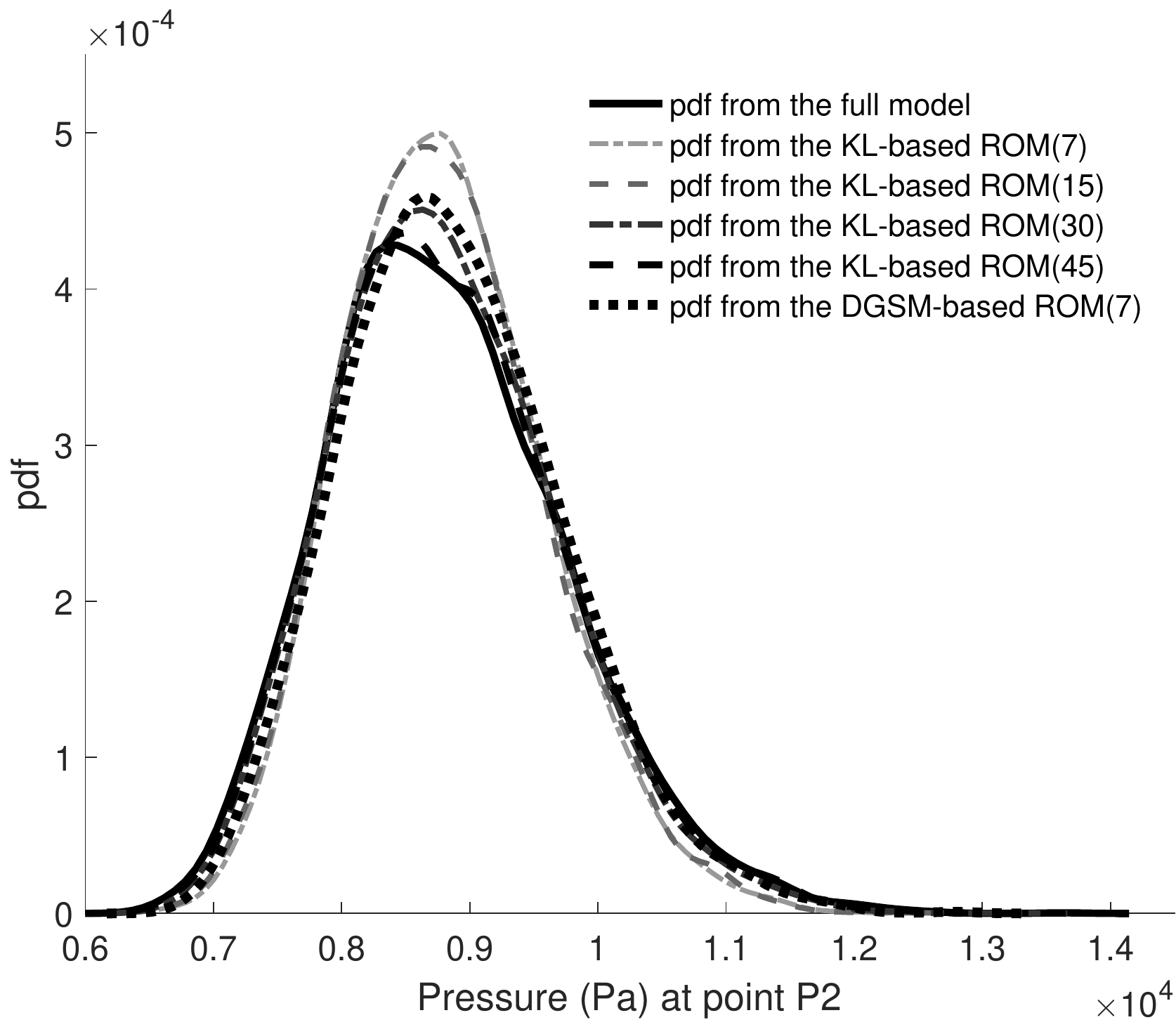}\\
\hspace{-1cm}
\includegraphics[width=0.53\textwidth]{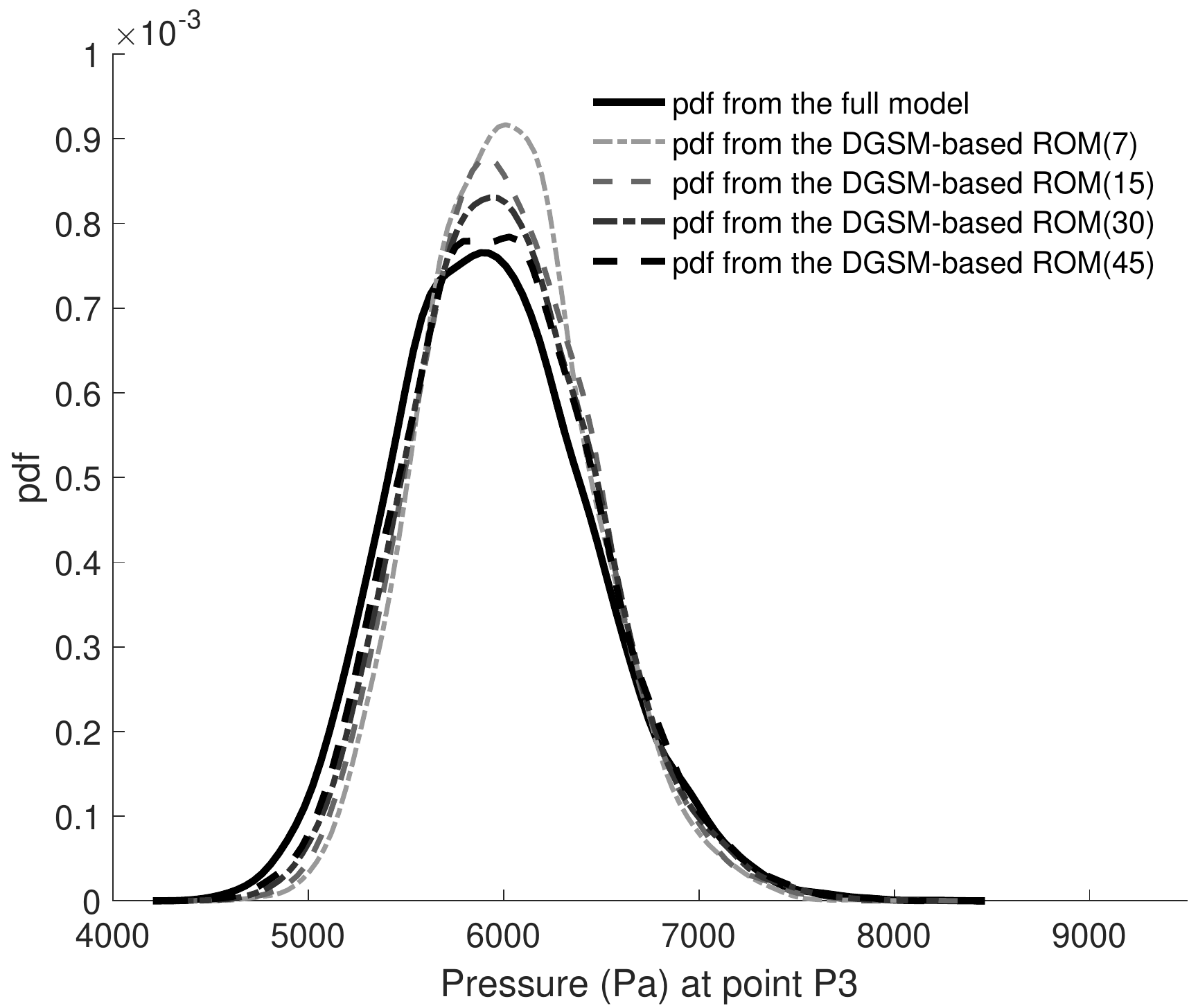}&
\hspace{-0.1cm}
\includegraphics[width=0.53\textwidth]{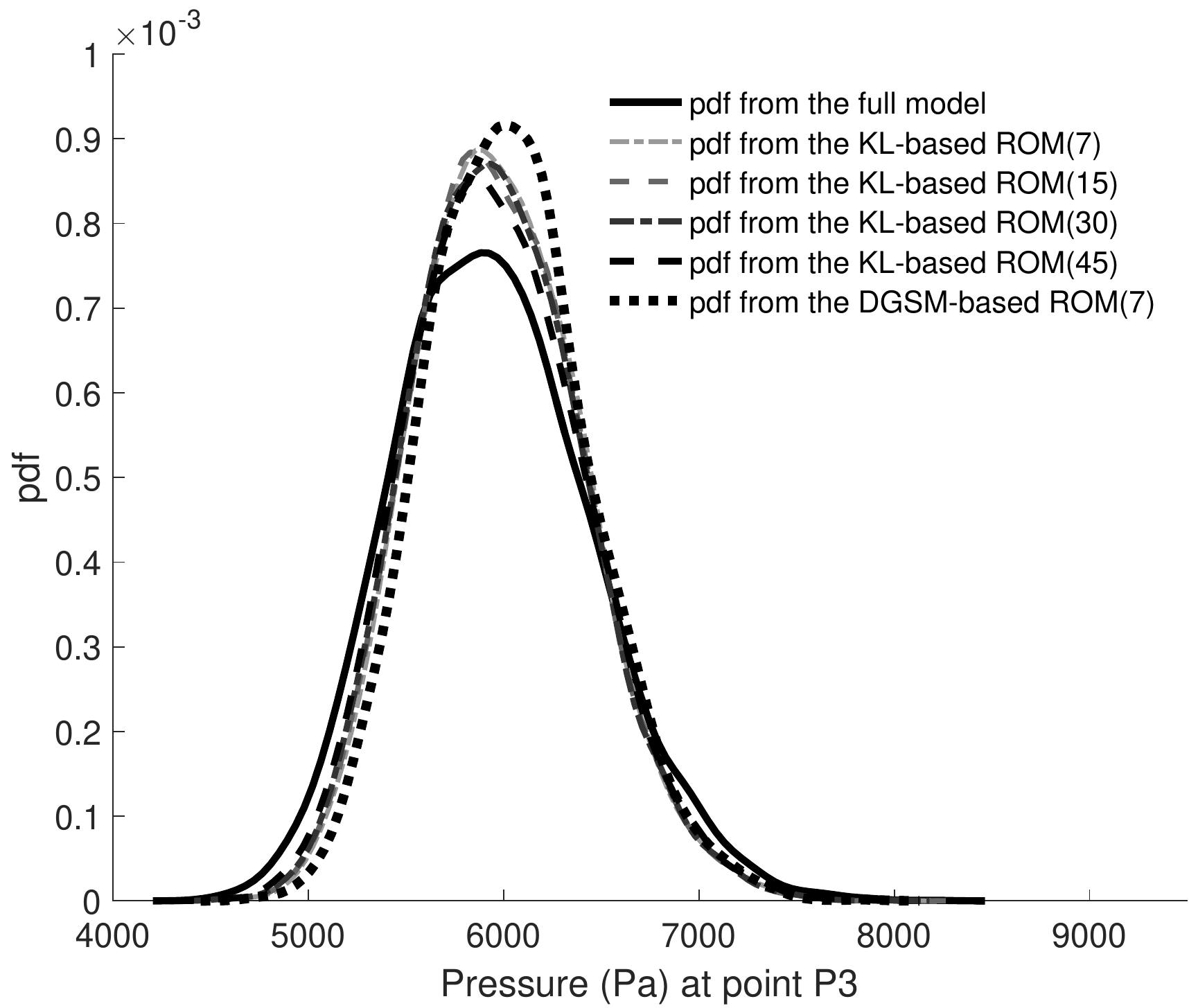}\\
\end{tabular}
\caption{Comparison of PDFs constructed from the DGSM-based ROM (left column) and KL-based ROM (right column) with variable fidelity at points $P_1$, $P_2$ and $P_3$.}
\label{fig:GSA_PDF_Bio}
\end{figure}

\section{Conclusions}\label{sec:conc}
We have presented a mathematical framework for GSA of models with functional
outputs, and have proposed an efficient computational method for identifying
unimportant inputs that is suitable for models with high-dimensional
parameters.  The latter is done by combining the proposed functional DGSMs,
``low-rank'' KL expansions of output QoIs, and adjoint-based gradient
computation.  In particular, the computational complexity of the proposed
approach, in terms of the number of required model evaluations, does not scale
with  dimension of the parameter. The effectiveness of the proposed framework
is illustrated numerically in applications from epidemiology, subsurface flow,
and biotransport. 

The proposed approach is effective in finding unimportant input parameters.
This approach also paves the way for an efficient surrogate modeling approach:
the low-rank KL expansion of the model output can be used to construct
efficient-to-evaluate surrogate models by computing surrogate models for the KL
modes, in the reduced parameter space, which is identified using the functional
DGSMs. The latter can be done using various methods including orthogonal
polynomial approximations~\cite{LeMaitreKnio10,Xiu10,Smith13}, 
multivariate adaptive regression splines~\cite{Friedman91}, 
or active subspace approaches~\cite{Constantine15}. 
We mention that active subspace methods have also been used directly for
dimension reduction in models with vectorial outputs.
Namely,~\cite{ZahmConstantinePrieurEtAl18} presents a gradient-based input
dimension reduction method for such models.  The method proposed
in~\cite{ZahmConstantinePrieurEtAl18} finds a set of important 
\emph{directions} in
the input parameter space by considering ridge approximations of the model
output and by minimizing an upper bound on the approximation error.  The
approach in~\cite{ZahmConstantinePrieurEtAl18} is related to the present work
when the goal of GSA is input dimension reduction.  

In future work, we seek to investigate generalizations to cases of
models with correlated inputs. While the proposed DGSMs can be computed for
such models in the same way, the corresponding variance-based indices need to
be generalized. We are also interested in applying the proposed method to 
more complex physical applications such as multiphase flow in geological 
formations.

\section*{Acknowledgments}
The research of A.~Alexanderian and R.C.~Smith was partially supported by the
National Science Foundation through the grant DMS-1745654. 
The research of R.C.~Smith was supported in part by the Air Force Office 
of Scientific Research (AFOSR) through the grant AFOSR FA9550-15-1-0299.
M.L.~Yu gratefully
acknowledge the faculty startup support from the department of mechanical
engineering at the University of Maryland, Baltimore County (UMBC). 

\setlength{\abovedisplayskip}{2pt}
\setlength{\belowdisplayskip}{2pt}

\appendix 
\section{Proof of Proposition~\ref{prop:L2}}\label{sec:proof_L2}

We will need the following key lemma, which is 
is based on the arguments in~\cite{Sobol07}.
\begin{lemma}\label{lem:ptwise_var}
For every $s \in \X$, 
$\int_\Thetaz \err{f}{\nominal} \mu(d\nominal)=
    2 D_{U^c}^\text{tot}(f; s)$.
\end{lemma}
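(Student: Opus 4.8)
The plan is to expand the squared integrand defining $\err{f}{\nominal}$, exploit the product structure $\mu(d\theta) = \mu(d\theta_U)\,\mu(d\theta_{U^c})$ that comes from independence of the components, and then recognize the resulting terms as ANOVA quantities at the fixed point $s$. First I would write, using independence together with the fact that $\freduced(s,\theta_U) = f(s,\theta_U,\nominal)$ does not depend on the integration variable $\theta_{U^c}$,
\[
\err{f}{\nominal} = \int_{\Thetau}\int_{\Thetaz} \big(f(s,\theta_U,\theta_{U^c}) - f(s,\theta_U,\nominal)\big)^2\, \mu(d\theta_{U^c})\,\mu(d\theta_U).
\]
Integrating over the nominal value $\nominal\in\Thetaz$ then produces a triple integral in which $\theta_{U^c}$ and $\nominal$ both range over $\Thetaz$ against the same marginal law; the key observation is that these two variables play symmetric roles, which effectively ``doubles'' the fixed coordinates.

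Expanding the square yields three contributions. The two diagonal terms $\int f(s,\theta_U,\theta_{U^c})^2$ and $\int f(s,\theta_U,\nominal)^2$ each integrate over the unused copy of $\Thetaz$ (a probability measure, so trivially) to $\E{f(s,\cdot)^2}$, contributing $2\,\E{f(s,\cdot)^2}$ in total. The cross term factors: for fixed $\theta_U$ the $\theta_{U^c}$ and $\nominal$ integrals separate, and each produces the conditional expectation $\E{f(s,\cdot) | \theta_U}$, so the cross term equals $-2\,\EE{\theta_U}{\E{f(s,\cdot) | \theta_U}^2}$. Hence
\[
\int_\Thetaz \err{f}{\nominal}\,\mu(d\nominal) = 2\,\E{f(s,\cdot)^2} - 2\,\EE{\theta_U}{\E{f(s,\cdot) | \theta_U}^2}.
\]

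Finally I would translate the right-hand side into ANOVA notation. Writing $\E{f(s,\cdot) | \theta_U} = f_0(s) + f_1(s,\theta_U)$ and using $\EE{\theta_U}{f_1(s,\theta_U)} = 0$ gives $\EE{\theta_U}{\E{f(s,\cdot) | \theta_U}^2} = f_0(s)^2 + D_U(f;s)$, while $\E{f(s,\cdot)^2} = D(f;s) + f_0(s)^2$. Subtracting cancels $f_0(s)^2$ and leaves $2\big(D(f;s) - D_U(f;s)\big) = 2\,D_{U^c}^\text{tot}(f; s)$, since $D_{U^c}^\text{tot}(f; s) = D(f;s) - D_U(f;s)$, as follows from the total-variance decomposition.

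The main obstacle here is purely bookkeeping: correctly tracking which of the two copies of $\Thetaz$ each factor is integrated against, and confirming that the cross term genuinely factors into the \emph{square} of a single conditional expectation rather than a product of two distinct ones. Once the symmetric doubling of the nominal variable is set up and independence is invoked to separate the inner integrals, the remaining manipulations reduce to the standard identity relating $D_{U^c}^\text{tot}$ to $D - D_U$.
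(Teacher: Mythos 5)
Your proof is correct, and it takes a genuinely different route from the paper's. The paper substitutes the four-term ANOVA decomposition $f = f_0 + f_1 + f_2 + f_{12}$ into $\err{f}{\nominal}$ and uses the zero-mean properties of the ANOVA components to obtain a pointwise-in-$\nominal$ expression, $\err{f}{\nominal} = D_{U^c}(f;s) + D_{U,U^c}(f;s) + f_2^2(s,\nominal) + \int_{\Thetau} f_{12}^2(s,\theta_U,\nominal)\,\mu(d\theta_U)$, and only then integrates over $\nominal$, at which point the last two terms reproduce the first two and produce the factor $2$. You instead integrate over $\nominal$ first, observe that $\theta_{U^c}$ and $\nominal$ become two iid copies drawn against the same marginal, and expand the square to get $2\E{f(s,\cdot)^2} - 2\EE{\theta_U}{\E{f(s,\cdot) | \theta_U}^2}$; this is the classical Sobol' ``pick-freeze'' identity, and the conclusion follows from $\EE{\theta_U}{\E{f(s,\cdot) | \theta_U}^2} = f_0(s)^2 + D_U(f;s)$ together with $D_{U^c}^\text{tot}(f;s) = D(f;s) - D_U(f;s)$, which is available from the preliminaries (it is the identity $S_U^\text{tot} = 1 - S_{U^c}$ with $U$ and $U^c$ exchanged, read at the level of variances). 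What your route buys: it avoids all bookkeeping of the interaction component $f_{12}$ and its orthogonality relations, needing only independence (to factor the cross term into the square of a \emph{single} conditional expectation, as you correctly flag) plus the variance decomposition, and it makes the provenance of the factor $2$ transparent --- two exchangeable copies of the frozen coordinates. What the paper's route buys: the intermediate fixed-$\nominal$ formula exhibits exactly how the error depends on the particular nominal value, through $f_2^2(s,\nominal)$ and the $f_{12}$ term, information that your averaged computation discards. Two routine points worth stating explicitly in a final write-up: term-by-term integration after expanding the square is justified because $f \in L^2(\X \times \Theta)$ (Tonelli for the nonnegative diagonal terms, Cauchy--Schwarz for the cross term), and the identification $\E{f(s,\cdot) | \theta_U} = \int_{\Thetaz} f(s,\theta_U,\theta_{U^c})\,\mu(d\theta_{U^c})$ itself relies on the independence assumption.
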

\begin{proof}
Let $s \in \X$ be fixed. 
Consider the ANOVA decomposition of $f(s, \theta)$, as defined
in~\cref{equ:ANOVA}:
\[
f(s, \theta) = f_0(s) + 
f_1(s, \theta_U) + f_2(s,\theta_{U^c})+ f_{12}(s, \theta_U, \theta_{U^c}).
\]
By substituting this into the expression for $\err{f}{\nominal} $ and
simplifying we have
\begin{equation}\label{equ:error_exp}
\err{f}{\nominal} = 
\int_\Theta 
\big[
f_2(s, \theta_{{U^c}}) + 
   f_{12}(s, \theta_U, \theta_{U^c}) - 
   f_2(s, \nominal) - f_{12}(s, \theta_U, \nominal)
   \big]^2\mu\,(d\theta).
\end{equation}
Using the properties of ANOVA~\cite{Sobol07,Sobol:2001}, 
\[
   \int_{\Theta_{U^c}} f_2(s, \theta_{{U^c}}) \,\mu(d\theta_{U^c}) 
   = \int_{\Theta_{U^c}} f_{12}(s, \theta_U, \theta_{U^c})\,\mu(d\theta_{U^c}) 
   = \int_{\Theta_U} f_{12}(s, \theta_U, \theta_{U^c})\,\mu(d\theta_U) = 0,
\]
we can simplify~\cref{equ:error_exp} to get, for a fixed $\nominal \in \Thetaz$,
\[
\begin{aligned}
\err{f}{\nominal} &=
\int_\Theta \big[f_2^2(s, \theta_{{U^c}}) 
   + f_{12}^2(s, \theta_U, \theta_{U^c}) 
   + f_2^2(s, \nominal) + f_{12}^2(s, \theta_U, \nominal)\big]\,\mu(d\theta) \\
&= 
   D_{U^c}(f; s) + D_{U^c,U}(f;s) + f_2^2(s, \nominal) + \int_{\Theta_U} f_{12}^2(s,\theta_U,\nominal)\,\mu (d\theta_U).
\end{aligned}
\]
Integrating the above expression over $\Theta_{U^c}$ gives 
the desired result.
\end{proof}

\begin{named_proof}[Proof of Proposition~\ref{prop:L2}]
First note that the denominator is a constant and
\begin{equation}\label{equ:denom}
  \int_\X \int_\Theta f(s, \theta)^2\, \mu(d\theta) ds
= \int_\X \Bigg[ D(f; s) +
          \Big(\int_\Theta f(s, \theta)\mu(d\theta)\Big)^2\Bigg] \,ds
          \geq \int_\X  D(f; s) ds.
\end{equation}
Next, consider the expectation of the numerator in~\cref{equ:globalL2}:
\begin{multline}\label{equ:num}
\int_\Thetaz \int_\X \int_\Theta
(f(s, \theta)- \freduced(s, \theta_U))^2 \, \mu(d\theta) ds \mu(d\nominal)
=
\int_{\Theta_{U^c}} \int_\X \err{f}{\nominal}\, ds\mu(d\nominal)\\
=
\int_\X \int_{\Theta_{U^c}} \err{f}{\nominal}\, \mu(d\nominal)\,ds
= 2\int_\X D_{U^c}^\text{tot}(f; s),
\end{multline}
where changing the order of integration is justified by Tonell's theorem,
and the last equality follows from \cref{lem:ptwise_var}.
The desired result follows from~\cref{equ:num} and~\cref{equ:denom}.
\end{named_proof}

\section{Proof of Propositions~\ref{prp:BD} and~\ref{prp:BD_functional}}\label{sec:proof_BD}

We recall the following result:
if a random variable
$X$ satisfies $a \leq X \leq b$  and $\E{X} = m$,
then
\begin{equation}\label{equ:BD}
\var{X} \leq (b - m)(m - a) \leq  (b - a)^2/4.
\end{equation}
The first inequality is known as
the Bhatia--Davis inequality~\cite{BhatiaDavis00}.
The second inequality gives a corollary of the Bhatia-Davis inequality,
known as Popoviciu's inequality, that says
$\var{X} \leq (b - a)^2/4$, for
a random variable satisfying $a \leq X \leq b$.

\begin{named_proof}[Proof of Proposition~\ref{prp:BD}]
Note that clearly $a_j \leq \nu_j(g) \leq b_j$, for $j = 1, \ldots, \Np$.
Applying the inequality~\cref{equ:BD} with
$X = \big(\frac{\partial g}{\partial \theta_j}\big)^2$ and
$(b, m, a) = (b_j, \nu_j(g), a_j)$, $j = 1,\ldots, \Np$, we obtain
$\mathrm{Var}\big\{ \big(\frac{\partial g}{\partial \theta_j}\big)^2\big\} \leq
     \big(b_j - \nu_j(g)\big)\big(\nu_j(g) - a_j\big)
     \leq \frac{1}{4} (b_j - a_j)^2$.
Therefore,
for $j = 1, \ldots, \Np$,
\[
   \var{\nu_j^{(\Ns)}(g)} = \frac1\Ns \var{ \Big(\frac{\partial g}{\partial \theta_j}\Big)^2}
                        \leq \frac1\Ns \big(b_j - \nu_j(g)\big)\big(\nu_j(g) - a_j\big)
                        \leq \frac{1}{4\Ns} (b_j - a_j)^2.
\]
\end{named_proof}

\begin{named_proof}[Proof of Proposition~\ref{prp:BD_functional}]
First note that $\N_j^{(\Ns)}(f; \X)$ is indeed an estimator for $\N_j(f; \X)$. This is seen by
noting that, using Tonelli's theorem,
\[
\N_j(f; \X) = \int_\X \nu_j(f; s) \, ds = \int_\X \int_\Theta
\Big(\frac{\partial f}{\partial\theta_j}(s, \theta)\Big)^2 \mu(d\theta) ds
 = \int_\Theta \int_\X \Big(\frac{\partial f}{\partial\theta_j}(s, \theta)\Big)^2
ds \, \mu(d\theta).
\]
Then, applying Popoviciu's inequality to the random variable
$G_j(\theta) = \int_\X \Big(\frac{\partial f}{\partial\theta_j} (s, \theta)\Big)^2\, ds$,
which satisfies  $\|a_j\|_{L^1(\X)} \leq G_j \leq \| b_j \|_{L^1(\X)}$, gives:
\[
\var{ G_j } 
\leq 
\frac{1}{4} \left(\| b_j\|_{L^1(\X)} - \|a_j \|_{L^1(\X)}\right)^2
\leq 
\frac{1}{4} \| b_j - a_j\|_{L^1(\X)}^2, \quad j = 1, \ldots, \Np,
\]
where we also used the reverse triangle inequality. This completes the proof.
\end{named_proof}

\bibliographystyle{siamplain}
\bibliography{refs}
\end{document}